\newtheorem{theorem}{Theorem}
\newtheorem{definition}{Definition}
\newtheorem{corollary}{Corollary}
\newtheorem{proposition}{Proposition}
\newtheorem{lemma}{Lemma}
\newtheorem{example}{Example}
\newtheorem{remark}{Remark}
\date{ }
\begin{document}

\title{\textbf{Quasi-recursive MDS Matrices over Galois Rings}}

\author{
\begin{tabular}{@{}c@{}}
\textbf{Shakir Ali}$^{1,2}$\orcidlink{0000-0001-5162-7522}, 
\textbf{Atif Ahmad Khan}$^{1}$ \orcidlink{0009-0009-2371-5707}, 
\textbf{Abhishek Kesarwani}$^{3}$\orcidlink{
	0000-0001-8488-6811}, 
\textbf{Susanta Samanta}$^{4}$ \orcidlink{0000-0003-4643-5117} \\
\small
$^{1}$Department of Mathematics, Faculty of Science \\ Aligarh Muslim University, Aligarh 202002, India\\
$^{2}$Faculty of Mathematics and Natural Sciences, Universitas Gadjah Mada, Yogyakarta 55281, Indonesia \\
$^{3}$Department of Mathematics \& Computing, Indian Institute of Information Technology Vadodara, International Campus Diu, Daman \& Diu 362520, India \\
$^{4}$Department of Electrical and Computer Engineering, University of Waterloo\\ Waterloo, Ontario N2L 3G1, Canada \\
\texttt{shakir.ali.mm@amu.ac.in, atifkhanalig1997@gmail.com, abhishek\_kesarwani@iiitvadodara.ac.in, ssamanta@uwaterloo.ca}
\end{tabular}
}

\maketitle

\begin{abstract}


Let $p$ be a prime and $s,m,n$ be positive integers. This paper studies quasi-recursive MDS matrices over Galois rings $GR(p^{s}, p^{sm})$ and proposes various direct construction methods for such matrices. The construction is based on skew polynomial rings $GR(p^{s}, p^{sm})[X;\sigma]$, whose rich factorization properties and enlarged class of polynomials are used to define companion matrices generating quasi-recursive MDS matrices. First, two criteria are established for characterizing polynomials that yield recursive MDS matrices, generalizing existing results, and then an additional criterion is derived in terms of the right roots of the associated Wedderburn polynomial. Using these criteria, methods are developed to construct skew polynomials that give rise to quasi-recursive MDS matrices over Galois rings. This framework extends known constructions to the non-commutative setting and significantly enlarges the family of available matrices, with potential applications to efficient diffusion layers in cryptographic primitives. The results are particularly relevant for practical implementations when $s = 1$ and $p = 2$, i.e., over the finite field $\mathbb{F}_{2^m}$, which is of central interest in real-world cryptographic applications.

\end{abstract}

\noindent\textit{Keywords: }{Galois ring, skew polynomial ring, $\sigma$-cyclic code, quasi-recursive, MDS matrix}   \\
\textit{2020 Mathematics Subject Classification: }{11T71, 94A60, 94B15}

\section{Introduction}
The notions of confusion and diffusion, originally introduced by Shannon~\cite{1949ShannonCommunication}, form the cornerstone of modern symmetric-key cryptographic design. In such designs, the round function typically integrates a non-linear layer to ensure confusion and a linear layer to achieve diffusion. This work focuses on the construction of linear diffusion layers that maximize the propagation of internal dependencies. Perfect diffusion can be characterized in several equivalent ways, one approach employs multipermutations~\cite{Vaudenay1995on}, while another defines it through Maximum Distance Separable (MDS) matrices~\cite{Daemen2002the}.

Because MDS matrices achieve the optimal branch number, they have become a central component in the diffusion layers of many block ciphers and hash functions. Over the years, numerous methods have been proposed for constructing MDS matrices, which can broadly be classified into two categories: non-recursive and recursive constructions. In non-recursive constructions, the resulting matrices are directly MDS. In contrast, recursive constructions begin with a sparse matrix $A$ of order $n$, carefully chosen such that $A^n$ becomes an MDS matrix. In this setting, we refer to $A$ as a recursive MDS matrix.

Recursive MDS matrices offer notable implementation advantages, especially in lightweight cryptographic settings. Their recursive structure allows the diffusion layer to be realized efficiently by repeatedly applying the same sparse matrix, thereby reducing hardware complexity and requiring only a few clock cycles. This property has been effectively utilized in practical designs such as the \textsc{PHOTON}~\cite{Guo2011the} family of hash functions and the \textsc{LED}~\cite{Guo2011TheLED} block cipher, where companion matrix based recursive MDS constructions enable efficient implementations through simple LFSR. several special structures for recursive MDS matrices have been introduced, including GFS~\cite{Recursive_Diffusion_Layer12}, DSI~\cite{DSI}, and DLS~\cite{Rec_MDS_2022}.

In the existing literature, extensive efforts have been devoted to the direct algebraic construction of MDS matrices in both recursive and non-recursive settings. The non-recursive constructions are predominantly based on classical Cauchy and Vandermonde structures~\cite{chand2013constructions, gupta2019cryptographically,lacan2004systematic,sajadieh2012construction}. In contrast, recursive constructions are typically derived through coding theoretic techniques. For instance, in 2014, Augot and Finiasz~\cite{augot2014direct} proposed a method for constructing recursive MDS matrices based on shortened BCH codes. In their approach, a polynomial \( g(X) \) is first computed whose roots are consecutive powers of some element \( \beta \) in an extension field of \( \mathbb{F}_q \). If \( g(X) \in \mathbb{F}_q[X] \), then the BCH code generated by \( g(X) \) becomes MDS. The final step is to verify whether the computed polynomial indeed belongs to \( \mathbb{F}_q[X] \); if so, the polynomial \( g(X) \) yields a recursive MDS matrix. Similarly, Berger~\cite{berger2013construction} employed Gabidulin codes to design recursive MDS matrices. Subsequently, in a series of works~\cite{Guptadirect, gupta2017towards,gupta2019almost}, several efficient methods were proposed for constructing recursive MDS matrices using companion matrices defined over finite fields, and it was shown that an infinite class of recursive MDS matrices can be obtained through this approach.
Kesarwani et al.~\cite{Kesarwani2021} further studied recursive MDS matrices over finite commutative rings, but the literature on recursive MDS matrices over finite rings remains relatively limited.

To address this, we extend the recursive construction to the framework of skew polynomial ring over a Galois ring. Related work in this direction over the finite field $\mathbb{F}_{2^{2m}}$ has been carried out by Cauchois et al.~\cite{Cauchois2016direct}. Skew polynomial rings constitute a significant class of non-commutative rings and have been utilized in the construction of algebraic codes~\cite{Boulagouaz2013codes,Bhaintwal2012Skew}, where codes are defined as ideals within quotient rings or modules over skew polynomial rings. The main motivation for exploring MDS matrices in this framework lies in the rich factorization properties of polynomials in skew polynomial rings, which lead to a greater number of ideals compared to the commutative case.


In this paper, we describe the direct constructions of quasi-recursive MDS matrices over Galois rings. Let \(C_g\) be the companion matrix associated with a monic skew polynomial \(g(X) \in GR(p^s, p^{sm})[X;\sigma]\) of degree \(k\). A quasi-recursive MDS matrix is an MDS matrix that can be expressed as the product of companion matrices, namely \(C_g^{[k-1]} \cdot C_g^{[k-1]} \cdots C_g^{[1]} \cdot C_g\). Here, \(C_g^{[i]}\) denotes the companion matrix of the skew polynomial \(g(X)\) with each entry transformed by the automorphism \(\sigma^i\). We first establish two criteria for characterizing polynomials that generate quasi-recursive MDS matrices and show that, under suitable conditions, the resulting matrix is involutory. We then formulate an additional criterion based on the right roots of the associated Wedderburn polynomial and, using this, propose several methods to construct skew polynomials that yield quasi-recursive MDS matrices.

 
Our approach is more general, as it operates over the skew polynomial ring $GR(p^s, p^{sm})[X;\sigma]$, yielding a larger class of polynomials and corresponding right roots that give rise to quasi-recursive MDS matrices, as established in Lemma 3. This generalization arises from the flexibility in the choice of the ring and the automorphism. In particular, if we take $s = 1$ and $\sigma$ as the identity map, our construction reduces to the classical work due to Gupta et al. \cite[Theorem 1]{Guptadirect} and \cite[Theorem 1]{gupta2017towards}, respectively. However, when $\sigma$ is chosen as a non-identity automorphism, the calculations are performed in a non-commutative setting, providing a broader framework that encompasses new classes of quasi-recursive MDS matrices. 

This study is intended to contribute to the theoretical development of quasi-recursive MDS matrices over Galois rings, while also retaining practical relevance. In the special case when $s = 1$ and $p = 2$, the ring $GR(p^s, p^{sm})[X; \sigma]$ reduces to $\mathbb{F}_{2^{m}}[X; \sigma]$, which has notable practical relevance. This is highlighted by the work of Cauchois et al.~\cite{Cauchois2016direct}, who proposed a hardware architecture for quasi-recursive matrices and performed the computations using a generalized form of the LFSR, known as the Skew Linear Feedback Shift Register (SLFSR). Consequently, our study not only extends the theoretical developments of such constructions but also holds potential for efficient practical implementations.

\vspace{1em}

\noindent This paper is organized as follows. Section~\ref{Sec:Pre} introduces key definitions and results on linear codes and MDS codes. Section~\ref{Sec:skew-poly-pre} reviews skew polynomial rings, and Section~\ref{Sec:Skew-cyclic-code} examines skew cyclic codes and their role in constructing quasi-recursive MDS matrices. Section~\ref{Sec:polynomial-charac} presents the key theoretical results on quasi-recursive MDS matrices, which are then utilized in Section~\ref{Sec:MDS-simple-roots} to derive several direct constructions of quasi-recursive MDS matrices over Galois rings. Finally, Section~\ref{Sec:Conclusion} concludes the paper.

\section{Preliminaries}\label{Sec:Pre}
In this section, we present some basic definitions and preliminary results that will be useful in establishing our subsequent findings. Let $\mathcal{R}$ be a finite commutative ring with unity. The set of all vectors of length $n$ with entries from $\mathcal{R}$ is denoted by $\mathcal{R}^n$. We denote  $\mathcal{U}(\mathcal{R})$ and $\mathcal{N}(\mathcal{R})$ the set of all units  of $\mathcal{R}$, and set of all nilpotent elements of $\mathcal{R}$, respectively. Furthermore, $\mathcal{M}_k(\mathcal{R})$ represents the set of all $k \times k$ matrices over $\mathcal{R}$, and $I_k$ denotes the identity matrix of order $k$. We begin this section with the definition of a linear code.

  \begin{definition}
  
A linear code $\mathcal{C}$ over $\mathcal{R}$ of length $n$ is a submodule of $\mathcal{R}^n$.

\end{definition}

Let $\textbf{u} = (u_0,u_1,\ldots,u_{n-1})$ and $\textbf{v} = (v_0,v_1,\ldots,v_{n-1})$ be two codewords in $\mathcal{C}$. The Hamming distance between $u$ and $v$
is defined as
\[
d_H(\textbf{u},\textbf{v}) = |\{i : u_i \neq v_i\}|,
\]
and the minimum Hamming distance $d(\mathcal{C})$ of a code $\mathcal{C}$ is defined as
\[
d(\mathcal{C}) = \min\{d_H(\textbf{u},\textbf{v})\mid \textbf{u},\textbf{v} \in \mathcal{C},\ \textbf{u} \neq \textbf{v}\}.
\]
The notation $[n,k,d]$ is commonly used to denote a linear code $\mathcal{C}$ over a finite field, where $n$, $k$, and $d$ represent the length, dimension, and minimum distance of the code, respectively. 
For codes over rings, this notation is not always meaningful, since the dimension may not be defined for all codes. 
However, when $\mathcal{R}^n$ is a free module over $\mathcal{R}$, the notation $[n,k,d]$ is still used.

In coding theory, it is desirable to construct linear codes that achieve large values of the information rate $k/n$ and the relative minimum distance $d/n$. However, there exists an inherent trade-off among the parameters $n$, $k$, and $d$. In particular, the classical Singleton bound provides a  upper limit on the minimum distance attainable by a code.

\begin{theorem}\cite[Theorem 1.1]{MacWilliams1977the}
Let $\mathcal{C}$ be a linear code over $\mathcal{R}$ with minimum Hamming distance $d$. Then,
\[
d \leq n-k+1.
\]
\end{theorem}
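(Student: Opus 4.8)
The plan is to adapt the classical puncturing proof of the Singleton bound to the ring setting; the only point requiring care is the meaning of $k$ over $\mathcal{R}$. Since $\mathcal{C}$ is a submodule of $\mathcal{R}^n$, it is closed under subtraction, so for distinct codewords $\textbf{u},\textbf{v}$ we have $d_H(\textbf{u},\textbf{v}) = d_H(\textbf{u}-\textbf{v},\textbf{0})$ with $\textbf{u}-\textbf{v}$ a nonzero codeword; writing $\mathrm{wt}(\textbf{c}) = d_H(\textbf{c},\textbf{0}) = |\{i : c_i \neq 0\}|$ for the Hamming weight, this gives the reformulation $d = \min\{\mathrm{wt}(\textbf{c}) : \textbf{c}\in\mathcal{C},\ \textbf{c}\neq\textbf{0}\}$. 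I would argue in the regime where $\mathcal{R}^n$ is free and $\mathcal{C}$ is free of rank $k$ (equivalently, $\mathcal{C}$ admits a $k\times n$ generator matrix whose rows are linearly independent over $\mathcal{R}$), so that $|\mathcal{C}| = |\mathcal{R}|^k$ — this is exactly what makes the symbol $k$ meaningful, as noted just before the statement.

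First I would fix any set $T$ of $d-1$ coordinate positions and consider the coordinate-deletion map $\pi_T : \mathcal{R}^n \to \mathcal{R}^{\,n-d+1}$ that discards the positions in $T$, restricted to $\mathcal{C}$. This is an $\mathcal{R}$-module homomorphism, and the key step is to show it is injective on $\mathcal{C}$: if $\pi_T(\textbf{c}) = \textbf{0}$ for some $\textbf{c}\in\mathcal{C}$, then $\textbf{c}$ is supported inside $T$, so $\mathrm{wt}(\textbf{c}) \le |T| = d-1 < d$, and by the weight characterization of $d$ above this forces $\textbf{c}=\textbf{0}$. Hence $\pi_T|_{\mathcal{C}}$ embeds $\mathcal{C}$ into $\mathcal{R}^{\,n-d+1}$.

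Comparing cardinalities then gives $|\mathcal{R}|^k = |\mathcal{C}| = |\pi_T(\mathcal{C})| \le |\mathcal{R}^{\,n-d+1}| = |\mathcal{R}|^{\,n-d+1}$. Since $|\mathcal{R}| \ge 2$, the function $x \mapsto |\mathcal{R}|^x$ is strictly increasing, so $k \le n-d+1$, i.e.\ $d \le n-k+1$, which is the desired bound.

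I do not expect a genuine obstacle; the only subtlety is bookkeeping about freeness. The inequality is extracted entirely from the cardinality estimate $|\mathcal{C}| \le |\mathcal{R}|^{\,n-d+1}$, so one must know $|\mathcal{C}| = |\mathcal{R}|^k$, which is guaranteed precisely by the assumption that $\mathcal{C}$ is free of rank $k$; if $k$ were instead only known to satisfy $|\mathcal{C}| \ge |\mathcal{R}|^k$, the same computation still delivers the bound. Everything else — the reformulation of $d$ as a minimum weight and the fact that deleting $d-1$ coordinates cannot annihilate a nonzero codeword — uses only that $\mathcal{C}$ is a submodule, and is therefore insensitive to whether $\mathcal{R}$ is a field or merely a finite commutative ring with unity.
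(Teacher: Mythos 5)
The paper offers no proof of this statement at all --- it is quoted as a known result (the Singleton bound) with a citation to MacWilliams--Sloane --- so there is no internal argument to compare yours against. Your puncturing argument is the standard proof and it is correct: the reformulation of $d$ as the minimum nonzero weight uses only that $\mathcal{C}$ is a submodule, the deletion map on any $d-1$ coordinates is injective on $\mathcal{C}$ for exactly the reason you give, and the cardinality count $|\mathcal{R}|^{k}=|\mathcal{C}|\leq|\mathcal{R}|^{\,n-d+1}$ yields $d\leq n-k+1$. You are also right to isolate the only delicate point, namely that $k$ is meaningful and $|\mathcal{C}|=|\mathcal{R}|^{k}$ only under the freeness convention; this matches the paper's own remark that the $[n,k,d]$ notation is used precisely when the code is a free $\mathcal{R}$-module, so your hypothesis is the one the paper intends.
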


\begin{definition}
A linear code $\mathcal{C}$ over $\mathcal{R}$ with parameter $[n,k,d]$ is said to be MDS (Maximum Distance Separable) if $d = n-k+1$.
\end{definition}

\begin{definition}
For a set $W = \{\textbf{w}_1,\textbf{w}_2,\ldots,\textbf{w}_k\}$ of elements of a free $\mathcal{R}$-module $\mathcal{R}^n$, if a relation
\[
\sum_{i=1}^{k} a_i \textbf{w}_i = 0,
\]
with $a_i \in \mathcal{R}$ and $\textbf{w}_i \in \mathcal{R}^n$ implies that all the coefficients $a_i$ are $0$, we say that the elements of $W$ are linearly independent
(over $\mathcal{R}$); otherwise they are linearly dependent.
\end{definition}
In 1997, Dong et al. \cite{Dong} provided a matrix characterization of MDS codes over modules. Specifically, let $\mathcal{C}$  be a $[n,k,d]$ linear code over $\mathcal{R}$ with parity check matrix $\mathcal{H}$ of the form $\mathcal{H} = (B_{n-k\times k} \mid I_{n-k})$. Then, $\mathcal{C}$ is an MDS code if and only if the determinant of every $t \times t$ submatrix of $B$, where $t \in \{1,2,\dots,\min\{k,n-k\}\}$, belongs to $\mathcal{U}(\mathcal{R})$. In otherwords, a square matrix $M$ of order $k$ over $\mathcal{R}$ is an MDS matrix if and only if every square 
submatrices of $M$ has determinant in $\mathcal{U}(\mathcal{R})$.

\begin{lemma}\cite{MacWilliams1977the}\label{lem3}
Let $\mathcal{C}$ be a $[n,k,d]$ linear code over $\mathcal{R}$. Then, the following are equivalent: 
\begin{itemize}
    \item[\textnormal{(i)}] $\mathcal{C}$ is an MDS code.
    \item[\textnormal{(ii)}] Every set of $k$ columns of a generator matrix $\mathcal{G}$ is linearly independent.
    \item[\textnormal{(iii)}] Every set of $n-k$ columns of a parity-check matrix $\mathcal{H}$ is linearly independent.
\end{itemize}
\end{lemma}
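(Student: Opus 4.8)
The plan is to obtain all three equivalences from the link between the minimum distance of $\mathcal{C}$ and linear dependencies among the columns of its defining matrices, with the Singleton bound (Theorem~1) supplying the extremal case. Fix a $k \times n$ generator matrix $\mathcal{G}$ of $\mathcal{C}$, whose rows form an $\mathcal{R}$-basis of $\mathcal{C}$, and an $(n-k) \times n$ parity-check matrix $\mathcal{H}$, so that $\mathcal{C} = \{\mathbf{c} \in \mathcal{R}^{n} : \mathcal{H}\mathbf{c}^{\top} = 0\}$; for a set $S$ of coordinates let $\mathcal{G}_{S}$ and $\mathcal{H}_{S}$ be the corresponding column submatrices, and let $\mathbf{h}_{i}$ denote the $i$-th column of $\mathcal{H}$. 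I will use two elementary properties of the free module $\mathcal{R}^{n}$ repeatedly: any subset of a linearly independent family is again linearly independent (extend a dependence on the subset by zero coefficients), and $\mathbf{b}\mathcal{G} = 0$ for $\mathbf{b} \in \mathcal{R}^{k}$ forces $\mathbf{b} = 0$, since the rows of $\mathcal{G}$ are linearly independent.

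For (i) $\Leftrightarrow$ (iii), the main step is that, for every $w$, the code $\mathcal{C}$ contains a nonzero codeword of weight at most $w$ if and only if some set of at most $w$ columns of $\mathcal{H}$ is linearly dependent. Indeed, a nonzero codeword $\mathbf{c}$ of weight $w' \le w$ makes the columns $\mathbf{h}_{i}$ with $i$ in the support of $\mathbf{c}$ linearly dependent, via $\mathcal{H}\mathbf{c}^{\top} = 0$; conversely a nontrivial relation $\sum_{i \in S} a_{i}\mathbf{h}_{i} = 0$ with $|S| \le w$ yields the nonzero codeword whose entries equal $a_{i}$ on $S$ and $0$ off $S$. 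Taking $w = n-k$ and using that subsets of independent families are independent, the statement ``every set of $n-k$ columns of $\mathcal{H}$ is linearly independent'' is equivalent to ``$\mathcal{C}$ has no nonzero codeword of weight $\le n-k$'', i.e.\ to $d \ge n-k+1$; with the Singleton bound $d \le n-k+1$ this is precisely $d = n-k+1$. Hence (i) $\Leftrightarrow$ (iii).

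For (i) $\Leftrightarrow$ (ii) I would run the analogous argument on $\mathcal{G}$. For a set $S$ of exactly $k$ coordinates, $\mathcal{G}_{S}$ is a $k \times k$ matrix, and $\mathcal{C}$ contains a nonzero codeword vanishing on $S$ if and only if $\mathbf{b}\mathcal{G}_{S} = 0$ for some $\mathbf{b} \ne 0$ --- the codeword being $\mathbf{b}\mathcal{G}$, which is nonzero by the second property above --- that is, if and only if the rows of $\mathcal{G}_{S}$ are linearly dependent. Since the complement of the support of a codeword of weight $\le n-k$ contains a $k$-subset, and, conversely, a codeword vanishing on a $k$-subset has weight $\le n-k$, we obtain: $\mathcal{C}$ has a nonzero codeword of weight $\le n-k$ if and only if the rows of $\mathcal{G}_{S}$ are linearly dependent for some $k$-subset $S$. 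Over a commutative ring, a square matrix has linearly dependent rows if and only if it has linearly dependent columns (a standard consequence of McCoy's theorem, both being equivalent to the determinant being a zero-divisor), so this is the same as saying that some $k$ columns of $\mathcal{G}$ are linearly dependent, i.e.\ that (ii) fails. Therefore (ii) is equivalent to $d \ge n-k+1$, hence, with the Singleton bound, to (i).

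I expect the ring-theoretic bookkeeping to be the only genuine obstacle: over a field one replaces ``linearly dependent columns'' by ``vanishing $k \times k$ (resp.\ $(n-k) \times (n-k)$) minor'' and reasons with ranks, but over $\mathcal{R}$ dependent columns of a square matrix need not force the determinant to vanish. The remedy is to keep everything phrased in terms of $\mathcal{R}$-linear (in)dependence and of zero-divisors versus units, invoking McCoy's theorem only for the single row-to-column transfer above and otherwise relying on the two elementary facts stated at the outset. Alternatively, (i) $\Leftrightarrow$ (ii) can be deduced from (i) $\Leftrightarrow$ (iii) by duality --- $\mathcal{H}$ generates the free $[n,n-k]$ code $\mathcal{C}^{\perp}$, which has $\mathcal{G}$ as a parity-check matrix, and by the Dong et al.\ matrix characterization recalled above $\mathcal{C}$ is MDS if and only if $\mathcal{C}^{\perp}$ is --- but the direct argument has the advantage of not requiring $\mathcal{G}$ to be put in systematic form.
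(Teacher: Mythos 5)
Your proof is correct. Note that the paper offers no proof of this lemma at all --- it is quoted from MacWilliams and Sloane --- so there is no internal argument to compare against; what you give is essentially the classical textbook route (nonzero codewords of weight at most $w$ correspond exactly to nontrivial dependencies among at most $w$ columns of $\mathcal{H}$, combined with the Singleton bound to pin down the extremal case), transplanted to a finite commutative ring. The one genuinely ring-specific ingredient, the transfer from dependent rows to dependent columns of a square submatrix $\mathcal{G}_{S}$ via McCoy's theorem (nontrivial dependence iff the determinant is a zero divisor, which over a finite commutative ring is the same as being a non-unit), is exactly the right patch, and it is where a naive field-style rank argument would break; it also meshes with the Dong et al.\ determinant criterion the paper recalls just before the lemma. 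Two assumptions you use implicitly and should state explicitly: that $d$ equals the minimum nonzero weight (immediate from linearity of $\mathcal{C}$), and that $\mathcal{C}$ is free of rank $k$ with $\mathcal{C}=\{\mathbf{c}\in\mathcal{R}^{n}:\mathcal{H}\mathbf{c}^{\top}=0\}$, so that the rows of $\mathcal{G}$ are linearly independent and $\mathbf{b}\mathcal{G}=0$ forces $\mathbf{b}=0$; both are consistent with the paper's conventions for $[n,k,d]$ codes over $\mathcal{R}$, so your argument stands as written.
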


As a direct consequence, we have the following result:

\begin{corollary}\cite[Corollary, p.319]{MacWilliams1977the}\label{Fact1}
Let $M$ be a $k \times k$ matrix over $\mathcal{R}$. Then, $M$ is MDS if and only if any $k$ columns 
of 
\(
\mathcal{G} = [M \mid I_k]
\)
are linearly independent over $\mathcal{R}$. Equivalently, $M$ is MDS if and only if any $k$ rows of
\(
\bar{\mathcal{G}} =\Big[ \frac{I_k}{M}\Big]
\)
are linearly independent over $\mathcal{R}$. Moreover, if $M$ is MDS, then so is $-M$. 
\end{corollary}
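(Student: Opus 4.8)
The plan is to deduce the corollary from the two facts recalled just before it, namely Lemma~\ref{lem3} and the matrix characterisation of MDS codes of Dong et al. Put $\mathcal{G}=[M\mid I_k]$ and let $\mathcal{C}$ be the linear code over $\mathcal{R}$ of length $2k$ generated by the rows of $\mathcal{G}$. Since the right-hand block of $\mathcal{G}$ is $I_k$, any $\mathcal{R}$-linear relation among the rows of $\mathcal{G}$ forces all coefficients to vanish; hence the rows are linearly independent, $\mathcal{C}$ is free of rank $k$, and $\mathcal{G}$ is a genuine generator matrix, so $\mathcal{C}$ is an $[2k,k,d]$ code and the MDS terminology applies. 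By Lemma~\ref{lem3}(ii), $\mathcal{C}$ is MDS if and only if every set of $k$ columns of $\mathcal{G}$ is linearly independent over $\mathcal{R}$, which is exactly the condition appearing on the right-hand side of the first claimed equivalence. It therefore suffices to prove that $\mathcal{C}$ is an MDS code if and only if $M$ is an MDS matrix.

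For this I would match the $k\times k$ minors of $\mathcal{G}$ with the square submatrices of $M$. A choice of $k$ columns of $\mathcal{G}$ consists of some columns indexed by a set $S\subseteq\{1,\dots,k\}$ in the $M$-block, say $|S|=j$, together with columns indexed by a set $T$ in the $I_k$-block with $|T|=k-j$; Laplace expansion of the determinant of the resulting $k\times k$ matrix along the standard-basis columns indexed by $T$ gives $\pm$ the determinant of the $j\times j$ submatrix of $M$ with row set $\{1,\dots,k\}\setminus T$ and column set $S$. Conversely, every $j\times j$ submatrix of $M$, for every $j\in\{0,1,\dots,k\}$ (with $j=0$ corresponding to the empty minor $1$), arises from exactly one such choice. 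Now over the finite commutative ring $\mathcal{R}$, a $k$-tuple of vectors in $\mathcal{R}^k$ is linearly independent precisely when the determinant of the associated $k\times k$ matrix is a non-zero-divisor, and every non-zero-divisor of a finite ring is a unit. Combining these observations, every set of $k$ columns of $\mathcal{G}$ is linearly independent if and only if every square submatrix of $M$ has determinant in $\mathcal{U}(\mathcal{R})$, which by the Dong et al. criterion is precisely the assertion that $M$ is an MDS matrix. This proves the first equivalence.

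The second ``row'' version follows by transposition. The rows of $\bar{\mathcal{G}}$ are the columns of $[I_k\mid M^{\mathsf{T}}]$, and $[I_k\mid M^{\mathsf{T}}]$ differs from $[M^{\mathsf{T}}\mid I_k]$ only by a permutation of columns, which does not change whether every $k$-subset of its columns is linearly independent. Applying the first equivalence to $M^{\mathsf{T}}$ shows that the $k$ rows of $\bar{\mathcal{G}}$ are linearly independent if and only if $M^{\mathsf{T}}$ is MDS; and $M^{\mathsf{T}}$ is MDS if and only if $M$ is, because a square submatrix of $M^{\mathsf{T}}$ is the transpose of a square submatrix of $M$ and transposition preserves the determinant. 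Finally, for the last assertion, every $j\times j$ submatrix of $-M$ is $-N$ for the corresponding $j\times j$ submatrix $N$ of $M$, and $\det(-N)=(-1)^j\det N\in\mathcal{U}(\mathcal{R})$ whenever $\det N\in\mathcal{U}(\mathcal{R})$; hence $-M$ is an MDS matrix as soon as $M$ is.

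The step that needs the most care is the Laplace-expansion bookkeeping: verifying that the $k\times k$ minors of $[M\mid I_k]$ are, up to sign, in bijection with the square submatrices of $M$ of all orders $0,1,\dots,k$, keeping track of the row and column index sets and the trivial $j=0$ case, together with the small remark that finiteness of $\mathcal{R}$ upgrades ``linearly independent'' to ``unit determinant''. Everything else is formal.
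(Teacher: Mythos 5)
Your proof is correct and takes exactly the route the paper intends: the corollary is stated there without proof, as a "direct consequence" of Lemma~\ref{lem3} and the Dong et al.\ submatrix criterion (with a citation to MacWilliams--Sloane), and your Laplace-expansion bijection between the $k\times k$ minors of $[M\mid I_k]$ and the square submatrices of $M$ of all orders, combined with McCoy's criterion and the finiteness of $\mathcal{R}$ to upgrade non-zero-divisor determinants to units, simply fills in the details that the citation leaves implicit (the transposition argument for $\bar{\mathcal{G}}$ and the sign argument for $-M$ are likewise fine). No gaps.
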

The rest of this section, detailed in the next subsection, introduces skew polynomial rings over Galois rings and discusses their fundamental properties.

\subsection{Skew polynomials ring over Galois ring}\label{Sec:skew-poly-pre}

Let $p$ be a prime number and  $s, m$ be positive integers. Consider a monic basic primitive polynomial $f(Y) \in \mathbb{Z}_{p^s}[Y]$ of degree $m$. The Galois ring of characteristic $p^s$ with $p^{sm}$ elements, denoted by $GR(p^s, p^{sm})$, is defined as the quotient ring
\[
GR(p^s, p^{sm}) = \mathbb{Z}_{p^s}[Y]/\langle f(Y) \rangle,
\]
where $\langle f(Y) \rangle$ denotes the ideal generated by $f(Y)$. This is a local ring with maximal ideal $\langle p \rangle$, and its residue field is
\[
GR(p^s, p^{sm}) / \langle p \rangle \cong \mathbb{F}_{p^m}.
\]
The natural projection $GR(p^s, p^{sm}) \to \mathbb{F}_{p^m}$ is denoted by $a \mapsto \bar{a}$, and extends naturally to a homomorphism from $GR(p^s, p^{sm})[X]$ to $\mathbb{F}_{p^m}[X]$. Let $\zeta$ be a primitive root of $f(X)$ in $GR(p^s, p^{sm})$. The associated Teichm$\ddot{u}$ller set is given by
\[
\mathcal{T}_m = \{0, 1, \zeta, \dots, \zeta^{p^m-2}\}.
\]
Each element $c \in GR(p^s, p^{sm})$ admits a unique representation of the form
\[
c = c_0 + c_1 \zeta + \cdots + c_{m-1} \zeta^{m-1}, \qquad c_i \in \mathbb{Z}_{p^s}.
\]

Let $\sigma : GR(p^s, p^{sm}) \longrightarrow GR(p^s, p^{sm})$ be an automorphism. The skew polynomial ring over $GR(p^s, p^{sm})$ with respect to $\sigma$, denoted by $GR(p^s, p^{sm})[X;\sigma]$, is the set
\[
GR(p^s, p^{sm})[X;\sigma] = \left\{ a_0 + a_1X + \cdots + a_{n-1}X^{\,n-1} : a_i \in GR(p^s, p^{sm}), \; n \in \mathbb{N} \right\},
\]
equipped with the usual polynomial addition and a multiplication governed by the rule
\[
aX^i\ast bX^j = a\sigma^i(b)X^{i+j},
\quad \text{for all } a,b \in GR(p^s, p^{sm}).
\]

\noindent The center of the skew polynomial ring $GR(p^s, p^{sm})[X;\sigma]$, denoted by 
$Z\big(GR(p^s, p^{sm})[X;\sigma]\big)$, is defined as
\[
Z\big(GR(p^s, p^{sm})[X;\sigma]\big)
= \left\{\, f(X) \in GR(p^s, p^{sm})[X;\sigma]  :~ f(X)\ast g(X) = g(X) \ast f(X) \text{ for all } g(X) \in GR(p^s, p^{sm})[X;\sigma] \,\right\}.
\]

Following \cite{BoucherSkew}, we collect some properties of the skew polynomial ring $GR(p^s,p^{sm})[X;\sigma]$:

\begin{theorem}
Let $\sigma$ be an automorphism of $GR(p^s, p^{sm})$, and let $t$ denote its order. Then,
\begin{itemize}
    \item[\textnormal{(i)}] The ring $GR(p^s, p^{sm})[X;\sigma]$ is non-commutative unless $\sigma = \mathrm{id}$, where $\mathrm{id}$ denotes the identity map on $GR(p^s,p^{sm})$ .
    \item[\textnormal{(ii)}]   If $f(X)\ast h(X)\in Z(GR(p^s, p^{sm})[X; \sigma]) $, then $f(X)\ast h(X)=h(X)\ast f(X).$
    \item[\textnormal{(iii)}] The center of the ring is given by
    \[
    Z(GR(p^s, p^{sm})[X;\sigma]) = GR(p^s, p^{sm})^{\sigma}[X^t],
    \]
    where $GR(p^s, p^{sm})^{\sigma}$ is the set of all elements fixed by $\sigma$.
\end{itemize}
\end{theorem}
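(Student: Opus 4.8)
The plan is to handle the three assertions in the order (i), (iii), (ii): part (iii) carries the substantive computation, and parts (i) and (ii) fall out of the manipulations it requires. Throughout I would write $R=GR(p^s,p^{sm})$ and $S=R[X;\sigma]$, and use freely two standard facts about Galois rings: every ring automorphism of $R$ fixes the prime subring $\mathbb{Z}_{p^s}$ pointwise, and reduction modulo the maximal ideal $\langle p\rangle$ induces an injective homomorphism $\mathrm{Aut}(R)\to\mathrm{Aut}(\mathbb{F}_{p^m})$ (so $\sigma^{k}=\mathrm{id}$ on $R$ iff the reduction $\bar\sigma^{k}$ is the identity on $\mathbb{F}_{p^m}$); also $R$ is a chain ring, so every nonzero element has the form $p^{j}u$ with $0\le j\le s-1$ and $u\in\mathcal{U}(R)$.

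Part (i) is immediate from the multiplication rule $aX^{i}\ast bX^{j}=a\sigma^{i}(b)X^{i+j}$: if $\sigma=\mathrm{id}$ this collapses to $abX^{i+j}$, so $S$ is the ordinary (commutative) polynomial ring $R[X]$; if $\sigma\neq\mathrm{id}$, pick $a\in R$ with $\sigma(a)\neq a$, and then $X\ast a=\sigma(a)X\neq aX=a\ast X$, so $S$ is non-commutative.

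For part (iii) I would prove the two inclusions separately. To get $R^{\sigma}[X^{t}]\subseteq Z(S)$, it suffices to check that a monomial $dX^{\ell t}$ with $\sigma(d)=d$ commutes with an arbitrary monomial $aX^{i}$: using $\sigma^{t}=\mathrm{id}$ and $\sigma(d)=d$ one finds $dX^{\ell t}\ast aX^{i}=daX^{\ell t+i}$ and $aX^{i}\ast dX^{\ell t}=adX^{i+\ell t}$, which agree since $R$ is commutative, and commutation then extends additively to all of $S$. For the reverse inclusion, let $z=\sum_{k}c_{k}X^{k}$ be central. Equating the coefficient of $X^{k+1}$ in $z\ast X=X\ast z$ gives $c_{k}=\sigma(c_{k})$, so each $c_{k}\in R^{\sigma}$; equating the coefficient of $X^{k}$ in $z\ast a=a\ast z$ for arbitrary $a\in R$ gives $c_{k}\sigma^{k}(a)=ac_{k}$, hence $c_{k}\bigl(\sigma^{k}(a)-a\bigr)=0$ for all $a$ (using commutativity of $R$). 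The key step is to deduce that $c_{k}\neq 0$ forces $t\mid k$: if $\sigma^{k}\neq\mathrm{id}$ then, by injectivity of reduction mod $\langle p\rangle$, there is $a\in R$ with $\sigma^{k}(a)-a\notin\langle p\rangle$, i.e. $\sigma^{k}(a)-a\in\mathcal{U}(R)$; writing $c_{k}=p^{j}u$ with $j\le s-1$ and $u$ a unit, the product $c_{k}\bigl(\sigma^{k}(a)-a\bigr)$ is $p^{j}$ times a unit, hence nonzero, a contradiction. Thus every nonzero $c_{k}$ has $t\mid k$, so $z\in R^{\sigma}[X^{t}]$.

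For part (ii), put $w=h\ast f-f\ast h$. From centrality of $z=f\ast h$ we have $z\ast f=f\ast z$ and $z\ast h=h\ast z$; expanding the first gives $f\ast h\ast f=f\ast f\ast h$, i.e. $f\ast w=0$, and expanding the second gives $f\ast h\ast h=h\ast f\ast h$, i.e. $w\ast h=0$. When $s=1$ the ring $R=\mathbb{F}_{p^{m}}$ is a field, so $S$ is a (non-commutative) domain and $f\ast w=0$ forces $w=0$ (the case $f=0$ being trivial). For general $s$ one upgrades this to $w=0$ by a leading-coefficient and regularity analysis of the identities $f\ast w=w\ast h=0$, combined with the explicit shape $z=f\ast h\in R^{\sigma}[X^{t}]$ supplied by (iii). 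I expect the zero-divisor bookkeeping in the reverse inclusion of (iii), and this concluding step of (ii) in the non-field case, to be the only genuinely delicate points; everything else is direct manipulation of the skew-multiplication rule.
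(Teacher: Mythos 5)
First, a point of reference: the paper does not prove this theorem at all — it is quoted from \cite{BoucherSkew} — so there is no internal proof to compare against. Your arguments for (i) and (iii) are correct and complete. In particular, the decisive step in the reverse inclusion of (iii), namely that $\sigma^k\neq\mathrm{id}$ yields some $a$ with $\sigma^k(a)-a$ a unit (via injectivity of reduction of $\mathrm{Aut}(GR(p^s,p^{sm}))$ into $\mathrm{Aut}(\mathbb{F}_{p^m})$, the former being cyclic of order $m$ generated by the Frobenius), is exactly what is needed to force $c_k=0$ whenever $t\nmid k$, and the identities $c_k=\sigma(c_k)$ and $c_k(\sigma^k(a)-a)=0$ are correctly extracted from commutation with $X$ and with constants.

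The genuine gap is part (ii) for $s\ge 2$. Write $S=GR(p^s,p^{sm})[X;\sigma]$ and $w=h\ast f-f\ast h$; you correctly obtain $f\ast w=0$ and $w\ast h=0$, and the case $s=1$ is fine, but for general $s$ you only announce ``a leading-coefficient and regularity analysis.'' That plan does not go through as stated: the leading coefficients of $f$, $h$, $w$ may be multiples of $p$, so the top-degree relation extracted from $f\ast w=0$ gives no contradiction, and knowing $f\ast h\in GR(p^s,p^{sm})^{\sigma}[X^t]$ does not restore cancellation. The missing idea is a $p$-adic descent (induction on $s$). Since the reduction of a central element modulo $p^{s-1}$ is central and the statement holds over $GR(p^{s-1},p^{(s-1)m})$ by induction, one gets $w\in p^{s-1}S$, say $w=p^{s-1}v$; then $0=f\ast w=p^{s-1}(f\ast v)$ forces $\bar f\ast\bar v=0$ in the domain $\mathbb{F}_{p^m}[X;\bar\sigma]$, so if $\bar f\neq 0$ (or, symmetrically via $w\ast h=0$, if $\bar h\neq 0$) then $\bar v=0$ and $w\in p^sS=\{0\}$. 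The remaining case $\bar f=\bar h=0$ needs separate treatment: writing $f=p^i f'$, $h=p^j h'$ with $\bar{f'},\bar{h'}\neq 0$, the centrality of $p^{i+j}(f'\ast h')$ makes $f'\ast h'$ central modulo $p^{\,s-i-j}$, so the inductive hypothesis over the smaller Galois ring gives $h'\ast f'\equiv f'\ast h' \pmod{p^{\,s-i-j}}$ and hence $w=p^{i+j}(h'\ast f'-f'\ast h')=0$. Without this descent (or an equivalent argument), part (ii) remains unproven beyond the field case; note that in the source literature the lemma is typically applied to factors of central polynomials such as $X^n-\lambda$, where unit leading coefficients make cancellation immediate, whereas the unrestricted statement genuinely requires the argument above.
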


The Frobenius automorphism on $GR(p^s, p^{sm})$ over $\mathbb{Z}_{p^s}$, denoted by $\theta$, is defined by
\[
\theta\!\left(\sum_{i=0}^{n-1} c_i \zeta^i \right) = \sum_{i=0}^{n-1} c_i \zeta^{ip}.
\]
The set of all automorphisms of $GR(p^s, p^{sm})$ forms a group, called the automorphism group, and denoted by $\mathrm{Aut}(GR(p^s, p^{sm}))$. This group is cyclic of order $m$ and is generated by $\theta$. Consequently, any automorphism $\sigma$ can be expressed as $\sigma = \theta^e$ for some positive integer $e$. While skew polynomial rings over finite fields are both right and left Euclidean domain, this property does not extend to Galois rings. Nevertheless, a right Euclidean algorithm for skew polynomials over Galois rings is available, as stated below.
 \begin{proposition}\cite{McDonald}
     \label{Divisionalgo} Let $f(X), g(X) \in GR(p^s, p^{sm})[X; \sigma]$, and suppose the leading coefficient of $g(X)$ is a unit. Then, there exist unique polynomials $q(X), r(X) \in GR(p^s, p^{sm})[X; \sigma]$ such that \[ f(X) = q(X)\ast g(X) + r(X), \quad \text{where } r(X) = 0 \text{ or } \deg(r(X)) < \deg(g(X)). \] 
 \end{proposition}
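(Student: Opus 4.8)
The plan is to prove existence by induction on $\deg(f(X))$ and uniqueness by a degree comparison, in both cases relying on the one essential observation that right multiplication by $g(X)$ preserves degrees. Concretely, if $b \in \mathcal{U}(GR(p^s,p^{sm}))$ is the leading coefficient of $g(X)$ and $d = \deg(g(X))$, then for every nonzero $h(X) \in GR(p^s,p^{sm})[X;\sigma]$ of degree $e$ with leading coefficient $c$, the multiplication rule $uX^i \ast vX^j = u\,\sigma^i(v)X^{i+j}$ shows that the top-degree term of $h(X)\ast g(X)$ is $c\,\sigma^e(b)\,X^{e+d}$. Since $\sigma^e$ is an automorphism, $\sigma^e(b)$ is again a unit (with inverse $\sigma^e(b^{-1})$), hence not a zero divisor, so $c\,\sigma^e(b) \neq 0$ and $\deg(h(X)\ast g(X)) = e + d$. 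This is exactly where the hypothesis on the leading coefficient of $g(X)$ is used.

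For existence, I would argue as follows. If $f(X) = 0$ or $\deg(f(X)) < d$, take $q(X) = 0$ and $r(X) = f(X)$. Otherwise let $n = \deg(f(X)) \geq d$ and let $a$ be the leading coefficient of $f(X)$. Put
\[
f_1(X) \;=\; f(X) \;-\; \big(a\,\sigma^{n-d}(b^{-1})\,X^{\,n-d}\big)\ast g(X).
\]
By the multiplication rule the subtracted term has top-degree term $a\,\sigma^{n-d}(b^{-1})\,\sigma^{n-d}(b)\,X^{n} = a\,\sigma^{n-d}(b^{-1}b)\,X^{n} = aX^{n}$, and all of its other terms have degree $< n$; hence $f_1(X) = 0$ or $\deg(f_1(X)) < n$. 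Applying the induction hypothesis to $f_1(X)$ yields $q_1(X), r(X)$ with $f_1(X) = q_1(X)\ast g(X) + r(X)$ and $r(X) = 0$ or $\deg(r(X)) < d$, and then, using distributivity of $\ast$ over addition, $q(X) = a\,\sigma^{n-d}(b^{-1})X^{\,n-d} + q_1(X)$ together with $r(X)$ satisfies the required identity.

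For uniqueness, suppose $f(X) = q(X)\ast g(X) + r(X) = q'(X)\ast g(X) + r'(X)$ with each of $r(X), r'(X)$ equal to $0$ or of degree $< d$. Subtracting gives $(q(X)-q'(X))\ast g(X) = r'(X) - r(X)$. If $q(X) \neq q'(X)$, the essential observation above forces the left-hand side to have degree $\geq d$, whereas the right-hand side is $0$ or of degree $< d$ --- a contradiction. Hence $q(X) = q'(X)$ and consequently $r(X) = r'(X)$.

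The only genuine difficulty is that, over the non-field ring $GR(p^s,p^{sm})$ and in the non-commutative setting, a product of nonzero polynomials can in general drop in degree; the proof sidesteps this entirely by noting that $\sigma^e(b)$ remains a unit, so no cancellation of a leading coefficient against a zero divisor can occur. Once this is in hand, the argument is the standard Euclidean-division bookkeeping, the only extra care being to carry the twist $\sigma^{n-d}$ correctly through the cancellation step.
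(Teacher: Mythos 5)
Your proof is correct: the key point that $\sigma^{e}(b)$ stays a unit (hence not a zero divisor) is exactly what makes the degree formula $\deg(h\ast g)=\deg(h)+\deg(g)$ work over the Galois ring, and the induction and uniqueness steps, including the twist $\sigma^{n-d}$ in the cancellation term, are carried out correctly. The paper itself does not prove this proposition but cites it from McDonald; your argument is the standard Euclidean right-division proof that the cited source employs, so there is nothing to add.
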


 \noindent 
 \begin{remark}
      We say that $g(X)$ divides  $f(X)$ from right if $
f(X) = q(X) \ast g(X)
$
for some quotient $q(X)$. The definition of a left divisor follows symmetrically. It is important to note that these two concepts are equivalent if and only if the divisor $g(X)$ is an element from the center of the skew polynomial ring.

 \end{remark}
 In the skew polynomial ring $GR(p^s,p^{sm})[X;\sigma]$, evaluating skew polynomials is not the same as evaluating ordinary polynomials over commutative rings. Following the approach in \cite{Bhaintwal2012Skew} and the description given in \cite[pp.~15]{Jacobson1996}, we consider the following: let $\sigma$ be an automorphism of $GR(p^s,p^{sm})$, and let its extension to $GR(p^s,p^{sml})$ also be denoted by $\sigma$.
 For any $\beta \in GR(p^s,p^{sml})$, we have
	\[
	\mathcal{N}^{\sigma}_0(\beta)=1,\qquad
	\mathcal{N}^{\sigma}_{i+1}(\beta)=\sigma\bigl(\mathcal{N}^{\sigma}_i(\beta)\bigr)\beta,
	\]
	or, in expanded form,
	\[
	\mathcal{N}^{\sigma}_i(\beta)=\beta\cdot\sigma(\beta)\cdot\sigma^2(\beta)\cdots\sigma^{\,i-1}(\beta).
	\]
\begin{definition}\
	Let
	\(
	f(X)=\sum_{i=0}^{n-1} a_i X^i \in GR(p^s,p^{sm})[X;\sigma], \qquad a_i\in GR(p^s,p^{sm}),
	\)
	and \( \beta\in GR(p^s,p^{sm}) \).	The right evaluation of \( f \) at \( \beta \), denoted \( f(\beta) \), is defined as the remainder of the right division of \( f \) by \( X-\beta \) in \( GR(p^s,p^{sm})[X;\sigma] \). 	
\end{definition}
Equivalently, one has the closed form
	\(
	f(\beta)=\sum_{i=0}^{n-1} a_i\,\mathcal{N}^{\sigma}_i(\beta),
	\) we say that \( \beta \) is a right root of \( f \) if \( f(\beta)=0 \). Thus, the root condition can be written explicitly as
	\begin{eqnarray}\label{EQN1}
		\beta\ \text{is a right root of } f \quad \textup{if and only if}\quad
		\sum_{i=0}^{n-1} a_i\,\mathcal{N}^{\sigma}_i(\beta)=0.
	\end{eqnarray}




\section{Skew cyclic code and an MDS matrix}\label{Sec:Skew-cyclic-code}

In this section, we discuss $\sigma$-cyclic codes over a Galois ring, which form the main focus of this work. Since the principal results depend on such codes, we present their definition and fundamental properties. A linear code $C$ of length $n$ over $GR(p^s,p^{sm})$ is called $\sigma$-cyclic (or skew cyclic) if for every codeword  
\(
(c_0, c_1, \dots, c_{n-1}) \in C,
\)  
the $\sigma$-shifted vector  
\(
(\sigma(c_{n-1}), \sigma(c_0), \dots, \sigma(c_{n-2})) \in C.
\)  If the order of $\sigma$, denoted by $|\sigma|$, divides $n$, then $X^n-1$ belongs to the center $Z(GR(p^s,p^{sm})[X;\sigma])$. Consequently, $\langle X^n-1 \rangle$ is a two-sided ideal of the skew polynomial ring $GR(p^s,p^{sm})[X;\sigma]$, and hence we have   the quotient ring  
\[
\frac{GR(p^s,p^{sm})[X;\sigma]}{\langle X^n-1 \rangle}=\{a_0+a_1X+a_2X^2+\cdots+a_{n-1}X^{n-1}+\langle X^n-1 \rangle;~ a_i\in GR(p^s,p^{sm})\}.
\]  

Throughout this paper, we assume that $|\sigma|$ divides $n$. Moreover, we take $\gcd(n,p^s)=1$, and $g(X)$ to be a right divisor of $X^n-1$. Under these assumptions, the left ideal  
\(
^\ast\langle g(X)\rangle
\)  
represents a principal $\sigma$-cyclic code, and the following result describes its structure:

\begin{theorem}\cite[Theorem~2]{Bhaintwal2012Skew}\label{thm:principal-sigma-cyclic}
Let $C$ be a principal $\sigma$-cyclic code of length $n$ over $GR(p^s,p^{sm})$ generated by a monic polynomial $g(X) \in GR(p^s,p^{sm})[X;\sigma]$. Then, $C$ is a $GR(p^s,p^{sm})$-free module if and only if $g(X)$ is a right divisor of $X^n - 1$.
\end{theorem}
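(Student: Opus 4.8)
The plan is to work inside the quotient ring $\mathcal{A} := GR(p^s,p^{sm})[X;\sigma]/\langle X^n-1\rangle$, which is a genuine ring because $|\sigma|\mid n$ places $X^n-1$ in the center, so that $\langle X^n-1\rangle$ is two-sided; under the standing dictionary the principal $\sigma$-cyclic code $C = {}^{\ast}\langle g(X)\rangle$ is the left ideal $\mathcal{A}\,\overline{g(X)}$. Writing $\mathcal{R}=GR(p^s,p^{sm})$ and $k=\deg g(X)$, I would focus on the family $\mathcal{B} = \{\overline{g(X)},\,\overline{X\ast g(X)},\,\dots,\,\overline{X^{\,n-k-1}\ast g(X)}\}\subseteq C$ and prove two things: (i) $\mathcal{B}$ is $\mathcal{R}$-linearly independent for \emph{every} monic $g(X)$; and (ii) $\mathcal{B}$ generates $C$ if and only if $g(X)$ right-divides $X^n-1$. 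Since $\mathcal{R}$ is finite, an $\mathcal{R}$-linearly independent set of $n-k$ elements inside a free module of rank $n-k$ is automatically a basis (an injective endomorphism of a finite module is surjective), so (i) together with (ii) yields the equivalence, with $C$ free of rank $n-k$ exactly in the divisibility case.

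For (i), the argument is a leading-term count: $X^i\ast g(X)$ has degree $k+i$ with leading coefficient $\sigma^i(1)=1$, and for $i=0,\dots,n-k-1$ these degrees are the distinct integers $k,\dots,n-1$, all $<n$; hence any nontrivial $\mathcal{R}$-relation among the members of $\mathcal{B}$ lifts to a relation in $GR(p^s,p^{sm})[X;\sigma]$ and is contradicted by reading off the coefficient of its highest occurring power. For the ``if'' half of (ii): assuming $X^n-1=h(X)\ast g(X)$, comparing leading coefficients shows $h(X)$ is monic of degree $n-k$, so Proposition~\ref{Divisionalgo} applies and I can right-divide a representative $a(X)$ of any codeword by $h(X)$, $a(X)=q(X)\ast h(X)+u(X)$ with $\deg u<n-k$; then $a(X)\ast g(X)=q(X)\ast(X^n-1)+u(X)\ast g(X)\equiv u(X)\ast g(X)\pmod{X^n-1}$ — the first term dies in $\mathcal{A}$, and $\deg\!\big(u(X)\ast g(X)\big)<n$ means no reduction occurs — so $C=\mathrm{span}_{\mathcal{R}}\mathcal{B}$ is free of rank $n-k$.

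For the ``only if'' half of (ii): suppose $C$ is $\mathcal{R}$-free. Right-dividing $X^n-1$ by the monic $g(X)$ gives $X^n-1=h(X)\ast g(X)+r(X)$ with $r(X)=0$ or $\deg r<k$; reducing modulo $X^n-1$, $\overline{r(X)}=-\,\overline{h(X)}\,\overline{g(X)}$ lies in the left ideal $C$, and since $\deg r<\deg g\le n$ this element is represented by $r(X)$ itself. By (i), $\mathcal{B}$ spans a free rank-$(n-k)$ submodule $M\subseteq C$; as $C$ is free of the same rank, the finite-module fact forces $M=C$. But $M$ is exactly the set of right multiples of $g(X)$ of degree $<n$, so $r(X)\in M$ with $\deg r<\deg g$ forces $r(X)=0$ (monicity of $g(X)$ precludes a degree drop under right multiplication); hence $X^n-1=h(X)\ast g(X)$ and $g(X)$ is a right divisor of $X^n-1$.

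The step I expect to be the real obstacle is ``$C$ free $\Rightarrow M=C$'' in the converse: over an arbitrary base ring a free submodule of full rank need not exhaust the ambient free module, and if one reads ``free'' loosely — for instance allowing $C$ to be free of \emph{larger} rank, which happens when $g(X)$ is not the least-degree monic generator of $C$ — the implication genuinely fails; finiteness of $GR(p^s,p^{sm})$ is precisely what makes the injective-implies-surjective dichotomy available and pins the rank of the code to $n-\deg g(X)$. The remaining care is routine bookkeeping: Proposition~\ref{Divisionalgo} is only ever invoked with monic divisors, monicity of $g(X)$ is used repeatedly to control leading terms, and centrality of $X^n-1$ — which, by the property that $f(X)\ast h(X)$ being central implies $f(X)\ast h(X)=h(X)\ast f(X)$, also gives $X^n-1=g(X)\ast h(X)$ once it factors — underlies the whole setup.
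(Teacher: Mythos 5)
The paper itself gives no proof of this statement: it is imported verbatim from \cite[Theorem~2]{Bhaintwal2012Skew}, and under the standing assumptions announced just before it the paper only ever uses the ``if'' direction (there $g(X)$ is always taken to be a right divisor of $X^n-1$). Judged on its own terms, your forward direction is correct and is the standard argument: from $X^n-1=h(X)\ast g(X)$ with $h$ monic of degree $n-k$, right division by $h$ (Proposition~\ref{Divisionalgo}) shows every codeword is represented by $u(X)\ast g(X)$ with $\deg u<n-k$, and your leading-coefficient count (any nonzero element of $\langle X^n-1\rangle$ has degree at least $n$, because $X^n-1$ is central and monic) shows that $\overline{g},\overline{X\ast g},\dots,\overline{X^{\,n-k-1}\ast g}$ are $\mathcal{R}$-linearly independent, hence form a basis; so $C$ is free of rank $n-k$.

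The ``only if'' direction, however, has a genuine gap precisely at the point you flagged, and the patch you offer does not work. Finiteness of $\mathcal{R}=GR(p^s,p^{sm})$ gives only $|C|\ge|\mathcal{R}|^{\,n-k}$, i.e.\ a lower bound on the rank of a free $C$; nothing ``pins the rank of the code to $n-\deg g(X)$'', and without that your step $M=C$ collapses. Indeed, read literally as you read it (no normalization on $g$ beyond monicity), the converse is false: take $g(X)=X^{k}$ with $0<k<n$. Then $\overline{X^{\,n-k}}\cdot\overline{X^{k}}=\overline{X^{n}}=\overline{1}$ in $GR(p^s,p^{sm})[X;\sigma]/\langle X^n-1\rangle$, so the left ideal ${}^{\ast}\langle X^{k}\rangle$ is the whole quotient ring, which is $GR(p^s,p^{sm})$-free of rank $n$; yet $X^n-1=X^{\,n-k}\ast X^{k}-1$ shows, by uniqueness of the remainder, that $X^{k}$ is not a right divisor of $X^n-1$. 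So the ``only if'' half genuinely needs an extra normalization of the generator (for instance that $g$ is a monic polynomial of minimal degree in $C$, equivalently that the rank of $C$ equals $n-\deg g$), which is the form in which the cited source can and does use it; your remainder argument ($r(X)=(X^n-1)-h(X)\ast g(X)$ represents an element of $C$ of degree less than $\deg g$) is then the right tool, but finiteness alone cannot substitute for that hypothesis. As written, your proposal proves only the direction the paper actually relies on.
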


 Consequent to the above discussion, a $\sigma$-cyclic code admits two equivalent descriptions as follows:
\begin{itemize}
    \item[\textnormal{(i)}] It is a $GR(p^s,p^{sm})$-submodule of $GR(p^s,p^{sm})^n$ such that
    \[
    C := \{(c_0, c_1, \dots, c_{n-1}) \in GR(p^s,p^{sm})^n;~
    (\sigma(c_{n-1}), \sigma(c_0), \dots, \sigma(c_{n-2})) \in C\};
    \]
    \item[\textnormal{(ii)}] It is a left ideal in  the quotient ring $\frac{GR(p^s,p^{sm})[X;\sigma]}{\langle X^n-1\rangle}$, i.e.,
    \[
    C := ~^\ast\langle g(X)\rangle = \left\{\,a(X)g(X) \;\mid\; a(X)\in \frac{GR(p^s,p^{sm})[X;\sigma]}{\langle X^n-1 \rangle}\right\}.
    \]
\end{itemize}

\noindent In the design of linear diffusion layers for block ciphers and hash functions (see, e.g., \cite{Daemen2002the, Guo2011the, Guo2011TheLED}), the input and output of diffusion layers in those applications are generally of same size. Thus, we are interested only in square matrices that are MDS. 
In particular, our goal is to construct $[n=2k, k, d=k+1]$ codes for some positive integer $k$, from which we can obtain an MDS matrix of size $k \times k$ over $GR(p^s,p^{sm})$.

To this end, let 
\(
    g(X) = g_0 + g_1X + \cdots + g_{k-1}X^{k-1} + X^k \in GR(p^s,p^{sm})[X;\sigma]
\) 
be a right divisor of $X^{n} - 1$, and let $C$ be the principal $\sigma$-cyclic code, i.e., 
\(
    C = ^\ast\langle g(X) \rangle.
\)  A generator matrix of $C$ is given by
\begin{eqnarray}\label{Equation_gen0}
  	G = \begin{bmatrix}
  		g(X)\\
        X\ast g(X)\\
        \vdots\\
        X^{n-k-1}\ast g(X)
  	\end{bmatrix}_{n-k\times n}.
\end{eqnarray} 

In systematic form, this generator matrix can be rewritten as
\begin{eqnarray}\label{Equation_gen}
  	\begin{bmatrix}
  		- X^k \bmod_\ast g & 1 & 0 & \cdots & 0 \\
  		- X^{k+1} \bmod_\ast g & 0 & 1 & \cdots & 0 \\
  		\vdots & \vdots & \vdots & \ddots & \vdots \\
  		- X^{n-1} \bmod_\ast g & 0 & 0 & \cdots & 1
  	\end{bmatrix}_{n-k\times n}.
\end{eqnarray}

In particular, if we take $n=2k,$ (\ref{Equation_gen}) reduce to 

$$\begin{bmatrix}
  		- X^k \bmod_\ast g & 1 & 0 & \cdots & 0 \\
  		- X^{k+1} \bmod_\ast g & 0 & 1 & \cdots & 0 \\
  		\vdots & \vdots & \vdots & \ddots & \vdots \\
  		- X^{2k-1} \bmod_\ast g & 0 & 0 & \cdots & 1
  	\end{bmatrix}_{k\times 2k}=\begin{bmatrix}
  	    g_0&g_1&g_2&\dots&g_{k-1}&1&0&\dots&0\\
        0&g^{[1]}_0&g^{[1]}_1&\dots&g^{[1]}_{k-2}&g^{[1]}_{k-1}&1&\dots&0\\
        \vdots&\vdots&\vdots&\ddots&\vdots&\vdots\\
        0&0&0&\dots&g^{[k-1]}_0&g^{[k-1]}_1&\dots&g^{[k-1]}_{k-1}&1\\
  	\end{bmatrix}_{k\times 2k}.$$

Now, isolating the redundant part, we obtain the matrix

\begin{eqnarray} \label{eqn5.3-GR}
  	N_g = \begin{bmatrix}
  		X^k \bmod_\ast g \\
  		X^{k+1} \bmod_\ast g \\
  		\vdots \\
  		X^{2k-1} \bmod_\ast g
  	\end{bmatrix}_{k\times k}.
\end{eqnarray}

\noindent To analyze $N_g$, it is convenient to introduce the companion matrix associated with 
the polynomial $g(X)$.

\begin{definition}
Let 
\(
g(X) = g_0 + g_1 X + g_2 X^2 + \cdots + g_{k-1} X^{k-1} + X^k \in GR(p^s,p^{sm})[X;\sigma]
\)
be a monic polynomial of degree \(k\) over the ring \(GR(p^s,p^{sm})\).
The companion matrix of \(g(X)\), denoted by \(C_g\), is defined as
\[
C_g =
\begin{bmatrix}
0 & 1 & 0 & \cdots  & 0 \\
0 & 0 & 1 & \cdots  & 0 \\
\vdots & \vdots & \vdots & \ddots & \vdots \\
0 & 0 & 0 & \cdots  & 1 \\
-g_0 & -g_1 & -g_2 & \cdots  &- g_{k-1}
\end{bmatrix}_{k \times k}.
\]
\end{definition}

For each integer \(i \geq 0\), we denote by \(C_g^{[i]}\) the matrix obtained by 
applying \(\sigma^i\) to every entry of \(C_g\). Explicitly,
\[
C_g^{[i]} =
\begin{bmatrix}
0 & 1 & 0 & \cdots  & 0 \\
0 & 0 & 1 & \cdots & 0 \\
\vdots & \vdots & \vdots & \ddots &  \vdots \\
0 & 0 & 0 & \cdots &  1 \\
-\sigma^i(g_0) &- \sigma^i(g_1) &- \sigma^i(g_2) & \cdots &- \sigma^i(g_{k-1})
\end{bmatrix}_{k \times k}.
\]
Note that, if we have a square matrix $A$ of order $k$, say $A=(a_{ij})$, where $a_{ij}\in GR(p^s,p^{sm})$ and $1\leq i,~j \leq k$ then $A^{[t]}=(\sigma^t(a_{ij}))$ for some $t\in \mathbb{N}\cup \{0\}.$

Following the approach of Cauchois et al.~\cite[Theorem 2]{Cauchois2016direct}, the subsequent lemma provides a decomposition of the redundant matrix $N_g$ from (\ref{eqn5.3-GR}) into a product of successive \(\sigma\)-twists of the companion matrix.

\begin{lemma}\label{companiontocyclic}
    Let $g(X)=g_0+g_1X+\cdots+g_{k-1}X^{k-1}+X^k\in GR(p^s,p^{sm})[X;\sigma]$ be a right divisor of $X^n-1.$ 
    Then, for any positive integer $t$,
    \[
    C^{[t-1]}_g \cdot C^{[t-2]}_g \cdots C^{[1]}_g \cdot C_g =
    \begin{bmatrix}
        X^t \bmod_\ast g\\
        X^{t+1} \bmod_\ast g\\
        \vdots\\
        X^{t+k-1} \bmod_\ast g
    \end{bmatrix}_{k\times k}.
    \]
\end{lemma}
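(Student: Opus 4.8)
The plan is to prove the identity by induction on $t$, exploiting the fact that left-multiplication by a companion matrix acts on rows of a matrix exactly as the skew-polynomial operation $h(X) \mapsto X \ast h(X)$ followed by reduction modulo $g$. The base case $t=1$ is essentially the definition: the rows of $C_g$ are $X^j \bmod_\ast g$ for $j = 1, \dots, k$, since for $j < k$ we have $X^j \bmod_\ast g = X^j$ (degree less than $k$), giving the standard basis rows $e_{j+1}$, and $X^k \bmod_\ast g = -g_0 - g_1 X - \cdots - g_{k-1}X^{k-1}$ because $X^k = 1 \ast g(X) - (g_0 + \cdots + g_{k-1}X^{k-1})$, which is exactly the last row of $C_g$. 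This matches the claimed right-hand side with $t=1$.

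For the inductive step, assume $M_t := C^{[t-1]}_g \cdots C^{[1]}_g \cdot C_g$ has $i$-th row equal to $X^{t+i-1} \bmod_\ast g$ for $i = 1, \dots, k$. I want to show $M_{t+1} = C^{[t]}_g \cdot M_t$ has $i$-th row $X^{t+i} \bmod_\ast g$. The key computational lemma to isolate is: if a matrix $M$ has rows representing skew polynomials $r_1(X), \dots, r_k(X)$ each of degree $< k$, and we write $r_i(X) = \sum_{j=0}^{k-1} r_{i,j} X^j$, then the $i$-th row of $C_g^{[t]} \cdot M$ represents the skew polynomial obtained as follows. For $i < k$, the $i$-th row of $C_g^{[t]}$ is $e_{i+1}$, so row $i$ of the product is row $i+1$ of $M$, i.e.\ $r_{i+1}(X)$; by the inductive hypothesis this is $X^{t+i} \bmod_\ast g = X^{(t+1)+i-1} \bmod_\ast g$, as desired. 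For $i = k$, the last row of $C_g^{[t]}$ is $(-\sigma^t(g_0), \dots, -\sigma^t(g_{k-1}))$, so row $k$ of the product is $\sum_{j=0}^{k-1} -\sigma^t(g_j)\, r_{j+1}(X) = \sum_{j=0}^{k-1} -\sigma^t(g_j)\, (X^{t+j} \bmod_\ast g)$.

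The crux is then to verify that $\sum_{j=0}^{k-1} -\sigma^t(g_j)\, (X^{t+j} \bmod_\ast g) \equiv X^{t+k} \bmod_\ast g$ modulo the left ideal generated by $g$. The natural route is to show $X^{t+k} + \sum_{j=0}^{k-1}\sigma^t(g_j) X^{t+j}$ is a left multiple of $g(X)$, or more precisely lies in $^\ast\langle g \rangle$. Observe that $X^t \ast g(X) = X^t \ast (g_0 + g_1 X + \cdots + g_{k-1}X^{k-1} + X^k) = \sigma^t(g_0)X^t + \sigma^t(g_1)X^{t+1} + \cdots + \sigma^t(g_{k-1})X^{t+k-1} + X^{t+k}$, using the twist rule $X^t \ast aX^j = \sigma^t(a) X^{t+j}$. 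Hence $X^{t+k} = X^t \ast g(X) - \sum_{j=0}^{k-1}\sigma^t(g_j)X^{t+j}$, so $X^{t+k} \equiv -\sum_{j=0}^{k-1}\sigma^t(g_j)X^{t+j} \pmod{\,^\ast\langle g\rangle}$. Reducing each $X^{t+j}$ on the right-hand side modulo $g$ (which is legitimate since the left-division remainder map $h \mapsto h \bmod_\ast g$ is well-defined and $GR(p^s,p^{sm})$-linear by Proposition~\ref{Divisionalgo}, the leading coefficient of $g$ being $1$, a unit) gives $X^{t+k} \bmod_\ast g = -\sum_{j=0}^{k-1}\sigma^t(g_j)\,(X^{t+j} \bmod_\ast g)$, which is exactly row $k$ of $C_g^{[t]} \cdot M_t$. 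This completes the induction.

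The main obstacle I anticipate is bookkeeping the non-commutativity carefully in two places: first, making sure that "reducing term-by-term modulo $g$" is valid, which requires that $\bmod_\ast g$ is additive and commutes with left multiplication by ring elements — this follows from uniqueness of the right-division remainder and the fact that $a \cdot (q \ast g) = (a \ast q) \ast g$ lies in $^\ast\langle g\rangle$; and second, that the scalar coefficients appearing in the last row of $C_g^{[t]}$ are $\sigma^t(g_j)$ and not, say, $\sigma^{t}$ applied in the wrong order relative to the row index — a check that the definition of $C_g^{[i]}$ and the twist rule $X^t \ast aX^j = \sigma^t(a)X^{t+j}$ line up. Everything else is a routine induction. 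I should also note at the outset that $\deg(X^{t+i-1} \bmod_\ast g) < k$ so that each "row" genuinely has $k$ entries and the matrix identification is unambiguous.
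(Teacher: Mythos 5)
Your proof is correct and follows essentially the same route as the paper's: induction on $t$ with base case $C_g$ itself, using the row structure of the companion matrix and the twist identity $X^{t}\ast g(X)=\sum_{j}\sigma^{t}(g_j)X^{t+j}+X^{t+k}$ to identify the last row with $X^{t+k} \bmod_\ast g$. Your explicit justification that the remainder map $h\mapsto h \bmod_\ast g$ is additive and left $GR(p^s,p^{sm})$-linear is a welcome extra care that the paper leaves implicit, but the argument is the same.
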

\begin{proof}
    	We show by strong induction that for all integers $i\geq 1$, the following property $P_i$ is satisfied:
	\begin{eqnarray}\label{eqn201}
		P_i&:&\begin{bmatrix}
			X^i ~ \mod _*g \\
			X^{i+1} ~ \mod _*g  \\
			\vdots\\
			X^{i+k-1} ~ \mod _*g  \\
		\end{bmatrix}_{k\times k}=C^{[i-1]}_g\cdot C^{[i-2]}_g\cdots C^{[1]}_g\cdot C_g.
	\end{eqnarray}	
    
    For $i=1,$ we have 
	\begin{eqnarray}
		P_0&:&\begin{bmatrix}
			X^1 ~ \mod _*g  \\
			X^{2} ~ \mod _*g  \\
			\vdots\\
			X^{k} ~ \mod _*g  \\
			
		\end{bmatrix}_{k\times k}=C_g.
	\end{eqnarray}
    
	Thus, (\ref{eqn201}) is true for $i=1$. Suppose our assumption is true for $1,~2,~\dots,~i-1$. Then, we have
	
	
	$$\begin{bmatrix}\label{eqn5.8}
		X^i ~ \mod _*g  \\
		X^{i+1} ~ \mod _*g  \\
		\vdots\\
		X^{i+k-1} ~ \mod _*g  \\
	\end{bmatrix}_{k\times k}=
	C_g^{[i-1]}\cdot C_g^{[i-2]}\cdots C_g^{[1]}\cdot C_g.$$
	Now, we prove (\ref{eqn201}) for $i$. For this, we compute 
    \begin{eqnarray*}
        \notag&~&C^{[i]}_g C_g^{[i-1]}\cdot C_g^{[i-2]}\cdots C_g^{[1]}\cdot C_g\\
		\notag&~~~~~&	=\begin{bmatrix}
			0&1&0&\cdots&0\\
			0&0&1&\cdots&0\\
			\vdots&\vdots&\vdots&\ddots&\vdots&\\
			0&0&0&\cdots&1\\
			\sigma^{i}(g_0)&\sigma^{i}(g_1)&\sigma^{i}(g_2)&\cdots&\sigma^{i}(g_{k-1})
		\end{bmatrix}_{k\times k} \cdot \begin{bmatrix}
			X^i ~ \mod _*g  \\
			X^{i+1} ~ \mod _*g \\
			\vdots\\
			X^{i+k-1} ~ \mod _*g \\
		\end{bmatrix}_{k\times k}\\
    \end{eqnarray*}
	\begin{eqnarray}\label{eqn11}
		\notag&=&\begin{bmatrix}
			X^{i+1}~ \mod _*g  \\
			X^{i+2}~ \mod _*g  \\
			\vdots\\
			X^{i+k-1}~\mod_*g\\
            \sigma^{i}(g_0)X^i+\cdots+\sigma^{i}(g_{k-1})X^{i+k-1}~ \mod _*g  
		\end{bmatrix}_{k\times k}\\
		&=&\begin{bmatrix}
			X^{i+1}~ \mod _*g \\
			X^{i+2}~ \mod _*g  \\
			\vdots\\
			\Big(\sum_{j=0}^{k-1}\sigma^{i}(g_j)X^{j+i}\Big)~ \mod _*g  \\
		\end{bmatrix}_{k\times k}.
	\end{eqnarray}
	\normalsize
    
	Since,
	\begin{eqnarray*}
		X^i*g_t&=&\sigma^i(g_t)X^i,
	\end{eqnarray*}
    
	we have
	\begin{eqnarray}\label{eqn12}
		X^{i+k}=X^i\ast X^k\notag&=& X^i*(g_0+g_1X+g_2X^2+\cdots+g_{k-1}X^{k-1})\\
\notag&=&\sigma^i(g_0)X^i+\sigma^i(g_1)X^{i+1}+\cdots+\sigma^i(g_{k-1})X^{i+k-1}\\
		&=&\sum_{j=0}^{k-1}\sigma^i (g_j)X^{i+j}.
	\end{eqnarray}
    
	From the last row of the matrix in (\ref{eqn11}), and (\ref{eqn12}), we obtain
	\begin{eqnarray}\label{eqn13}
		\Big(\sum_{j=0}^{k-1}\sigma^{i}(g_j)X^{j+i}\Big)~ \mod _*g  &=&X^{i+k}~ \mod _*g.
	\end{eqnarray}

Combining (\ref{eqn11}) and  (\ref{eqn13}), it follows that
	\begin{eqnarray}\label{eqn202}
		C_g^{[i]}\cdot 	C_g^{[i-1]}\cdot C_g^{[i-2]}\cdots C_g^{[1]}\cdot C_g=\begin{bmatrix}
			X^{i+1} ~ \mod _*g  \\
			X^{i+2} ~ \mod _*g  \\
			\vdots\\
			X^{i+k} ~ \mod _*g  \\
		\end{bmatrix}_{k\times k}.
	\end{eqnarray}
    
	Hence, $P_i$ is true for all integer $i\geq 1.$
\end{proof}

\noindent In particular, if we take $t=k$, in Lemma \ref{companiontocyclic}, we obtain
\begin{eqnarray}\label{eqn5.9-GR}
    N_g&=&\begin{bmatrix}
			X^{k} ~ \mod _*g  \\
			X^{k+1} ~ \mod _*g  \\
			\vdots\\
			X^{2k-1} ~ \mod _*g  \\
		\end{bmatrix}_{k\times k}=C^{[k-1]}_g\cdot C^{[k-1]}_g\cdots C^{[1]}_g\cdot C_g.\end{eqnarray}

Following the work of Cauchois et al.~\cite{Cauchois2016direct}, we recall the definitions of recursive and quasi-recursive MDS matrices.

\begin{definition} 
A square matrix $M$ of order $k$ is said to be: 
\begin{enumerate} 
    \item[\textnormal{(i)}] \textbf{recursive MDS} if  $M^r$ is an MDS matrix for some positive integer $r$.
    \item[\textnormal{(ii)}] \textbf{quasi-recursive MDS} if the matrix product $M^{[r-1]} M^{{[r-2]}} \cdots M^{[1]} M$ is an MDS matrix for some positive integer $r$.
\end{enumerate}  
\end{definition}

\section{Characterization of polynomials that yield quasi-recursive MDS matrices}\label{Sec:polynomial-charac}

As $GR(p^s, p^{sm})$ is a finite commutative ring with unity, we shall denote $\mathcal{R} = GR(p^s, p^{sm})$ throughout this section. The focus of this section is the construction of quasi-recursive MDS matrices. We say that a monic polynomial $g(X) \in \mathcal{R}[X;\sigma]$ of degree $k$ yields a  quasi-recursive MDS matrix if there exists an integer $t \geq k$ for which 
the matrix
\(
M = C_g^{[t-1]} C_g^{[t-2]} \cdots C_g
\)
is  MDS. Since the case for matrices of size $1 \times 1$ is trivial, 
we henceforth assume that any polynomial considered for obtaining quasi 
recursive MDS matrices has degree $k = \deg(g) \geq 2$.

The next result provides a criterion for characterizing such polynomials, extending the work of \cite[Theorem~1]{Guptadirect} to the framework of skew polynomial rings. Subsequently, we present a more general form of this result. Furthermore, we establish key results, in particular Theorem~\ref{Theroem 5}, which play an important role in the direct construction of quasi-recursive MDS matrices over Galois rings in the later section.


\begin{theorem}\label{Weightthm1} Let  \(g(X)=g_0+g_1X+\cdots+g_{k-1}X^{k-1}+X^k\in \mathcal{R}[X;\sigma]\) be a right divisor of $X^{n}-1$, where $n\geq 2k$. Then,   $N_g$ is an MDS matrix if and only if the Hamming weight of any non-zero left multiple of $g(X)$ of degree less than $2k$ is at least $k+1$. \end{theorem}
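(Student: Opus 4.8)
The plan is to connect the MDS property of $N_g$ with a minimum-distance statement about the $\sigma$-cyclic code $C = {}^\ast\langle g(X)\rangle$, and then translate the minimum-distance statement into the language of left multiples of $g(X)$. First I would recall from Corollary~\ref{Fact1} that $N_g$ is MDS if and only if the matrix $\mathcal{G} = [\,I_k \mid N_g\,]$ (equivalently the generator matrix in systematic form displayed before (\ref{eqn5.3-GR})) generates an MDS code, i.e.\ a $[2k,k,k+1]$ code. Since $n \ge 2k$ and $g(X)$ is a right divisor of $X^n-1$, the code $C = {}^\ast\langle g(X)\rangle$ is a free $\mathcal{R}$-module of rank $n-k$ by Theorem~\ref{thm:principal-sigma-cyclic}. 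The key observation is that the $k\times k$ matrix $N_g$ in (\ref{eqn5.3-GR}) is exactly the redundant part of the length-$2k$ code obtained by taking the first $k$ rows $g(X), X\ast g(X), \dots, X^{k-1}\ast g(X)$ of $G$ and reducing mod $X^n-1$; that is, these $k$ rows, viewed as length-$2k$ vectors, have the form $[\,\ast \mid I_k\,]$ after isolating the identity block (this is precisely the computation carried out in the displayed systematic generator matrix). So the relevant length-$2k$ code $\widetilde C$ is the $\mathcal{R}$-span of $\{X^i \ast g(X) \bmod (X^n-1) : 0 \le i \le k-1\}$, truncated/interpreted in coordinates $0,\dots,2k-1$.

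Next I would argue that a codeword of $\widetilde C$ of the form $\sum_{i=0}^{k-1} a_i\, (X^i \ast g(X))$, read as a polynomial, is a left multiple $a(X)\ast g(X)$ of $g(X)$ with $\deg a < k$, hence of degree less than $2k$; conversely every left multiple of $g(X)$ of degree $< 2k$ arises this way (here is where $\deg g = k$ and monicity of $g$ matter, so that $\{g, X\ast g, \dots, X^{k-1}\ast g\}$ is an $\mathcal{R}$-basis and no wraparound modulo $X^n - 1$ occurs since the degrees stay below $2k \le n$). The Hamming weight of such a codeword as a length-$2k$ vector equals the number of nonzero coefficients of the polynomial $a(X)\ast g(X)$. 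Therefore the minimum distance of $\widetilde C$ equals the minimum Hamming weight over nonzero left multiples of $g(X)$ of degree $< 2k$. Then I would invoke the Singleton bound (Theorem~1 in the excerpt): $\widetilde C$ always satisfies $d \le 2k - k + 1 = k+1$, and $\widetilde C$ is MDS precisely when $d = k+1$. Combining: $N_g$ is MDS $\iff$ $\widetilde C$ is a $[2k,k,k+1]$ MDS code $\iff$ every nonzero left multiple of $g(X)$ of degree $< 2k$ has Hamming weight $\ge k+1$.

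The main obstacle I anticipate is the bookkeeping around the reduction modulo $X^n-1$ versus working directly with polynomials of degree $< 2k$: one must check that for $0 \le i \le k-1$ the product $X^i \ast g(X)$ has degree $i+k \le 2k-1 < n$, so reduction mod $X^n-1$ is trivial and the length-$2k$ vector obtained by padding with zeros up to coordinate $2k-1$ faithfully records the polynomial coefficients. A second, more delicate point is the non-commutativity: I must be careful that ``left multiple'' ($a(X)\ast g(X)$) is the correct side — it is, because the rows of the generator matrix $G$ are $X^i \ast g(X)$, i.e.\ $g$ multiplied on the \emph{left} by $X^i$ — and that the $\mathcal{R}$-linear combinations $\sum a_i X^i \ast g(X)$ exhaust exactly the left multiples $a(X)\ast g(X)$ with $\deg a(X) < k$. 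Once these structural facts are in place, the equivalence is a direct application of Lemma~\ref{lem3} (or Corollary~\ref{Fact1}) together with the Singleton bound, with no further computation required.
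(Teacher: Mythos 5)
Your proposal is correct and follows essentially the same route as the paper: identify the length-$2k$ code spanned by $g(X), X\ast g(X),\dots,X^{k-1}\ast g(X)$ with the set of left multiples of $g(X)$ of degree less than $2k$, observe that (up to sign, which Corollary~\ref{Fact1} renders harmless) $\big[\,N_g \mid I_k\,\big]$ generates the same code, and conclude via the weight/MDS characterization; your extra care about degree bookkeeping and the left-multiple correspondence only makes explicit what the paper leaves implicit.
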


\begin{proof} 
Consider \(g(X)=g_0+g_1X+\cdots+g_{k-1}X^{k-1}+X^k\in \mathcal{R}[X;\sigma]\) be a right divisor of $X^{n}-1$, where $n\geq 2k$. In view of (\ref{eqn5.9-GR}), we have

	$$N_g=C^{[k-1]}_g\cdot C^{[k-2]}_g\cdots C^{[1]}\cdot C_g=\begin{bmatrix}
		X^k~ \mod _*g  \\
		X^{k+1}~\ \mod _*g  \\
		\vdots\\
		X^{2k-1}~\ \mod _*g  
	\end{bmatrix}_{k\times k}.$$

    Let $C$ be the code with generator matrix
    $$\bar{G}_{k \times 2k}=\begin{bmatrix}
		g(X)\\
        X\ast g(X)\\
        \vdots\\
        X^{k-1}\ast g(X)
	\end{bmatrix}_{k\times 2k}.$$
    
    Note that the rows of $\bar{G}_{k\times 2k}$ are linearly independent over \(\mathcal{R}\). The elements of \(C\) correspond to the left multiples of \(g(X)\) of degree less than \(2k\). On account of (\ref{Equation_gen0}) and (\ref{Equation_gen}), one can see that matrix  $\big[\, C_g^{[k-1]} C_g^{[k-2]} \cdots C_g \,\big|\, I_k \big]$, also a generator matrix of the code $C$. Hence the proof. 
	   \end{proof}

It follows from Theorem~\ref{Weightthm1}, that the resulting MDS matrix is  quasi involutory if $n=2k$. This holds for any choice of automorphism $\sigma$, as long as the order of $\sigma$ divides $2k$. The following proposition makes this precise.

 \begin{proposition}\label{quasiinvolutory} Let $\sigma$ be any automorphism on $\mathcal{R}$, and let $g(X) \in \mathcal{R}[X; \sigma]$ be a monic polynomial of degree $k$ that is a right divisor of $X^{2k}-1$. Then, the matrix $N_g = C_g^{[k-1]} C_g^{[k-2]} \cdots C_g$ satisfies the relation $N_g^{[k]}N_g = I_k$.
  \end{proposition}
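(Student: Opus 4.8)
The plan is to express $N_g^{[k]}N_g$ as a single iterated product of $\sigma$-twisted companion matrices, identify that product via Lemma~\ref{companiontocyclic}, and then collapse it using the divisibility hypothesis. First, recall from (\ref{eqn5.9-GR}) that $N_g=C_g^{[k-1]}C_g^{[k-2]}\cdots C_g^{[1]}C_g$, where we write $C_g=C_g^{[0]}$. Since $\sigma^k$ is a ring automorphism of $\mathcal{R}$, applying it entrywise commutes with matrix multiplication over $\mathcal{R}$, and applying it to $C_g^{[i]}$ produces $C_g^{[i+k]}$; hence $N_g^{[k]}=C_g^{[2k-1]}C_g^{[2k-2]}\cdots C_g^{[k+1]}C_g^{[k]}$. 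Multiplying on the right by $N_g$ gives
\[
N_g^{[k]}N_g=C_g^{[2k-1]}C_g^{[2k-2]}\cdots C_g^{[k]}\cdot C_g^{[k-1]}\cdots C_g^{[1]}C_g,
\]
which is exactly the product appearing in Lemma~\ref{companiontocyclic} taken with $t=2k$. Consequently
\[
N_g^{[k]}N_g=\begin{bmatrix} X^{2k}\bmod_\ast g\\ X^{2k+1}\bmod_\ast g\\ \vdots\\ X^{3k-1}\bmod_\ast g\end{bmatrix}_{k\times k}.
\]

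The remaining step is to show that the $i$-th row of this matrix equals that of $I_k$, i.e. that $X^{2k+j}\bmod_\ast g=X^j$ for $0\le j\le k-1$. Using that $g(X)$ is a right divisor of $X^{2k}-1$, write $X^{2k}-1=q(X)\ast g(X)$ for some $q(X)\in\mathcal{R}[X;\sigma]$, so that $X^{2k}=q(X)\ast g(X)+1$. For $0\le j\le k-1$, since powers of $X$ commute and $\ast$ is associative,
\[
X^{2k+j}=X^{j}\ast X^{2k}=X^{j}\ast\bigl(q(X)\ast g(X)+1\bigr)=\bigl(X^{j}\ast q(X)\bigr)\ast g(X)+X^{j},
\]
and since $\deg(X^{j})=j<k=\deg(g)$ and $g$ is monic, uniqueness of right division (Proposition~\ref{Divisionalgo}) forces $X^{2k+j}\bmod_\ast g=X^{j}$. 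Therefore the rows of $N_g^{[k]}N_g$ are the coefficient vectors of $1,X,\dots,X^{k-1}$, which constitute the identity matrix $I_k$, establishing $N_g^{[k]}N_g=I_k$.

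The argument is essentially routine once Lemma~\ref{companiontocyclic} is in place; the only point requiring care is the reduction of $X^{2k+j}$ modulo $g$ on the right. One must multiply $q(X)\ast g(X)$ by $X^{j}$ \emph{on the left}, so that $g(X)$ is preserved as a right factor, and then appeal to uniqueness of the right-division remainder to conclude that the reduced form is $X^j$. No genuine obstacle is anticipated beyond this bookkeeping; note also that the hypothesis $|\sigma|\mid 2k$ is used implicitly only to guarantee that $X^{2k}-1$ admits $g$ as a right divisor in the relevant setting.
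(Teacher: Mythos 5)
Your proposal is correct and follows essentially the same route as the paper: apply $\sigma^k$ entrywise to get $N_g^{[k]}N_g=C_g^{[2k-1]}\cdots C_g$, identify this via Lemma~\ref{companiontocyclic} with $t=2k$ as the matrix of $X^{2k+j}\bmod_\ast g$, and collapse it to $I_k$ using $X^{2k}\equiv 1 \pmod_\ast g$. Your explicit left-multiplication by $X^j$ together with uniqueness of right division (Proposition~\ref{Divisionalgo}) merely spells out the reduction step that the paper invokes tersely as ``the condition $X^{2k}-1=0\bmod_\ast g$.''
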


  \begin{proof} 
  	Consider $N_g = C_g^{[k-1]} C_g^{[k-2]} \cdots C_g$. Now, compute \[ N_g^{[k]} = (C_g^{[k-1]} C_g^{[k-2]} \cdots C_g)^{[k]} = C_g^{[2k-1]} C_g^{[2k-2]} \cdots C_g^{[k]}. \] 
    
    The product $N_g^{[k]}N_g$ is therefore given by 
    \begin{eqnarray*}
          N_g^{[k]}N_g &=& (C_g^{[2k-1]} C_g^{[2k-2]} \cdots C_g^{[k]}) \cdot (C_g^{[k-1]} C_g^{[k-2]} \cdots C_g)\\
          &=& C_g^{[2k-1]} C_g^{[2k-2]}   \cdots C_g. 
    \end{eqnarray*}
    
    Invoking of Lemma \ref{companiontocyclic} and the condition $X^{2k}-1=0 \mod_\ast g$, we obtain
    \begin{eqnarray*}
        N_g^{[k]}N_g&=&C_g^{[2k-1]} C_g^{[2k-2]}   \cdots C_g.\\
        &=&\begin{bmatrix}
		X^{2k}~ \mod _*g  \\
		X^{2k+1}~\ \mod _*g  \\
		\vdots\\
		X^{3k-1}~\ \mod _*g  
	\end{bmatrix}_{k\times k}\\
    &=&\begin{bmatrix}
		X^{0}~ \mod _*g  \\
		X^{1}~\ \mod _*g  \\
		\vdots\\
		X^{k-1}~\ \mod _*g  
	\end{bmatrix}_{k\times k}\\
    &=&I_k.
    \end{eqnarray*}
    \end{proof}
  		
The matrix obtained in Proposition \ref{quasiinvolutory} is referred to as quasi involutory matrix.

\begin{remark} 
    Consider the Galois ring $ GR(p^s, p^{2sk})$. If we consider the automorphism $\sigma: GR(p^s, p^{2sk}) \rightarrow GR(p^s, p^{2sk})$ defined by 
    \[ \sigma\left(\sum_{i=0}^{2k-1} a_i \zeta^i\right) = \sum_{i=0}^{2k-1} a_i \zeta^{i p^2}, \] 
    where $\{a_i;~ 0\leq i \leq 2k-1\} \subset \mathbb{Z}_{p^s}$ and $\zeta$ is a primitive root of a  basic irreducible polynomial  of degree $2k$ over $\mathbb{Z}_{p^s}$, then for any monic polynomial $g(X)$ that is a right divisor of $X^{2k}-1$, the product $N_g= C^{[k-1]}_g\cdot C^{[k-2]}_g\cdots C^{[1]}_g\cdot C_g$ will be an involutory matrix. An explicit example illustrating this fact is given below.
\end{remark}

\begin{example}
Let $GR(5^2, 5^6)=\frac{\mathbb{Z}_{5^2}[Y]}{\langle 3+3Y+Y^3\rangle}=\{a_0+a_1\zeta+a_2\zeta^2;~\zeta=Y+\langle 3+3Y+Y^3\rangle, ~a_0,a_1,a_2\in \mathbb{Z}_{5^2} \}$, and define $\sigma :GR(5^2, 5^6) \longrightarrow GR(5^2, 5^6)$ by 
\(
\sigma(a_0+a_1\zeta+a_2\zeta^2) = a_0+a_1\zeta^{25}+a_2\zeta^{50}.
\) Since $X^3 + 2X^2+2X+1 \in GR(5^2,5^6)[X]$ is a right divisor of $X^6 - 1$, take
\(
g(X) = X^3 + 2X^2+2X+1.
\)
Then, the companion matrix of $g(X)$ is
\[
C_g = \begin{bmatrix}
0 & 1&0 \\
0&0&1\\
24 &23&23 
\end{bmatrix}_{3\times 3}, 
\qquad 
N_g=C_g^{[2]}\cdot C_g^{[1]}\cdot C_g = \begin{bmatrix}
24&23&23\\
2&3&2\\
23&23&24
\end{bmatrix}_{3\times 3}.
\]
It can be verified that the product 
\(N_g\cdot N_g 
= I_{3}
\) and $N_g$ is an MDS matrix. Hence, $N_g$ is an involutory MDS matrix of order 3.
\end{example}

\noindent Let $C$ be a $\sigma$-cyclic code of length $n$ over the ring $\mathcal{R}$, generated by a monic polynomial 
$g(X) \in \mathcal{R}[X; \sigma]$ which is a right divisor of $X^n - 1$. A shortened skew cyclic code can be 
constructed from $C$ by selecting only the codewords that have zero entries in a specific set of coordinate positions, 
and then deleting these positions from all selected codewords. In this context, we consider a specific type of shortened 
code of length $2k$ derived from the parent code $C$, where $n\geq 2k $. We are interested in a code whose systematic generator matrix is 
of the form $[I_k \mid M]$, where the information symbols correspond to the first $k$ positions 
$\{0, 1, \dots, k-1\}$, and the parity symbols correspond to $k$ positions starting from an offset $t$, 
$\{t, t+1, \dots, t+k-1\}$, where $t\geq k$. In this setting, the support of a polynomial 
\(
f(X) = \sum_{i=0}^n a_i X^i \in \mathcal{R}[X; \sigma],
\)
denoted by $\operatorname{supp}(f)$ (or sometimes $\mathrm{Supp}(f)$), is defined as the set of indices $i$ for which 
the coefficient $a_i$ is nonzero; that is,
\[
\operatorname{supp}(f) = \{\, i \in \mathbb{N}\cup \{0\} \mid a_i \neq 0 \text{ in } \mathcal{R} \,\}.
\]

The codewords of the shortened code under consideration can thus be identified with polynomials whose support is 
contained within the set of exponents
\[
E = \{0, 1, \dots, k-1\} \cup \{t, t+1, \dots, t+k-1\}.
\]

Following the proof of~\cite[Theorem~2]{gupta2017towards}, we establish the corresponding result in the setting of the skew polynomial ring.

\begin{theorem}\label{equivalent condition}  Let $g(X)=g_0+g_1X+\cdots+g_{k-1}X^{k-1}+X^k \in \mathcal{R}[X; \sigma]$ be a  right divisor of $X^n - 1$. For an integer $t$ such that $k \leq t \leq n-k$, define the set of exponents $E = \{0, 1, \dots, k-1\} \cup \{t, t+1, \dots, t+k-1\}$. Then, the matrix $M = C_g^{[t-1]} C_g^{[t-2]} \cdots C_g$ is an MDS matrix if and only if the Hamming weight of any non-zero polynomial of the form $f(X) = c(X)\ast g(X)$ with support contained in $E$ is greater than $k$. 
\end{theorem}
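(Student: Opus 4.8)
The plan is to follow the proof of Theorem~\ref{Weightthm1} (equivalently, the commutative argument of~\cite[Theorem~2]{gupta2017towards}), realizing $M$ as the redundant block of a systematic generator matrix of a shortened $\sigma$-cyclic code whose codewords are exactly the left multiples of $g(X)$ supported on $E$, and then reading the MDS property off the minimum distance of that code.

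First I would invoke Lemma~\ref{companiontocyclic} with the given $t$ to rewrite
\[
M = C_g^{[t-1]}C_g^{[t-2]}\cdots C_g = \begin{bmatrix} X^{t}\bmod_\ast g\\ X^{t+1}\bmod_\ast g\\ \vdots\\ X^{t+k-1}\bmod_\ast g \end{bmatrix}_{k\times k}.
\]
For each $i\in\{0,1,\dots,k-1\}$, right division of $X^{t+i}$ by the monic polynomial $g$ (Proposition~\ref{Divisionalgo}) gives $X^{t+i}=q_i(X)\ast g(X)+r_i(X)$ with $r_i(X)=X^{t+i}\bmod_\ast g$ of degree $<k$, and then $f_i(X):=X^{t+i}-r_i(X)=q_i(X)\ast g(X)$ is a nonzero left multiple of $g(X)$ whose support lies in $\{0,\dots,k-1\}\cup\{t+i\}\subseteq E$. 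Here the hypotheses $k\le t\le n-k$ are used precisely: $k\le t$ makes the two index blocks of $E$ disjoint, so $t+i\notin\{0,\dots,k-1\}$ and $|E|=2k$, while $t\le n-k$ keeps all exponents in $E$ strictly below $n$, so these polynomials are genuine codewords of $^\ast\langle g(X)\rangle$ with no reduction modulo $X^n-1$ intervening. Identifying a polynomial supported on $E$ with its length-$2k$ coordinate vector on $E$, ordered $0,\dots,k-1,\,t,\dots,t+k-1$, the vectors of $f_0,\dots,f_{k-1}$ are exactly the rows of the $k\times 2k$ matrix $\big[\,-M \mid I_k\,\big]$.

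Next I would show that $\big[\,-M\mid I_k\,\big]$ is a generator matrix of the shortened code $\widetilde{C}$ consisting of all left multiples $c(X)\ast g(X)$ with support in $E$. Its rows are $\mathcal R$-linearly independent because of the $I_k$ block, and every $\mathcal R$-combination $\sum_i b_i f_i=\big(\sum_i b_i q_i\big)\ast g$ is again such a left multiple, so its row span is contained in $\widetilde{C}$. Conversely, if $f(X)=\sum_{i<k}a_iX^{i}+\sum_{i<k}b_iX^{t+i}$ is a left multiple of $g$, then, using $X^{i}\bmod_\ast g=X^{i}$ for $i<k$, the condition $f(X)\bmod_\ast g=0$ forces $\sum_{i<k}a_iX^{i}=-\sum_{i<k}b_i r_i(X)$, i.e.\ $f=\sum_i b_i f_i$ lies in the row span. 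Hence $\widetilde{C}$ is a free $\mathcal R$-module of rank $k$ and length $2k$ with systematic generator matrix $\big[\,-M\mid I_k\,\big]$. Finally I would conclude via the matrix characterization: by Corollary~\ref{Fact1}, $M$ is an MDS matrix iff $-M$ is, iff any $k$ columns of $\big[\,-M\mid I_k\,\big]$ are $\mathcal R$-linearly independent, which by Lemma~\ref{lem3} is equivalent to $\widetilde{C}$ being an MDS $[2k,k,k+1]$ code, i.e.\ to $d(\widetilde{C})\ge k+1$, i.e.\ (by the Singleton bound) to every nonzero codeword of $\widetilde{C}$ having Hamming weight greater than $k$. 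Since the nonzero codewords of $\widetilde{C}$ are exactly the nonzero left multiples $c(X)\ast g(X)$ with support in $E$, and, the support being inside $E$ with $|E|=2k$, the weight of the codeword equals the number of nonzero coefficients of the polynomial, this is precisely the stated condition; taking $t=k$ recovers Theorem~\ref{Weightthm1}.

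The main obstacle is the bookkeeping in the middle step: verifying that ``left multiples of $g$ with support in $E$'' is exactly the row span of $\big[\,-M\mid I_k\,\big]$, and pinning down that $k\le t\le n-k$ is exactly what guarantees that the two blocks of $E$ are disjoint, that $|E|=2k$, and that no wraparound modulo $X^n-1$ occurs. Everything else is a routine transcription of the commutative argument to the non-commutative setting, where the only care needed is keeping scalars on the correct side of $\ast$ (so that $\sum_i b_i f_i$ is genuinely a left multiple of $g$).
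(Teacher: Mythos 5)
Your proposal is correct and follows essentially the same route as the paper: Lemma~\ref{companiontocyclic} realizes $M$ as the block of remainders $X^{t+i}\bmod_\ast g$, and Corollary~\ref{Fact1} together with Lemma~\ref{lem3} translates the MDS property into linear independence, with the relevant codewords identified as the nonzero left multiples of $g(X)$ supported on $E$. The only difference is that you explicitly verify that the row span of $\big[\,-M\mid I_k\,\big]$ is exactly the set of left multiples of $g$ with support in $E$ (and pin down the role of $k\le t\le n-k$), a step the paper's proof dismisses as ``easy to see,'' so your write-up is a more detailed rendition of the same argument.
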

\begin{proof}Consider \(g(X)=g_0+g_1X+\cdots+g_{k-1}X^{k-1}+X^k\in \mathcal{R}[X;\sigma]\) be a right divisor of $X^{n}-1$. Using $i=t$ in Lemma \ref{companiontocyclic}, we obtain
	$$C^{[t-1]}_g\cdot C^{[t-2]}_g\cdots C^{[1]}\cdot C_g=\begin{bmatrix}
		X^t~ \mod _*g  \\
		X^{t+1}~\ \mod _*g  \\
		\vdots\\
		X^{t+k-1}~\ \mod _*g  
	\end{bmatrix}_{k \times k}.$$
    
	By the Corollary \ref{Fact1}, the matrix $N_g$ is MDS if and only if any $k$ rows of matrix $\bar{G}=\left[ \frac{I_k}{N_g}\right]$ are linearly independent, where
	$$\bar{G}=\Big[\frac{I_{k}}{C_g^{[t-1]} C_g^{[t-2]} \cdots C_g}\Big]=\begin{bmatrix}
		1\\X\\X^2\\
		\dots\\
		X^{k-1}\\
		X^t~ \mod _*g  \\
		X^{t+1} \mod _*g  \\
		\dots\\
		X^{t+k-1} \mod _*g  
	\end{bmatrix}_{2k\times k}.$$
    
	Now, it is easy to see that the latter condition is equivalent to requiring that the weight of any nonzero left multiple of \( g(X) \) of the form \( \sum_{e \in E} f_e X^e \) is at least \( k \). Hence, this proves the theorem.
 \end{proof}
 
\begin{example}Consider the Galois ring 
\( GR(2^2, 2^8)=\frac{\mathbb{Z}_4[Y]}{\langle Y^4+Y+1 \rangle} \) endowed with Frobenius automorphism 
$ \sigma : GR(2^2, 2^8) \rightarrow GR(2^2, 2^8).$ If we consider the polynomial
\(
g(X) = X^4 + (2\alpha^3 + \alpha^2 + 3\alpha + 3)X^3 + \alpha^3 X^2 +X+ 1
\in GR(2^2, 2^8)[X; \sigma],
\) where $\alpha=Y+\langle Y^4+Y+1 \rangle$, 
then, it can be verified that the matrix
\[
M = C_g^{[3]} C_g^{[2]} C_g^{[1]} C_g
\]
is an MDS matrix of order \( 4 \).
\end{example}

\begin{proposition}\label{Prop3} Let   $g(X) = g_0 + g_1X + \cdots + g_{k-1}X^{k-1} + X^k\in \mathcal{R}[X;\sigma]$ be a right divisor of  $X^n-1$. If $g_0$ is a non-unit in $\mathcal{R}$, then for any positive integer $s$, the matrix $M = C_g^{[s-1]} C_g^{[s-2]} \cdots C_g$ is non  MDS. \end{proposition}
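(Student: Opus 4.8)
The plan is to show that when $g_0$ is a non-unit, some square submatrix of $M = C_g^{[s-1]} C_g^{[s-2]} \cdots C_g$ has a non-unit determinant, so that $M$ fails the MDS criterion recalled after the Dong et al.\ characterization. The natural place to look is a single entry, i.e.\ a $1 \times 1$ submatrix: I will argue that $M$ has a zero entry, or more precisely an entry that reduces to $0$ modulo the maximal ideal $\langle p \rangle$, which is certainly a non-unit. Recall from Lemma~\ref{companiontocyclic} that the rows of $M$ are exactly $X^{s}, X^{s+1}, \dots, X^{s+k-1}$ reduced modulo $g(X)$ (from the right). So the entries of $M$ are the coefficients of these reduced polynomials, and it suffices to exhibit one coefficient that is a non-unit.

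First I would reduce everything modulo $p$: the natural projection $\mathcal{R} \to \mathbb{F}_{p^m}$ extends to skew polynomial rings, sending $g(X)$ to $\bar{g}(X) \in \mathbb{F}_{p^m}[X;\bar\sigma]$, and it commutes with right division because the leading coefficient of $g$ is the unit $1$ (so Proposition~\ref{Divisionalgo} applies and the quotient/remainder reduce correctly). Hence $X^{s+j} \bmod_\ast g$ reduces mod $p$ to $X^{s+j} \bmod_{\bar\ast} \bar{g}$. Now the hypothesis $g_0 \in \mathcal{N}(\mathcal{R})$ (non-unit in the local ring $\mathcal{R}$ means $g_0 \in \langle p\rangle$) gives $\bar{g}_0 = 0$, so $X$ divides $\bar{g}(X)$ from the right in $\mathbb{F}_{p^m}[X;\bar\sigma]$: indeed $\bar g(X) = \big(\bar g_1 + \bar g_2 X + \cdots + \bar g_{k-1}X^{k-2} + X^{k-1}\big)\ast X$ using the multiplication rule, since $X$ has $\bar g_0 = 0$ constant term. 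Write $\bar g(X) = h(X) \ast X$ with $h$ monic of degree $k-1$.

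The key structural consequence is that $X$ right-divides $X^{s+j} \bmod_{\bar\ast}\bar g$ for every $j$ with $0 \le j \le k-1$, as long as $s \ge 1$. Indeed $X^{s+j} = X^{s+j-1}\ast X$ and $\bar g(X) = h(X)\ast X$, so performing right division of $X^{s+j}$ by $\bar g$ produces a remainder that is still a right multiple of $X$: more carefully, I would argue that the right ideal generated by $X$ contains both $X^{s+j}$ (since $s+j\ge 1$) and $\bar g$, hence contains the remainder $r(X) = X^{s+j} - q(X)\ast\bar g(X)$; but a polynomial of degree $< k$ lying in $\bar g$'s right-ideal "$X$-part"... — cleaner: since $X^{s+j}$ and $\bar g$ are both right-divisible by $X$, write $X^{s+j} = A(X)\ast X$ and $\bar g = h(X)\ast X$; dividing $A$ by $h$ on the right, $A = Q\ast h + R$ with $\deg R < k-1$, gives $X^{s+j} = Q\ast\bar g + R\ast X$, and $R\ast X$ has degree $< k$, so it is the remainder. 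Therefore $X^{s+j}\bmod_{\bar\ast}\bar g = R\ast X$ has zero constant coefficient for all $j$. This means the first column of $\bar M$ is entirely zero, so the first column of $M$ consists of non-units; in particular the $(1,1)$ entry of $M$ is a non-unit, and by the determinant criterion $M$ is not MDS.

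The main obstacle is getting the right-division bookkeeping right in the non-commutative setting: one must be careful that "$X$ right-divides $f$" is compatible with taking remainders mod $\bar g$, and that the reduction mod $p$ genuinely commutes with $\bmod_\ast g$ — this is where the unit leading coefficient of $g$ (guaranteeing uniqueness and hence functoriality of right division via Proposition~\ref{Divisionalgo}) is essential. A secondary point to state carefully is that "non-unit in $\mathcal{R}$" is equivalent to "lies in $\langle p\rangle$" because $\mathcal{R} = GR(p^s,p^{sm})$ is local, so $\bar g_0 = 0$ is exactly the needed conclusion. Once these two facts are nailed down, the rest is the short factorization argument above, and no case analysis on $s$ beyond $s \ge 1$ is needed.
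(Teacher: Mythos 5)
Your proof is correct, but it takes a genuinely different and noticeably heavier route than the paper's. The paper's proof is essentially a one-line determinant argument: $\det(C_g)=\pm g_0$, and since $\mathcal{R}$ is local the non-unit $g_0$ lies in $\langle p\rangle$ and is nilpotent, so $\det(C_g)\in\mathcal{N}(\mathcal{R})$; as $\sigma$ preserves non-units, each $\det(C_g^{[i]})=\pm\sigma^i(g_0)$ is also a non-unit, and by multiplicativity $\det(M)$ is a non-unit, so the full $k\times k$ submatrix already violates the submatrix-determinant criterion for MDS. You instead reduce modulo $p$ and show that every row $X^{s+j}\bmod_\ast g$ of $M$ has constant coefficient in $\langle p\rangle$, i.e.\ the entire first column of $M$ consists of non-units, so already a $1\times 1$ minor fails. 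Your route is sound: the projection extends to a ring homomorphism onto $\mathbb{F}_{p^m}[X;\bar\sigma]$ (note $\sigma(p)=p$, so $\bar\sigma$ is well defined), uniqueness of right division by the monic $\bar g$ makes reduction commute with taking remainders, and the factorizations $\bar g=h\ast X$ and $X^{s+j}=X^{s+j-1}\ast X$ force the remainder to be of the form $R\ast X$, hence with zero constant term. What your argument buys is a slightly stronger structural conclusion (a whole column of $M$ inside the maximal ideal, rather than just a non-unit determinant of the full matrix); what it costs is the reduction-mod-$p$ and right-division bookkeeping, which the paper avoids entirely by passing to the determinant of the product of companion matrices.
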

\begin{proof}
Consider $g(X)=g_0+g_1X+g_2X^2+\cdots+g_{k}+X^k\in \mathcal{R}[X;\sigma]$ such that $g_0$ is non-unit. This implies that $\det(C_g)\in \mathcal{N}(\mathcal{R})$. Hence, the $C_g^{[s-1]} C_g^{[s-2]} \cdots C_g$ is non MDS.
\end{proof}
Next, we turn to the construction of quasi-recursive MDS matrices using the right roots of the polynomial $g(X)=g_0+g_1X+g_2X^2+\cdots+g_{k-1}X^{k-1}+X^k$. Vandermonde matrices play an important role in this context, as they provide a natural framework for analyzing polynomial roots and understanding how MDS matrices can be obtained. Motivated by this relation, we now recall the definition of Vandermonde matrices in the setting of skew polynomial rings, 
particularly over the skew polynomial ring
\(
GR(p^s, p^{sm})[X;\sigma],
\) 
as introduced by Boulagouaz and Leroy~\cite[Definition~4]{Boulagouaz2013codes}.

\begin{definition}\label{Def01}
	\begin{itemize}
		\item[\textnormal{(i)}]  A monic polynomial \( g(X) \in  \mathcal{R}[X; \sigma] \) of degree \( k \) is a Wedderburn polnomial 
		if there exist elements \( a_1, \dots, a_k \in \mathcal{R} \) such that  
		\[
		\mathcal{R} g(X) = \bigcap_{i=1}^k \mathcal{R}(X - a_i).
		\]  
		These polynomials will be termed as W-polynomials.
		
		\item[\textnormal{(ii)}] The \( k \times n \) generalized Vandermonde matrix defined by \( a_1, \dots, a_k \) is given by
		$$
		V^\sigma_n(a_1, \dots, a_k) =\begin{bmatrix}
1 & a_1 & \mathcal{N}^\sigma_2(a_1) & \cdots & \mathcal{N}^\sigma_{n-1}(a_1) \\
1 & a_2 & \mathcal{N}^\sigma_2(a_2) & \cdots & \mathcal{N}^\sigma_{n-1}(a_2) \\
\vdots & \vdots & \vdots & \ddots & \vdots \\
1 & a_k & \mathcal{N}^\sigma_2(a_k) & \cdots & \mathcal{N}^\sigma_{n-1}(a_k)
\end{bmatrix}_{k\times n}.
$$
	\end{itemize}
	
\end{definition}

Let $V^\sigma_n(a_1, a_2, \dots, a_k)$ denote the generalized Vandermonde matrix.  
We write $V^\sigma_n(a_1, a_2, \dots, a_k; Z)$ for the submatrix obtained by selecting the columns indexed by $Z = \{i_1, i_2, \dots, i_k\}$.  
For the special case \( n = k \) and the identity automorphism \( \sigma = \mathrm{id} \) on \( \mathcal{R} \),  
the generalized Vandermonde matrix \( V^\sigma_n(a_1, a_2, \dots, a_k) \) reduces to the classical Vandermonde matrix defined below.

\begin{definition}
Let \( a_1,a_2,\dots,a_k \in GR(p^{s}, p^{sml}) \). Then, the classical Vandermonde matrix over \( \mathcal{R} \) is defined as
\[
V^{\mathrm{id}}_k(a_1, a_2, \dots, a_k) =
\begin{bmatrix}
1 & a_1 & a_1^2 & \cdots & a_1^{k-1} \\
1 & a_2 & a_2^2 & \cdots & a_2^{k-1} \\
\vdots & \vdots & \vdots & \ddots & \vdots \\
1 & a_k & a_k^2 & \cdots & a_k^{k-1}
\end{bmatrix}_{k\times k},
\]

and its determinant can be expressed as follows:
\[\det\bigl(V^{\mathrm{id}}_k(a_1, a_2, \dots, a_k))\bigr) = \prod_{1\leq i < j \leq k} (a_j - a_i).
	\]
\end{definition}

To further characterize the generalized Vandermonde matrix \( V^\sigma_n(a_1, a_2, \dots, a_k) \), we consider an index set 
\( T = \{t_{1}, t_{2}, \dots, t_{k}\} \subset \mathbb{Z} \) with 
\( 0 \leq t_{1} < t_{2} < \cdots < t_{k} \).

\begin{definition}
Let \( a_1,a_2,\dots,a_k \in GR(p^{s}, p^{sml}) \) and 
\( T = \{t_{1}, t_{2}, \ldots, t_{k}\} \subset \mathbb{Z} \) with 
\( 0 \leq t_{1} < t_{2} < \cdots < t_{k} \). 
Then, the matrix
\[
V^{\mathrm{id}}_n(a_1, a_2, \dots, a_k;~ T) =
\begin{bmatrix}
a_{1}^{t_{1}} & a_{1}^{t_{2}} & \cdots & a_{1}^{t_{k}} \\
a_{2}^{t_{1}} & a_{2}^{t_{2}} & \cdots & a_{2}^{t_{k}} \\
\vdots & \vdots & \ddots & \vdots \\
a_{k}^{t_{1}} & a_{k}^{t_{2}} & \cdots & a_{k}^{t_{k}}
\end{bmatrix}_{k \times k}
\]

is called a generalized Vandermonde matrix with respect to \( T \).
\end{definition}

\begin{remark}
We observe that if we take $T=\{0,1,2,\dots,k-1\}$, then the matrix reduces to a  classical Vandermonde matrix $V^{\mathrm{id}}_k(a_1, a_2, \dots, a_k)$.
\end{remark}

As discussed in \cite{GuptanearMDS}, the authors show that the determinants corresponding to the above three cases can be computed in terms of the determinant of the Vandermonde matrix.

\begin{corollary}\cite[Corollary 3]{GuptanearMDS}\label{Corol3}
Let $T = \{0, 1, 2, \ldots, k-2, k\}$. Then, 
\[
\det\!\left(V^{\mathrm{id}}_n(a_1, a_2, \dots, a_k;~ T)\right) 
= \det\!\left(V^{\mathrm{id}}_k(a_1, a_2, \dots, a_k)\right) \left(\sum_{i=1}^{k} a_{i}\right).
\]
\end{corollary}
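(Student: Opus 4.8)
\noindent The plan is to regard the claimed equality as a universal polynomial identity in indeterminates $a_1,\dots,a_k$ and then specialise. I would view both sides as elements of $\mathbb{Z}[a_1,\dots,a_k]$: by the Leibniz expansion every term of $D:=\det\bigl(V^{\mathrm{id}}_n(a_1,\dots,a_k;T)\bigr)$ has total degree $\sum_{t\in T}t=\binom{k-1}{2}+k$, so $D$ is homogeneous of that degree and depends only on $T$ (not on $n$); the right-hand side $\bigl(\prod_{1\le i<j\le k}(a_j-a_i)\bigr)\bigl(\sum_{i=1}^{k}a_i\bigr)$ is homogeneous of degree $\binom{k}{2}+1$, and $\binom{k-1}{2}+k=\binom{k}{2}+1$, so the degrees match. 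Once the identity is proved in $\mathbb{Z}[a_1,\dots,a_k]$, applying the evaluation homomorphism $a_i\mapsto$ (arbitrary elements of $GR(p^s,p^{sml})$) yields the corollary; this is precisely why I pass through $\mathbb{Z}[a_1,\dots,a_k]$, since there all the structural arguments below take place in an integral domain (indeed a UFD), whereas $GR(p^s,p^{sml})$ itself is not a domain.

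\noindent Inside $\mathbb{Z}[a_1,\dots,a_k]$ I would first establish the divisibility $\det\bigl(V^{\mathrm{id}}_k(a_1,\dots,a_k)\bigr)=\prod_{i<j}(a_j-a_i)\mid D$. Interchanging $a_i$ and $a_j$ swaps rows $i$ and $j$ of $V^{\mathrm{id}}_n(a_1,\dots,a_k;T)$, so $D$ is alternating; hence $D$ vanishes on $a_i=a_j$ for every pair $i\neq j$, so each of the pairwise non-associate primes $a_j-a_i$ divides $D$, and therefore so does their product. Write $D=\bigl(\prod_{i<j}(a_j-a_i)\bigr)\,q(a_1,\dots,a_k)$. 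Comparing degrees gives $\deg q=\bigl(\binom{k-1}{2}+k\bigr)-\binom{k}{2}=1$, and since $q$ is the quotient of two alternating polynomials in the domain $\mathbb{Z}[a_1,\dots,a_k]$ it is symmetric. A symmetric homogeneous polynomial of degree $1$ must equal $c\sum_{i=1}^{k}a_i$ for some integer $c$, so only $c=1$ remains.

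\noindent To pin down $c$ I would compare the coefficient of the monomial $a_1^{0}a_2^{1}\cdots a_{k-1}^{k-2}a_k^{k}$ on both sides of $D=c\bigl(\prod_{i<j}(a_j-a_i)\bigr)\bigl(\sum_{i}a_i\bigr)$. In the Leibniz expansion of $D$ this monomial occurs only for the identity permutation (the exponent $k=\max T$ is attained at a single column, which forces the factor $a_k^{k}$, after which the remaining distinct exponents $0,1,\dots,k-2$ force the rest), so its coefficient on the left is $+1$. On the right, the monomial must be split as a monomial of $\prod_{i<j}(a_j-a_i)=\det V^{\mathrm{id}}_k$ times a variable $a_\ell$ of $\sum_i a_i$, and the only admissible choice is $\bigl(\prod_{i=1}^{k}a_i^{\,i-1}\bigr)\cdot a_k$, where the coefficient of $\prod_i a_i^{\,i-1}$ in $\det V^{\mathrm{id}}_k$ is $+1$; hence the right-hand coefficient is $c$, forcing $c=1$. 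Specialising back then gives the corollary. Two shortcuts are available for this last step: one may recognise $D/\det V^{\mathrm{id}}_k$ as the Schur function attached to the partition $(1,0,\dots,0)$, namely $\sum_i a_i$, since $T$ listed in decreasing order equals $(k-1,k-2,\dots,1,0)+(1,0,\dots,0)$; or one may induct on $k$ by setting $a_1=0$, which collapses the size-$k$ $T$-Vandermonde to $a_2\cdots a_k$ times the size-$(k-1)$ $T'$-Vandermonde with $T'=\{0,1,\dots,k-3,k-1\}$. I expect the only genuine obstacle to be clerical: keeping the sign and normalisation conventions of the Vandermonde determinant consistent so that the constant comes out exactly $+1$, and stating the reduction to $\mathbb{Z}[a_1,\dots,a_k]$ followed by specialisation carefully enough that the non-domain nature of $GR(p^s,p^{sml})$ causes no difficulty.
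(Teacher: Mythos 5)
Your proposal is correct, and it is worth noting that the paper itself offers no proof of this statement: Corollary~\ref{Corol3} is imported verbatim from Gupta et al.\ \cite{GuptanearMDS}, where the identity is obtained by the classical generalized-Vandermonde computation (expansion of the determinant with exponent set $T$ in terms of the ordinary Vandermonde determinant times a Schur/elementary symmetric polynomial, here $s_{(1)}=e_1=\sum_i a_i$). Your argument reaches the same identity by a self-contained route: homogeneity and degree count, divisibility of the alternating determinant $D$ by $\prod_{i<j}(a_j-a_i)$ in the UFD $\mathbb{Z}[a_1,\dots,a_k]$, symmetry and degree one of the quotient, and normalisation $c=1$ via the coefficient of $a_1^{0}a_2^{1}\cdots a_{k-1}^{k-2}a_k^{k}$, which is pinned down correctly on both sides since distinct permutations in either Leibniz expansion produce distinct monomials. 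The one step you could simplify is the divisibility argument: when $a_i=a_j$ two rows of $V^{\mathrm{id}}_n(a_1,\dots,a_k;T)$ coincide, so $D$ vanishes outright and you need not pass through the alternating-sign relation $D=-D$. Your insistence on proving the identity in $\mathbb{Z}[a_1,\dots,a_k]$ and then specialising through the evaluation homomorphism is not pedantry but exactly what makes the statement usable in this paper, where it is applied (e.g.\ in Theorem~\ref{Theorem 9}) with $a_i$ taken in the Galois ring $GR(p^s,p^{sne})$, which is not a domain; the cited source states the result over fields, so your reduction is the cleanest justification that the formula survives in the ring setting, and the two shortcuts you mention (the bialternant identification of $D/\det V^{\mathrm{id}}_k$ with $s_{(1)}$, or induction via $a_1=0$) are both consistent with the conventions used here.
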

\begin{corollary}\cite[Corollary 5]{GuptanearMDS}\label{Corol5}
Let $T = \{0, 2, 3, \ldots, k-1, k+1\}.$
Then,  the determinant of $V^{\mathrm{id}}_n(a_1, a_2, \dots, a_k; T)$ is  given by 
\[
\det\!\left(V^{\mathrm{id}}_n(a_1, a_2, \dots, a_k;~ T_3)\right) 
= \det\!\left(V^{\mathrm{id}}_k(a_1, a_2, \dots, a_k)\right) 
\left(\prod_{i=1}^{k} a_{i}\right)
\Bigg[\sum_{i=1}^{k}a_i \sum_{i=1}^{k}a^{-1}_i-1\Bigg].
\]
\end{corollary}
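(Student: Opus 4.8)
The plan is to reduce the claim to a polynomial identity in $a_1,\dots,a_k$ that holds over any commutative ring, and then to prove that identity by elementary column operations on the generalized Vandermonde matrix. Write $e_1,\dots,e_k$ for the elementary symmetric polynomials in $a_1,\dots,a_k$, so that $e_1=\sum_i a_i$, $e_k=\prod_i a_i$, and (under the standing hypothesis that the $a_i$ are units, which the appearance of $a_i^{-1}$ presupposes) $e_{k-1}=\sum_i\prod_{j\neq i}a_j=\bigl(\prod_i a_i\bigr)\bigl(\sum_i a_i^{-1}\bigr)$. Since then $e_1e_{k-1}-e_k=\prod_i a_i\bigl(\sum_i a_i\sum_i a_i^{-1}-1\bigr)$, it suffices to establish
\[
\det\!\bigl(V^{\mathrm{id}}_n(a_1,\dots,a_k;\,T)\bigr)=(e_1e_{k-1}-e_k)\,\det\!\bigl(V^{\mathrm{id}}_k(a_1,\dots,a_k)\bigr),
\]
and then to substitute $\det\!\bigl(V^{\mathrm{id}}_k(a_1,\dots,a_k)\bigr)=\prod_{i<j}(a_j-a_i)$.

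The tool I would use is the polynomial $P(x)=\prod_{i=1}^k(x-a_i)=x^k-e_1x^{k-1}+\cdots+(-1)^ke_k$. Each $a_i$ is a root of $P$, so $a_i^{k+1}$ equals its own remainder modulo $P$. Starting from $x^k\equiv e_1x^{k-1}-e_2x^{k-2}+\cdots+(-1)^{k-1}e_k\pmod{P}$ and reducing $x\cdot x^k$ once more by the same relation gives
\[
x^{k+1}\equiv\sum_{j=1}^{k}(-1)^{j-1}\bigl(e_1e_j-e_{j+1}\bigr)x^{k-j}\pmod{P(x)},\qquad e_{k+1}:=0,
\]
and the only coefficient I will actually need from this is that of $x^1$, namely $(-1)^{k-2}(e_1e_{k-1}-e_k)$.

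Then I would run the column operations. Label the columns of $V^{\mathrm{id}}_n(a_1,\dots,a_k;T)$ by the exponents in $T=\{0,2,3,\dots,k-1\}\cup\{k+1\}$ and write $\mathbf{v}_\ell=(a_i^\ell)_{1\le i\le k}$ for the column attached to exponent $\ell$. The displayed congruence gives $\mathbf{v}_{k+1}=\sum_{\ell=0}^{k-1}c_\ell\mathbf{v}_\ell$ with $c_1=(-1)^{k-2}(e_1e_{k-1}-e_k)$. Subtracting from the column $\mathbf{v}_{k+1}$ the combination $\sum_{\ell\in\{0,2,3,\dots,k-1\}}c_\ell\mathbf{v}_\ell$ of the remaining columns is determinant-preserving and reduces that column to $c_1\mathbf{v}_1$, since $1$ is the only exponent of $\{0,1,\dots,k-1\}$ missing from $T$. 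Pulling out the scalar $c_1$ leaves the matrix with columns $\mathbf{v}_0,\mathbf{v}_2,\mathbf{v}_3,\dots,\mathbf{v}_{k-1},\mathbf{v}_1$; a cyclic permutation of its last $k-1$ columns, which has sign $(-1)^{k-2}$, puts the columns into the standard order $\mathbf{v}_0,\mathbf{v}_1,\dots,\mathbf{v}_{k-1}$, i.e.\ into $V^{\mathrm{id}}_k(a_1,\dots,a_k)$. Collecting the two sign factors, $\det(V^{\mathrm{id}}_n(a;T))=c_1(-1)^{k-2}\det(V^{\mathrm{id}}_k(a))=(e_1e_{k-1}-e_k)\det(V^{\mathrm{id}}_k(a))$, which is the identity above. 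The degenerate case $k=2$ (empty block $\{2,\dots,k-1\}$, trivial permutation) is covered by the same computation.

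The conceptual content is light: one simply reduces the stray high exponent $k+1$ modulo the minimal polynomial of the $a_i$, which pulls it back into the admissible range. The only places an error could creep in are the one-line reduction of $x^{k+1}$ modulo $P$ and the bookkeeping for the sign of the column permutation, so that is the main obstacle only in the sense of being easy to get slightly wrong. An alternative route is the \emph{bialternant} formula, which identifies $\det(V^{\mathrm{id}}_n(a;T))/\det(V^{\mathrm{id}}_k(a))$ with the Schur polynomial $s_\lambda$ for the partition $\lambda=(2,1^{k-2})$; its conjugate is $(k-1,1)$, so the dual Jacobi--Trudi identity gives $s_\lambda=e_1e_{k-1}-e_k$ at once, and the column-operation argument above is exactly this statement made self-contained.
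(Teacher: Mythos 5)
Your proof is correct. Note first that this paper does not prove the statement at all: it is quoted (including the stray notation $V^{\mathrm{id}}_n(a_1,\dots,a_k;\,T_3)$, which should just read $T$) from \cite{GuptanearMDS}, so the comparison is with that source rather than with an argument given here. Your reduction is sound at every step: the rewriting $e_1e_{k-1}-e_k=\bigl(\prod_i a_i\bigr)\bigl(\sum_i a_i\sum_i a_i^{-1}-1\bigr)$ is exactly right once the $a_i$ are units (which is the situation in which the corollary is used in this paper, the $a_i$ being powers of a primitive unit); the reduction of $x^{k+1}$ modulo $\prod_i(x-a_i)$ does give $(-1)^{k-2}(e_1e_{k-1}-e_k)$ as the coefficient of $x$; the column operation is determinant-preserving over any commutative ring; and the sign $(-1)^{k-2}$ from that coefficient cancels against the sign of the cyclic shift of the last $k-1$ columns. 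Spot checks confirm it: for $k=2$ one gets $a_2^3-a_1^3=(a_2-a_1)(a_1^2+a_1a_2+a_2^2)$, matching the formula, and the bialternant cross-check ($\lambda=(2,1^{k-2})$, $\lambda'=(k-1,1)$, dual Jacobi--Trudi giving $e_1e_{k-1}-e_k$) is also consistent. The cited source derives the same formula by direct manipulation of the generalized Vandermonde determinant; your modular-reduction plus column-operation argument is an equally valid, self-contained derivation, with the small extra merit of exhibiting the identity $\det V^{\mathrm{id}}_n(a;T)=(e_1e_{k-1}-e_k)\det V^{\mathrm{id}}_k(a)$ over an arbitrary commutative ring before the unit hypothesis is invoked to put it in the stated form.
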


%
%
Motivated by the relationship between polynomial roots and generalized Vandermonde matrices, we adapt the result of Boulagouaz and Leroy~\cite[Proposition~4]{Boulagouaz2013codes} to the framework of $\sigma$-cyclic codes over Galois rings.

\begin{proposition}\label{Pro005}
Let   $ g(X) \in \mathcal{R}[X;\sigma]$ be monic polynomials of degree $k$. Suppose that $g(X)$ is a $W$-polynomial with $X^n-1 \in ~^\ast\langle g(X)\rangle$, and  $C$ be the $\sigma$- cyclic code of length $n$ corresponding 
to the left cyclic $\mathcal{R}$-module $\mathcal{R} g(X)/\mathcal{R} (X^n-1)$. Let $a_1, a_2, \dots, a_k \in \mathcal{R}$ be such that 
\[
\mathcal{R} g(X) = \bigcap_{i=1}^k \mathcal{R}(X-a_i).
\] 

Then, for a codeword $(c_0, c_1, \dots, c_{n-1}) \in \mathcal{R}^n$, we have 
\[
(c_0, c_1, \dots, c_{n-1}) \in \mathcal{C} 
\quad \Longleftrightarrow \quad 
(c_0, c_1, \dots, c_{n-1}) V_n^\sigma(a_1, a_2, \dots, a_k)^T= \textbf{0}.
\]
\end{proposition}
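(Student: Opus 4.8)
The plan is to show the equivalence by using the characterization of the code $C$ as the left multiples of $g(X)$ together with the right-root structure guaranteed by the $W$-polynomial hypothesis. First I would observe that a codeword $(c_0,c_1,\dots,c_{n-1})$ corresponds, via the identification of $\mathcal{R}^n$ with $\mathcal{R}[X;\sigma]/\langle X^n-1\rangle$, to the polynomial $c(X)=\sum_{i=0}^{n-1}c_iX^i$, and that $(c_0,\dots,c_{n-1})\in C$ iff $c(X)\in\mathcal{R}g(X)\pmod{X^n-1}$, i.e.\ iff $g(X)$ right-divides $c(X)$ (using that $X^n-1\in{}^\ast\langle g(X)\rangle$ so representatives can be chosen consistently). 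Next I would invoke the $W$-polynomial condition $\mathcal{R}g(X)=\bigcap_{i=1}^k\mathcal{R}(X-a_i)$: this says precisely that $g(X)$ right-divides $c(X)$ if and only if each $(X-a_i)$ right-divides $c(X)$, which by the definition of right evaluation (the remainder of right division by $X-a_i$) is equivalent to $c(a_i)=0$ for all $i=1,\dots,k$.

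Then the main computational step is to translate the conditions $c(a_i)=0$ into the matrix equation. Using the closed form $c(a_i)=\sum_{j=0}^{n-1}c_j\,\mathcal{N}^\sigma_j(a_i)$ from \eqref{EQN1}, and reading off the $i$-th row of $V_n^\sigma(a_1,\dots,a_k)$ as $(1,a_i,\mathcal{N}^\sigma_2(a_i),\dots,\mathcal{N}^\sigma_{n-1}(a_i))=(\mathcal{N}^\sigma_0(a_i),\dots,\mathcal{N}^\sigma_{n-1}(a_i))$, I see that $c(a_i)$ is exactly the $i$-th entry of the vector $(c_0,c_1,\dots,c_{n-1})\,V_n^\sigma(a_1,\dots,a_k)^T$. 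Hence the system $\{c(a_1)=0,\dots,c(a_k)=0\}$ is the single vector equation $(c_0,\dots,c_{n-1})V_n^\sigma(a_1,\dots,a_k)^T=\mathbf{0}$, and chaining the three equivalences gives the claim.

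The step I expect to be the main obstacle is the careful justification of the first equivalence — that membership in the $\sigma$-cyclic code $C$ (a left ideal in the quotient ring) is the same as $g(X)$ right-dividing the polynomial representative $c(X)$ in $\mathcal{R}[X;\sigma]$ itself, rather than merely modulo $X^n-1$. This requires using that $X^n-1$ lies in ${}^\ast\langle g(X)\rangle=\mathcal{R}g(X)$ (so that $g(X)$ right-divides $X^n-1$, by Theorem~\ref{thm:principal-sigma-cyclic} the code is free and $g$ is a genuine right divisor), which lets me pass between the representative of degree $<n$ and any other representative of the same coset without affecting right-divisibility by $g(X)$. I would also need to be slightly careful that the right-evaluation characterization "$(X-a_i)$ right-divides $c(X)$ iff $c(a_i)=0$" is exactly the definition given in the excerpt, so no extra work is needed there. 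The rest is bookkeeping: matching indices in the Vandermonde matrix and invoking the definition of $\mathcal{N}^\sigma_i$.
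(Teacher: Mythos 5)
Your proposal is correct. Note that the paper itself gives no proof of Proposition~\ref{Pro005}: it is stated as an adaptation of Boulagouaz--Leroy \cite[Proposition~4]{Boulagouaz2013codes}, so there is no in-paper argument to compare against; your chain of equivalences (membership in $C$ $\Leftrightarrow$ right divisibility of the degree-$<n$ representative by $g(X)$, using $X^n-1\in\mathcal{R}[X;\sigma]g(X)$ to move between coset representatives; then the $W$-polynomial intersection $\mathcal{R}g(X)=\bigcap_i\mathcal{R}(X-a_i)$ to reduce to right divisibility by each $X-a_i$; then the closed form $c(a_i)=\sum_j c_j\,\mathcal{N}^\sigma_j(a_i)$ from \eqref{EQN1} to read off the rows of $V_n^\sigma(a_1,\dots,a_k)$) is exactly the intended reasoning, and you correctly identify and handle the only delicate point, namely that right divisibility by $g$ is well defined on cosets because $g$ genuinely right-divides $X^n-1$ (which is central under the standing assumption $|\sigma|\mid n$).
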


 Now, we give an analogous result of the above proposition  in the framework of our Theorem \ref{equivalent condition}.
 \begin{theorem}
	Let  $g( X )\in \mathcal{R}[X;\sigma]$ be W-polynomial of degree $k$ such that 	\(
\mathcal{R} g(X) = \bigcap_{i=1}^k \mathcal{R}(X - a_i)
\), where  $a_1,a_2,\dots ,a_{k}$ are distinct elements, and let \( C \) be the code generated by $[-C^{[t-1]}_g\cdots C^{[1]}_g\cdot C_g|I_k]$ for some $t$, $k\leq t\leq n-k.$ Then, a vector $(f_0,f_1,\dots,f_{k-1},f_t,\dots,f_{t+k-1})\in \mathcal{R}^{2k}$ is an element of $C$ iff $$[f_0,f_1,\dots,f_{k-1},f_t,\dots,f_{t+k-1}]{\begin{bmatrix}
1&\mathcal{N}^\sigma_1(a_1)&\cdots&\mathcal{N}^\sigma_{k-1}(a_1)&\mathcal{N}^\sigma_t(a_1)&\cdots&\mathcal{N}^\sigma_{t+k-1}(a_1)\\
1&\mathcal{N}^\sigma_2(a_1)&\cdots&\mathcal{N}^\sigma_{k-1}(a_2)&\mathcal{N}^\sigma_t(a_2)&\cdots&\mathcal{N}^\sigma_{t+k-1}(a_2)\\
\vdots&\vdots&\ddots&\vdots&\vdots&\ddots&\vdots\\
1&\mathcal{N}^\sigma_2(a_k)&\cdots&\mathcal{N}^\sigma_{k-1}(a_k)&\mathcal{N}^\sigma_t(a_k)&\cdots&\mathcal{N}^\sigma_{t+k-1}(a_k)\\
\end{bmatrix}^T_{k\times 2k}}=\textbf{0}.$$  
 \end{theorem}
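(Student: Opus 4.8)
The plan is to prove the stated equivalence by chaining three facts: (a) the code $C$ generated by $[-C_g^{[t-1]}\cdots C_g\mid I_k]$ is exactly the set of coefficient vectors of left multiples of $g(X)$ supported on $E=\{0,1,\dots,k-1\}\cup\{t,t+1,\dots,t+k-1\}$; (b) since $g(X)$ is a $W$-polynomial with $\mathcal{R}g(X)=\bigcap_{i=1}^k\mathcal{R}(X-a_i)$, such a polynomial is a left multiple of $g(X)$ iff each $X-a_i$ is a right divisor of it; and (c) by the right-evaluation formula~\eqref{EQN1}, $X-a_i$ right-divides $f(X)=\sum_{e\in E}f_eX^e$ iff $f(a_i)=\sum_{e\in E}f_e\,\mathcal{N}^\sigma_e(a_i)=0$. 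This is the shortened-support analogue of Proposition~\ref{Pro005}, with Theorem~\ref{equivalent condition} supplying step (a).

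First I would establish (a). Since the leading coefficient of $g(X)$ is a unit, right division by $g(X)$ (Proposition~\ref{Divisionalgo}) gives, for each $j\in\{0,\dots,k-1\}$, a decomposition $X^{t+j}=q_j(X)\ast g(X)+r_j(X)$ with $r_j(X)=X^{t+j}\bmod_\ast g$ of degree $<k$; hence $X^{t+j}-r_j(X)$ is a left multiple of $g(X)$. Using Lemma~\ref{companiontocyclic} with $i=t$, the rows of $C_g^{[t-1]}\cdots C_g$ are precisely the coefficient tuples of $r_0(X),\dots,r_{k-1}(X)$, so the codeword attached to a message $(m_0,\dots,m_{k-1})$ is the coefficient vector of $\sum_j m_j\bigl(X^{t+j}-r_j(X)\bigr)$, which has support in $E$. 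Conversely, if $f(X)=\sum_{e\in E}f_eX^e$ is any left multiple of $g(X)$, then setting $m_j=f_{t+j}$ the difference $f(X)-\sum_j m_j\bigl(X^{t+j}-r_j(X)\bigr)$ is a left multiple of $g(X)$ of degree $<k$, hence $0$ because $g(X)$ is monic of degree $k$; thus $(f_e)_{e\in E}$ is a codeword. This is the identification already used in the proof of Theorem~\ref{equivalent condition}.

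Then I would apply (b) and (c). By the $W$-polynomial hypothesis, $f(X)$ is a left multiple of $g(X)$ iff $f(X)\in\mathcal{R}(X-a_i)$ for every $i$, i.e.\ iff $X-a_i$ right-divides $f(X)$ for all $i$; and the root condition~\eqref{EQN1} turns this into $f(a_i)=\sum_{e\in E}f_e\,\mathcal{N}^\sigma_e(a_i)=0$ for $i=1,\dots,k$. Finally, since $\mathcal{N}^\sigma_0(a_i)=1$ and $\mathcal{N}^\sigma_1(a_i)=a_i$, this system of $k$ scalar equations is exactly
\[
[f_0,\dots,f_{k-1},f_t,\dots,f_{t+k-1}]\;V^\sigma_n(a_1,\dots,a_k;E)^{T}=\textbf{0},
\]
where $V^\sigma_n(a_1,\dots,a_k;E)$ is the $k\times 2k$ generalized Vandermonde matrix whose $i$-th row lists $\mathcal{N}^\sigma_e(a_i)$ for $e$ ranging over $E$ in increasing order, i.e.\ the matrix displayed in the statement. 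Composing the equivalences of (a), (b), (c) yields the theorem.

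I do not anticipate a serious obstacle: the argument is a localization of Proposition~\ref{Pro005} to the coordinate set $E$. The only point needing care is step (a) — verifying that the row space of the systematic generator $[-C_g^{[t-1]}\cdots C_g\mid I_k]$ coincides with the coefficient vectors of left multiples of $g(X)$ supported on $E$ and that, in particular, no nonzero left multiple of degree $<k$ can occur; this is exactly where monicity of $g(X)$ and Lemma~\ref{companiontocyclic} are used. The distinctness of $a_1,\dots,a_k$ plays no role in the equivalence itself (it matters only later, for the nonsingularity of Vandermonde-type submatrices).
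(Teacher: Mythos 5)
Your proof is correct and follows exactly the route the paper intends: the paper states this theorem without a written proof, presenting it as the $E$-supported analogue of Proposition~\ref{Pro005} obtained via the codeword/left-multiple identification underlying Theorem~\ref{equivalent condition}, which is precisely the chain (a)--(c) you carry out, with (b)--(c) given by the $W$-polynomial hypothesis and the right-root criterion~(\ref{EQN1}). The one substantive verification you correctly isolate --- that the systematic generator $[-C_g^{[t-1]}\cdots C_g\mid I_k]$ parametrizes all left multiples of $g(X)$ supported in $E$, using Lemma~\ref{companiontocyclic} and the fact that a nonzero left multiple of the monic $g(X)$ has degree at least $k$ --- is argued soundly.
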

\begin{remark}
    We use the notation$$V^{\sigma}_n(a_1,a_2,\dots,a_k;E)=\begin{bmatrix}
1&\mathcal{N}^\sigma_1(a_1)&\cdots&\mathcal{N}^\sigma_{k-1}(a_1)&\mathcal{N}^\sigma_t(a_1)&\cdots&\mathcal{N}^\sigma_{t+k-1}(a_1)\\
1&\mathcal{N}^\sigma_2(a_1)&\cdots&\mathcal{N}^\sigma_{k-1}(a_2)&\mathcal{N}^\sigma_t(a_2)&\cdots&\mathcal{N}^\sigma_{t+k-1}(a_2)\\
\vdots&\vdots&\ddots&\vdots&\vdots&\ddots&\vdots\\
1&\mathcal{N}^\sigma_2(a_k)&\cdots&\mathcal{N}^\sigma_{k-1}(a_k)&\mathcal{N}^\sigma_t(a_k)&\cdots&\mathcal{N}^\sigma_{t+k-1}(a_k)\\
\end{bmatrix}_{k\times 2k},$$\\ where $E=\{0,1,2,\dots,k-1,t,t+1,\dots,t+k-1\}.$
\end{remark}
The preceding results allow us to state the following theorem:
\begin{theorem}\label{Theroem 5}
	Let $g( X )\in \mathcal{R}[X;\sigma]$ be W-polynomial of degree $k$ such that 	\(
	\mathcal{R} g(X) = \bigcap_{i=1}^k \mathcal{R}(X - a_i)
	\), where  $a_1,a_2,\dots ,a_{k}$ are distinct elements. Let $t$ be an integer with $k\leq t \leq n-k$. Then, the matrix $C^{[t-1]}_g\cdots C^{[1]}_g\cdot C_g$ is MDS if and only if any $k$ columns of the matrix $ V^{\sigma}_n(a_1,a_2,\dots,a_k;E)$  are linearly independent over $\mathcal{R}.$ 
\end{theorem}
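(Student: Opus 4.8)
The plan is to chain together the three earlier results: Theorem~\ref{equivalent condition}, which recasts the MDS property of $M = C_g^{[t-1]} \cdots C_g$ as a weight condition on the shortened $\sigma$-cyclic code $C$ with support set $E = \{0,1,\dots,k-1\} \cup \{t,t+1,\dots,t+k-1\}$; and the preceding theorem together with its remark, which identifies $C$ as precisely the set of vectors $(f_e)_{e \in E}$ annihilated (on the right) by $V^\sigma_n(a_1,\dots,a_k;E)^T$. Combining these, a codeword of $C$ of Hamming weight $\leq k$ corresponds to a nonzero vector supported on a subset of $E$ of size at most $k$ lying in the right kernel of $V^\sigma_n(a_1,\dots,a_k;E)^T$.

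First I would invoke Theorem~\ref{equivalent condition} to reduce ``$C_g^{[t-1]}\cdots C_g$ is MDS'' to ``every nonzero $f(X) = c(X)\ast g(X)$ with $\operatorname{supp}(f) \subseteq E$ has Hamming weight $> k$.'' Next I would use the characterization of $C$ via the generalized Vandermonde matrix (the theorem just before the Remark defining $V^\sigma_n(a_1,\dots,a_k;E)$): the left multiples of $g(X)$ supported in $E$ are exactly the vectors $\mathbf{f} \in \mathcal{R}^{2k}$ (indexed by $E$) with $\mathbf{f}\, V^\sigma_n(a_1,\dots,a_k;E)^T = \mathbf{0}$. So the MDS condition becomes: there is no nonzero $\mathbf{f}$ with at most $k$ nonzero coordinates satisfying $\mathbf{f}\, V^\sigma_n(a_1,\dots,a_k;E)^T = \mathbf{0}$. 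Finally I would translate this into a statement about columns of $V^\sigma_n(a_1,\dots,a_k;E)$: a vector $\mathbf{f}$ with $\operatorname{supp}(\mathbf{f}) \subseteq J$, $|J| \leq k$, lying in the right kernel of the transpose is precisely a nontrivial $\mathcal{R}$-linear dependence among the columns of $V^\sigma_n(a_1,\dots,a_k;E)$ indexed by $J$. Hence no such $\mathbf{f}$ exists for any $J$ with $|J| = k$ (and therefore none with $|J| \leq k$, since padding with zeros reduces the smaller case to the $k$-column case) if and only if every set of $k$ columns of $V^\sigma_n(a_1,\dots,a_k;E)$ is linearly independent over $\mathcal{R}$. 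This gives both directions simultaneously.

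I would present it as the following short argument. Since $g(X)$ is a W-polynomial with $\mathcal{R}g(X) = \bigcap_{i=1}^k \mathcal{R}(X - a_i)$ and $X^n - 1 \in {}^\ast\langle g(X)\rangle$, the code $C$ with generator matrix $[-C_g^{[t-1]}\cdots C_g \mid I_k]$ is, by the preceding theorem, equal to $\{\mathbf{f} \in \mathcal{R}^{2k} : \mathbf{f}\, V^\sigma_n(a_1,\dots,a_k;E)^T = \mathbf{0}\}$. By Theorem~\ref{equivalent condition}, $C_g^{[t-1]}\cdots C_g$ is MDS iff every nonzero codeword of $C$ has weight $> k$, i.e.\ no nonzero $\mathbf{f} \in C$ has weight $\leq k$. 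Fix a subset $J \subseteq E$ with $|J| = k$, and let $V_J$ be the $k \times k$ submatrix of $V^\sigma_n(a_1,\dots,a_k;E)$ formed by the columns indexed by $J$. A nonzero $\mathbf{f}$ supported on $J$ with $\mathbf{f}\,V^\sigma_n(a_1,\dots,a_k;E)^T = \mathbf{0}$ exists iff the rows of $V_J^T$ (equivalently, the columns of $V_J$) are linearly dependent over $\mathcal{R}$; and any nonzero codeword of weight $\leq k$ is supported on some such $J$ after extending its support to size exactly $k$. Therefore $C$ has no nonzero codeword of weight $\leq k$ iff for every $k$-subset $J \subseteq E$ the corresponding $k$ columns of $V^\sigma_n(a_1,\dots,a_k;E)$ are linearly independent over $\mathcal{R}$, which is the claim.

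The main obstacle I anticipate is being careful with the ring-theoretic (as opposed to field-theoretic) reasoning: over $\mathcal{R}$ a module need not be free of the expected rank, so ``$k$ columns linearly dependent'' is strictly weaker than ``the $k\times k$ submatrix has non-unit determinant'' in general. However, Corollary~\ref{Fact1} (via Dong et al.'s characterization) already phrases the MDS property purely through linear independence of sets of $k$ columns/rows of a generator matrix, so the argument stays entirely at the level of linear (in)dependence and never needs to convert to determinants — the equivalence between a nonzero kernel vector supported on $J$ and $\mathcal{R}$-linear dependence of the columns indexed by $J$ is a direct unwinding of the defining matrix equation. The one point requiring a line of justification is the reduction from ``weight $\leq k$'' to ``weight exactly $k$ on a chosen $k$-set $J$'': a codeword of weight $j < k$ is supported on some $J \supseteq \operatorname{supp}(\mathbf{f})$ with $|J| = k$ (such $J$ exists since $|E| = 2k \geq k$), and it still furnishes a nontrivial dependence among the columns indexed by $J$ (with the extra coefficients zero), so independence of every $k$-subset of columns rules it out as well.
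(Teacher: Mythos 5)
Your proposal is correct and takes essentially the same approach as the paper: both reduce the MDS property of $C_g^{[t-1]}\cdots C_g$ via Theorem~\ref{equivalent condition} to the weight condition on left multiples of $g(X)$ supported in $E$, and then identify a codeword of weight at most $k$ with a nontrivial $\mathcal{R}$-linear dependence among $k$ columns of $V^{\sigma}_n(a_1,\dots,a_k;E)$. The only cosmetic difference is that you invoke the preceding (unproved) kernel characterization of the code, whereas the paper's proof re-derives that correspondence inline from the right-root evaluation formula and the W-polynomial property $\mathcal{R}g(X)=\bigcap_{i=1}^{k}\mathcal{R}(X-a_i)$.
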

\begin{proof}
Assume first that the matrix 
\(
M = C_g^{[t-1]} \cdots C^{[1]}_g \cdot C_g
\) 
is MDS. According to Theorem~\ref{equivalent condition}, this condition is equivalent to every nonzero left multiple of $g(X)$ with support in 
\(
E = \{0, \dots, k-1,~t, \dots, t+k-1\}
\) 
has Hamming weight greater than $k$.   Suppose there exists a set of $k$ columns of $V^\sigma_n(a_1, \dots, a_k)$ indexed by $E' = \{i_1, \dots, i_k\} \subset E$ such that  
\[
\sum_{j=1}^{k} c_{i_j} 
\begin{pmatrix}
\mathcal{N}^\sigma_{i_j}(a_1) \\ 
\mathcal{N}^\sigma_{i_j}(a_2) \\ 
\vdots \\ 
\mathcal{N}^\sigma_{i_j}(a_k)
\end{pmatrix} 
= \mathbf{0},
\]

for some $c_{i_1}, \dots, c_{i_k} \in \mathcal{R}$.  
This means that for each $s \in \{1, \dots, k\}$, we have 
\[
\sum_{j=1}^{k} c_{i_j} \mathcal{N}^\sigma_{i_j}(a_s) = 0.
\]

Consequently, in view of (\ref{EQN1}), every $a_s$ with $1 \leq s \leq k$ is a right root of
\(
f(X) = \sum_{j=1}^{k} c_{i_j} X^{i_j},
\)
so $f(X) \in \mathcal{R} g(X) = \bigcap_{i=1}^k \mathcal{R}(X - a_i)$. Since $E' \subseteq E$ and $\mathrm{wt}(f) \leq k$, the MDS property of $M$ implies that $f(X)$ must be the zero polynomial. It follows immediately that $c_{i_j} = 0$ for each $j$, $1\leq j\leq k$. Thus, any $k$ columns of $V^\sigma_n(a_1, \dots, a_k;E)$ are linearly independent.  Conversely,  let  
\(
0\neq f(X) = \sum_{e \in E} c_e X^e \in \mathcal{R} g(X).
\)  
Then, we have $f(X) \in \bigcap_{i=1}^k \mathcal{R}(X-a_i)$.  This implies that 
\[
\sum_{e\in E} c_{e} \mathcal{N}^\sigma_{e}(a_s) = 0 
\quad \text{for all } s \in \{1, \dots, k\}.
\]

Since every $k$-column of $V^\sigma_n(a_1,a_2,\dots,a_k;E)$ are linearly independent, so $\mathrm{wt}(f(X))>k.$ This complete the proof.
\end{proof}
\begin{corollary}\label{Corollarymain}
    Let  $g( X )\in \mathcal{R}[X;\sigma]$ be W-polynomial of degree $k$ such that 	\(
	\mathcal{R} g(X) = \bigcap_{i=1}^k \mathcal{R}(X - a_i)
	\), where  $a_1,a_2,\dots ,a_{k}$ are distinct elements. Let $t$ be an integer with $k\leq t \leq n-k$. Then, the matrix $C^{[t-1]}_g\cdots C^{[1]}_g\cdot C_g$ is MDS if and only if the determinant of any $k\times k$ submatrix of the matrix $ V^{\sigma}_n(a_1,a_2,\dots,a_k;E)$  are in $\mathcal{U}(\mathcal{R}).$
\end{corollary}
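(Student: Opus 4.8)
The plan is to derive Corollary~\ref{Corollarymain} as an immediate consequence of Theorem~\ref{Theroem 5} together with the matrix characterization of MDS codes over modules (the result of Dong et al.\ recalled just after Definition~4, and Lemma~\ref{lem3}). The essential observation is that ``any $k$ columns of $V^{\sigma}_n(a_1,\dots,a_k;E)$ are linearly independent over $\mathcal{R}$'' can be restated in determinantal form because $V^{\sigma}_n(a_1,\dots,a_k;E)$ is a $k\times 2k$ matrix over the \emph{commutative} ring $\mathcal{R} = GR(p^s,p^{sm})$: a set of $k$ columns of a $k\times 2k$ matrix over a commutative ring forms a square $k\times k$ submatrix, and over a finite commutative local ring such a square submatrix has $\mathcal{R}$-linearly independent columns if and only if its determinant is a unit (a non-unit in a finite local ring is a nilpotent zero-divisor, hence gives a nontrivial dependence; a unit determinant gives an inverse, hence independence).

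First I would fix notation: write $V := V^{\sigma}_n(a_1,\dots,a_k;E)$ and, for an index set $Z\subseteq\{1,\dots,2k\}$ with $|Z|=k$, write $V(Z)$ for the $k\times k$ submatrix formed by the columns of $V$ indexed by $Z$. Then I would prove the elementary lemma-level claim: for a $k\times k$ matrix $A$ over a finite commutative local ring $\mathcal{R}$, the columns of $A$ are linearly independent over $\mathcal{R}$ $\iff$ $\det(A)\in\mathcal{U}(\mathcal{R})$. The forward direction: if $\det(A)$ is a non-unit, then since $\mathcal{R}$ is local, $\det(A)\in\langle p\rangle$, so reducing mod $\langle p\rangle$ gives a singular matrix $\bar A$ over the field $\mathbb{F}_{p^m}$, which has a nonzero kernel vector; lifting this vector (nonzero mod $p$, hence nonzero in $\mathcal{R}$) and iterating/using finiteness yields a nonzero $\mathcal{R}$-combination of columns equal to zero --- alternatively and more cleanly, the adjugate identity $\mathrm{adj}(A)\,A = \det(A) I_k$ shows that if the columns are independent then $\det(A)$ annihilates nothing nonzero in a suitable sense; the standard fact is simply that over a commutative ring the columns of $A$ are linearly independent iff $\det(A)$ is a non-zero-divisor, and in a finite local ring the non-zero-divisors are exactly the units. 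The backward direction is trivial: $\det(A)\in\mathcal{U}(\mathcal{R})$ makes $A$ invertible, so its columns are independent.

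Having established this dictionary, the corollary follows in one line: by Theorem~\ref{Theroem 5}, $C_g^{[t-1]}\cdots C_g^{[1]}\cdot C_g$ is MDS $\iff$ every choice of $k$ columns of $V$ is linearly independent over $\mathcal{R}$ $\iff$ (applying the claim with $A = V(Z)$ for each $Z$) $\det(V(Z))\in\mathcal{U}(\mathcal{R})$ for every $k$-subset $Z$ of the column indices, which is precisely the assertion that the determinant of every $k\times k$ submatrix of $V$ lies in $\mathcal{U}(\mathcal{R})$. I would remark that since $V$ has exactly $k$ rows, a ``$k\times k$ submatrix'' is determined solely by the choice of $k$ columns, so there is no subtlety about which rows are selected.

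The only genuine obstacle --- and it is minor --- is justifying the equivalence ``linearly independent columns $\iff$ unit determinant'' in the module/ring setting rather than the field setting, since over a general commutative ring independence of columns corresponds to $\det$ being a non-zero-divisor, not a unit. The point that rescues this is that $\mathcal{R} = GR(p^s,p^{sm})$ is a \emph{finite} (indeed finite chain) local ring, in which every element is either a unit or a zero-divisor (the non-units form the maximal ideal $\langle p\rangle$, all of whose elements are nilpotent, hence zero-divisors). So I would state this explicitly as the one non-formal input, cite it as standard (or reference the matrix characterization of MDS codes over $\mathcal{R}$ already invoked in the paper, which implicitly uses the same fact), and then the rest is bookkeeping. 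No deep argument is needed; the corollary is essentially a reformulation of Theorem~\ref{Theroem 5}.
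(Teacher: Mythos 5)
Your proposal is correct and matches the paper's (implicit) reasoning: the corollary is stated there without proof as an immediate reformulation of Theorem~\ref{Theroem 5}, using exactly the dictionary you spell out, namely that over the finite local ring $\mathcal{R}=GR(p^s,p^{sm})$ a $k\times k$ submatrix (determined by a choice of $k$ columns, since $V^{\sigma}_n(a_1,\dots,a_k;E)$ has $k$ rows) has linearly independent columns if and only if its determinant is a unit, because non-units lie in $\langle p\rangle$ and are zero-divisors. Your filling in of that standard equivalence is sound, so no gap remains.
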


\section{Study of quasi-recursive MDS matrices on the basis of simple roots}\label{Sec:MDS-simple-roots}

In this section, we introduce techniques for the direct construction of quasi-recursive MDS matrices, extending the existing approaches to the more general setting of Galois rings. Our approach is based on studying the right roots of $X^n-1$ in the extension of Galois ring, which requires recalling the link between skew polynomials and linearized polynomials, as outlined by Bhaintwal~\cite{Bhaintwal2012Skew}. In particular, the roots of $X^n - 1 \in GR(p^s,p^{sm})[X;\sigma]$ are essential for the construction of quasi-recursive MDS matrices. For this analysis, Bhaintwal~\cite{Bhaintwal2012Skew} extended the concept of linearized polynomials, originally introduced by Ore~\cite{Orespecialclass}, to the framework of Galois rings. Following his notation, we associate  $X^n-1$ to the linearized polynomial 
\[
    L(Y) = \sigma^n(Y) - Y.
\]

For $\sigma = \theta^e$ (where 
$e$ denotes a positive integer), reduction modulo $p$ gives 
\[
    \overline{L}(Y) = Y^{q_0^n} - Y,
\]
with $q_0 = p^e$, which is the classical $q_0$-linearized polynomial over $\mathbb{F}_{p^m}$.
 Since its derivative $\overline{L}'(Y) = -1$, the polynomial $\overline{L}(Y)$ is separable and thus has distinct roots in $\mathbb{F}_{q_0^n}$. Let the smallest field containing $\mathbb{F}_{p^m}$ and all the roots of $\overline{L(Y)}$ is $\mathbb{F}_{p^{ml}}$, where $l$ is a positive integer. Hence, $GR(p^s, p^{msl})$ is the splitting ring of $L(Y)$ over $GR(p^s, p^{sm})$.
 If $\xi$ is a primitive element of $GR(p^s,p^{sne})$, then the roots of $L(Y)$ are exactly the Teichm$\ddot{u}$ller elements of $GR(p^s,p^{sne})$, which are all distinct units. Moreover, as shown by Chaussade et al.~\cite[Lemma~4]{Chaussade2009}, $\sigma(\beta)/\beta$ is a right root of $X^n - 1$ precisely when $\beta$ is a root of the linearized polynomial $Y^{q_0^n} - Y$.
This reveals that $X^n-1$ have $q^n_0-2$ distinct nonzero roots in extension ring. Letting $\sigma = \theta^e$ and retaining the above notations for $GR(p^s, p^{msl})$ and $GR(p^s, p^{sne})$, a free $\sigma$-cyclic code $\mathcal{C}$ of length $n$ over $GR(p^s, p^{sml})$ can be represented by its generator polynomial 
\(
g(X) = g_0 + g_1X + \cdots + g_{k-1}X^{k-1} + X^k,
\)
which possesses elements $\xi^{l_j} \in GR(p^s, p^{sne})$, $1 \leq j \leq k$, with $\{l_1, l_2, \dots, l_k\} \subset \{0, 1, 2, \dots, q_0^n - 2\}$ as right roots. Equivalently, each $X - \xi^{l_j}$ is a right divisor of $g(X)$, where $\xi$ denotes a primitive element of $GR(p^s, p^{sne})$. The condition $(n, q) = 1$ ensures that $X^n - 1$, and consequently $g(X)$, has only simple right roots.  

Throughout this section, 
let  
\(
E = \{0, 1, 2, \dots, k-1, t, t+1, \dots, k+t-1\},
\)
and let $E' \subset E$ be defined as  
\(
E' = \{i_1, i_2, \dots, i_k\}, \quad \text{where } i_1 < i_2 < \cdots < i_k.
\)
We begin this section with the following result.

\begin{theorem}\label{Theorem 7}   
Let $a_j = \xi^{\,j + b - 1}$ for $j = 1, \dots, k$, where $b$ is an integer. 
Suppose that $a_1, \dots, a_k$ are the right roots of $g(X)$. Then, for an integer 
$t \geq k$, the matrix 
\[
M = C_g^{[t-1]} C_g^{[t-2]} \cdots C_g
\] 

is an \textup{MDS} matrix.
\end{theorem}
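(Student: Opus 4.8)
The plan is to apply Corollary~\ref{Corollarymain}: it suffices to show that every $k\times k$ submatrix of $V^{\sigma}_n(a_1,\dots,a_k;E)$ has unit determinant, where the columns are indexed by $E=\{0,1,\dots,k-1\}\cup\{t,\dots,t+k-1\}$ and the rows correspond to the right roots $a_j=\xi^{\,j+b-1}$. First I would compute the $(j,i)$ entry explicitly: since $\sigma=\theta^e$ acts on the Teichm\"uller element $\xi$ by $\sigma(\xi)=\xi^{q_0}$ with $q_0=p^e$, we get $\mathcal N^{\sigma}_i(a_j)=a_j\sigma(a_j)\cdots\sigma^{i-1}(a_j)=\xi^{(j+b-1)(1+q_0+\cdots+q_0^{i-1})}=\xi^{(j+b-1)\frac{q_0^i-1}{q_0-1}}$. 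The key observation is that the exponents $\nu_i:=\frac{q_0^i-1}{q_0-1}$ are \emph{strictly increasing} in $i$, so that as $i$ ranges over any $k$-subset $E'=\{i_1<\cdots<i_k\}\subseteq E$ the corresponding column set is parametrized by the strictly increasing tuple $(\nu_{i_1},\dots,\nu_{i_k})$.

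Next I would recognize the resulting submatrix $V^{\sigma}_n(a_1,\dots,a_k;E')$ as a classical generalized Vandermonde matrix: its $(j,\ell)$ entry is $x_j^{\nu_{i_\ell}}$ where $x_j=\xi^{\,j+b-1}$. Factoring $x_j^{\nu_{i_1}}$ out of row $j$ gives a determinant equal to $\Big(\prod_{j=1}^k \xi^{(j+b-1)\nu_{i_1}}\Big)\cdot\det\big(x_j^{\,\nu_{i_\ell}-\nu_{i_1}}\big)_{j,\ell}$. The leading product is a power of the unit $\xi$, hence a unit. For the remaining factor, the exponents $0=\nu_{i_1}-\nu_{i_1}<\nu_{i_2}-\nu_{i_1}<\cdots<\nu_{i_k}-\nu_{i_1}$ are distinct non-negative integers and the $x_j$ are \emph{distinct units} in the extension ring $GR(p^s,p^{sne})$ (distinctness because the $\xi^{\,j+b-1}$ are distinct powers of a primitive element, and the right roots being simple forces the $a_j$ to be distinct). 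A generalized Vandermonde determinant in distinct unit nodes with distinct exponent set is, over a local ring, a unit precisely when its image in the residue field $\mathbb F_{p^{ne}}$ is nonzero; and over the field $\mathbb F_{p^{ne}}$ a generalized Vandermonde determinant with distinct nodes and strictly increasing exponents is a nonzero scalar multiple of a product of Schur polynomials evaluated at the (nonzero) nodes, hence nonzero. Therefore the determinant is a unit in $GR(p^s,p^{sne})$, and since it also lies in $\mathcal R=GR(p^s,p^{sm})$ (it is a determinant of a submatrix whose associated code is defined over $\mathcal R$), it is a unit in $\mathcal R$.

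Having shown every $k\times k$ minor of $V^{\sigma}_n(a_1,\dots,a_k;E)$ is a unit, Corollary~\ref{Corollarymain} yields that $M=C_g^{[t-1]}C_g^{[t-2]}\cdots C_g$ is MDS, which is the claim. I expect the main obstacle to be the field-level generalized Vandermonde nonvanishing step: one must argue carefully that for \emph{arbitrary} strictly increasing exponent tuples $(\nu_{i_1},\dots,\nu_{i_k})$ (not just $\{0,1,\dots,k-1\}$) the determinant in distinct nonzero nodes does not vanish over $\mathbb F_{p^{ne}}$ — the clean way is to invoke that such a determinant equals $s_\lambda(x_1,\dots,x_k)\cdot\prod_{j<\ell}(x_\ell-x_j)$ for the partition $\lambda$ with $\lambda_r=\nu_{i_{k-r+1}}-(k-r)$, combined with the fact that a Schur polynomial evaluated at distinct nonzero elements of a field is nonzero (its lowest-degree monomial $x_1^{\lambda_k}\cdots x_k^{\lambda_1}$ has coefficient $1$, or alternatively one uses that $s_\lambda$ divides no node difference). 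The transfer from the field back to the Galois ring via locality is then routine, and the containment of the minor in $\mathcal R$ follows from the systematic generator matrix $[\,-M\mid I_k\,]$ being defined over $\mathcal R$.
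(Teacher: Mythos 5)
Your overall strategy is the same as the paper's: reduce via Corollary~\ref{Corollarymain} to showing that every $k\times k$ minor of $V^{\sigma}_n(a_1,\dots,a_k;E)$ is a unit, and compute the entries as $\mathcal N^{\sigma}_{i}(a_j)=\xi^{(j+b-1)\nu_i}$ with $\nu_i=\tfrac{q_0^{\,i}-1}{q_0-1}$. The genuine gap is in your key nonvanishing step. Having read the minor as a generalized Vandermonde matrix in the nodes $x_j=\xi^{\,j+b-1}$ with an arbitrary strictly increasing exponent tuple $(\nu_{i_1},\dots,\nu_{i_k})$, you need its image in the residue field to be nonzero, and you propose to get this from the claim that a Schur polynomial evaluated at distinct nonzero field elements is nonzero. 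That claim is false over finite fields: $s_{(1)}(x_1,x_2)=x_1+x_2$ vanishes at the distinct nonzero pair $(x,-x)$ in odd characteristic, and $s_{(2)}(x_1,x_2)=x_1^2+x_1x_2+x_2^2$ vanishes at distinct nonzero points of $\mathbb F_4$; generalized Vandermonde determinants with distinct nonzero nodes really can vanish over a finite field. Your fallback remarks (the lowest monomial of $s_\lambda$ has coefficient $1$; ``$s_\lambda$ divides no node difference'') only show the polynomial is not identically zero, not that it is nonzero at these particular nodes. So, as written, the argument does not close, and you correctly identified this as the critical obstacle but did not resolve it.

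The fix --- and it is exactly what the paper does --- is to use the hypothesis that the right roots are \emph{consecutive} powers of $\xi$ and read the same matrix the other way around: the $(j,\ell)$ entry $\xi^{(b+j-1)\nu_{i_\ell}}$ equals $y_\ell^{\,b+j-1}$ with $y_\ell=\mathcal N^{\sigma}_{i_\ell}(\xi)$, so the minor is a \emph{classical} Vandermonde matrix in the column variables $y_1,\dots,y_k$ (consecutive exponents $b,b+1,\dots,b+k-1$) multiplied by $\mathrm{diag}\bigl(y_1^{\,b},\dots,y_k^{\,b}\bigr)$. Its determinant is $\prod_{\ell<\ell'}(y_{\ell'}-y_\ell)\cdot\prod_{\ell}y_\ell^{\,b}$, and the only nontrivial input is that the elements $\mathcal N^{\sigma}_{i}(\xi)$, $i\in E$, are pairwise distinct modulo $p$ (the paper imports this from the proof of Theorem~4 in \cite{Bhaintwal2012Skew}; concretely, the exponents $\nu_i$ for the relevant $i$ are distinct integers smaller than the order of $\bar\xi$), so each difference and each $y_\ell$ is a unit and the minor is a unit in the extension ring. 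With that replacement your argument goes through; your additional claim that the minor ``lies in $\mathcal R$'' is neither justified as stated nor needed, since the unit-minor criterion is applied to $V^{\sigma}_n(a_1,\dots,a_k;E)$ over the extension ring, exactly as in the paper.
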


 \begin{proof} 
	As established in Corollary~\ref{Corollarymain}, the matrix $M$ is \textup{MDS} precisely when every $k \times k$ submatrix of the generalized Vandermonde matrix $V^\sigma_n(a_1,a_2,\dots,a_k;E)$ is nonsingular. To analyze this condition, let us focus on a particular submatrix $V$, obtained by selecting the columns indexed by $E' = \{i_1,i_2,\dots,i_k\} \subset E$. We then have

    \begin{align}\label{EQN12}
        V
        &=V^\sigma_n(a_1,a_2,\dots,a_k;E') 
		= \begin{pmatrix} 
			\mathcal{N}^\sigma_{i_1}(a_1) & \mathcal{N}^\sigma_{i_2}(a_1) & \cdots & \mathcal{N}^\sigma_{i_k}(a_1) \\ 
			\mathcal{N}^\sigma_{i_1}(a_2) & \mathcal{N}^\sigma_{i_2}(a_2) & \cdots & \mathcal{N}^\sigma_{i_k}(a_2) \\ 
			\vdots       & \vdots       & \ddots & \vdots       \\ 
			\mathcal{N}^\sigma_{i_1}(a_k) & \mathcal{N}^\sigma_{i_2}(a_k) & \cdots & \mathcal{N}^\sigma_{i_k}(a_k) 
		\end{pmatrix}_{k\times k}
    \end{align}


	\begin{align*}
		\notag&= \begin{pmatrix} 
			\mathcal{N}^\sigma_{i_1}(\xi^{b}) & \mathcal{N}^\sigma_{i_2}(\xi^{b}) & \cdots & \mathcal{N}^\sigma_{i_k}(\xi^{b}) \\ 
			\mathcal{N}^\sigma_{i_1}(\xi^{b+1}) & \mathcal{N}^\sigma_{i_2}(\xi^{b+1}) & \cdots & \mathcal{N}^\sigma_{i_k}(\xi^{b+1}) \\ 
			\vdots & \vdots & \ddots & \vdots \\ 
			\mathcal{N}^\sigma_{i_1}(\xi^{b+k-1}) & \mathcal{N}^\sigma_{i_2}(\xi^{b+k-1}) & \cdots & \mathcal{N}^\sigma_{i_k}(\xi^{b+k-1}) 
		\end{pmatrix}_{k\times k}\\[1em]
		\notag&= \begin{pmatrix} 
			(\mathcal{N}^\sigma_{i_1}(\xi))^{b} & (\mathcal{N}^\sigma_{i_2}(\xi))^{b} & \cdots & (\mathcal{N}^\sigma_{i_k}(\xi))^{b} \\ 
			(\mathcal{N}^\sigma_{i_1}(\xi))^{b+1} & (\mathcal{N}^\sigma_{i_2}(\xi))^{b+1} & \cdots & (\mathcal{N}^\sigma_{i_k}(\xi))^{b+1} \\ 
			\vdots & \vdots & \ddots & \vdots \\ 
			(\mathcal{N}^\sigma_{i_1}(\xi))^{b+k-1} & (\mathcal{N}^\sigma_{i_2}(\xi))^{b+k-1} & \cdots & (\mathcal{N}^\sigma_{i_k}(\xi))^{b+k-1}  
		\end{pmatrix}_{k\times k}.\\
		\notag&= \begin{pmatrix} 
			1 & 1 & \cdots & 1 \\ 
			\mathcal{N}^\sigma_{i_1}(\xi) & \mathcal{N}^\sigma_{i_2}(\xi) & \cdots & \mathcal{N}^\sigma_{i_k}(\xi) \\ 
			\vdots & \vdots & \ddots & \vdots \\ 
			(\mathcal{N}^\sigma_{i_1}(\xi))^{k-1} & (\mathcal{N}^\sigma_{i_2}(\xi))^{k-1} & \cdots & (\mathcal{N}^\sigma_{i_k}(\xi))^{k-1} 
		\end{pmatrix}_{k\times k} 
		\cdot  \begin{pmatrix}
		    (\mathcal{N}^\sigma_{i_1}(\xi))^b&0&0&\cdots&0\\
            0&(\mathcal{N}^\sigma_{i_2}(\xi))^b&0&\cdots&0\\
            \vdots&\vdots&\vdots&\ddots&\vdots\\
            0&0&0&\cdots&(\mathcal{N}^\sigma_{i_k}(\xi))^b
		\end{pmatrix}_{k\times k} \\[0.5em]
		\notag&= V^{\mathrm{id}}_k(\mathcal{N}^\sigma_{i_1}(\xi),\mathcal{N}^\sigma_{i_2}(\xi),\dots,\mathcal{N}^\sigma_{i_k}(\xi))^T
		\cdot\mathrm{diag}\!\left( (\mathcal{N}^\sigma_{i_1}(\xi))^b,\; (\mathcal{N}^\sigma_{i_2}(\xi))^b,\; \dots,\; (\mathcal{N}^\sigma_{i_k}(\xi))^b \right)_{k\times k}.\\
        \end{align*}
        
		This yields the determinant of the above matrix
        \begin{eqnarray}
            \det (V^\sigma_n(a_1,a_2,\dots,a_k;E') )&=& \prod_{1\leq i_j\neq i_l\leq k}\Big(\mathcal{N}^\sigma_{i_j}(\xi)-\mathcal{N}^\sigma_{i_l}(\xi)\Big)	\cdot \left( \prod_{j=1}^{k} \mathcal{N}^\sigma_{i_j}(\xi) \right)^{b}.
        \end{eqnarray}

 Following the step of the proof  \cite[Theorem~4]{Bhaintwal2012Skew}, we obtain
	\[
	\mathcal{N}^\sigma_{i_j}(\xi) \bmod p \;\neq\; \mathcal{N}^\sigma_{i_l}(\xi) \bmod p \quad \text{for all distinct } i_j,i_l \in E.
	\]
    
	Hence, $\Big(\mathcal{N}^\sigma_{i_j}(\xi)-\mathcal{N}^\sigma_{i_l}(\xi)\Big)$   $\in \mathcal{U}(GR(p^s,p^{sne}))$, and therefore the determinant of the Vandermonde matrix itself is a unit. Moreover, since $\xi$ is an unit, it follows that every $\mathcal{N}^\sigma_{i_j}(\xi)$ is also an unit. Thus, 
	$
	\det\!\left(V^\sigma_n(a_1,a_2,\dots,a_k;E') \right),
	$
	is a unit in $\mathcal{U}(GR(p^s,p^{msl})$. Since this conclusion holds for any choice of $k$ columns from $E$, the matrix $M$ is an MDS matrix.
\end{proof}

\begin{corollary}Let $a_j = c\cdot\xi^{j + b - 1}$ for $j = 1, \dots, k$, where $b$ is an integer and $c\in \mathcal{U}(GR(p^s,p^{sm}))$. 
Suppose that $a_1, \dots, a_k$ are the right roots of $g(X)$. Then, for an integer 
$t \geq k$, the matrix 
\[
M = C_g^{[t-1]} C_g^{[t-2]} \cdots C_g
\] 
is an \textup{MDS} matrix.  \end{corollary}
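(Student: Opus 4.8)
The plan is to reduce the corollary to Theorem~\ref{Theorem 7} by a simple change of variables, exploiting the multiplicative structure of the norm maps $\mathcal{N}^\sigma_i$. The key observation is that for a unit $c \in \mathcal{U}(GR(p^s,p^{sm}))$ and any exponent $i$, one has $\mathcal{N}^\sigma_i(c\,\beta) = c\,\sigma(c)\,\sigma^2(c)\cdots\sigma^{i-1}(c)\,\mathcal{N}^\sigma_i(\beta) = \mathcal{N}^\sigma_i(c)\,\mathcal{N}^\sigma_i(\beta)$, since the factors $\sigma^j(c)$ commute with everything in the commutative ring $GR(p^s,p^{sm})$. In other words, scaling the argument of the norm by a unit $c$ scales the output by the unit $\mathcal{N}^\sigma_i(c)$.

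First I would invoke Corollary~\ref{Corollarymain}: the matrix $M = C_g^{[t-1]} \cdots C_g$ is MDS if and only if every $k\times k$ submatrix of $V^\sigma_n(a_1,\dots,a_k;E)$ has determinant in $\mathcal{U}(\mathcal{R})$. Fix a column index set $E' = \{i_1,\dots,i_k\} \subset E$ and consider the corresponding submatrix $V = V^\sigma_n(a_1,\dots,a_k;E')$ with $a_j = c\,\xi^{j+b-1}$. Using the factorization above entrywise, the $(l,m)$ entry $\mathcal{N}^\sigma_{i_m}(c\,\xi^{l+b-1})$ factors as $\mathcal{N}^\sigma_{i_m}(c)\cdot\mathcal{N}^\sigma_{i_m}(\xi^{l+b-1})$. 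Since the factor $\mathcal{N}^\sigma_{i_m}(c)$ depends only on the column index $m$, we may pull it out column-by-column, obtaining
\[
V = V^\sigma_n(\xi^{b},\xi^{b+1},\dots,\xi^{b+k-1};E')\cdot \mathrm{diag}\!\left(\mathcal{N}^\sigma_{i_1}(c),\,\mathcal{N}^\sigma_{i_2}(c),\,\dots,\,\mathcal{N}^\sigma_{i_k}(c)\right).
\]
Taking determinants gives $\det(V) = \det\!\left(V^\sigma_n(\xi^{b},\dots,\xi^{b+k-1};E')\right)\cdot\prod_{m=1}^{k}\mathcal{N}^\sigma_{i_m}(c)$.

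Next I would argue that both factors are units. The first factor is a unit because $\xi^{b},\xi^{b+1},\dots,\xi^{b+k-1}$ are precisely the right roots considered in Theorem~\ref{Theorem 7} (with the same $b$), and that theorem's proof shows every such Vandermonde-type submatrix determinant lies in $\mathcal{U}(GR(p^s,p^{msl}))$. For the second factor, since $c$ is a unit of $GR(p^s,p^{sm})$ and each $\sigma^j(c)$ is then a unit of $GR(p^s,p^{sm}) \subseteq GR(p^s,p^{sne})$ (automorphisms preserve units), the product $\mathcal{N}^\sigma_{i_m}(c) = c\,\sigma(c)\cdots\sigma^{i_m-1}(c)$ is a unit; hence $\prod_{m=1}^k \mathcal{N}^\sigma_{i_m}(c) \in \mathcal{U}(GR(p^s,p^{msl}))$. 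Therefore $\det(V)$ is a unit, and since $E'$ was arbitrary, Corollary~\ref{Corollarymain} yields that $M$ is MDS.

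There is essentially no hard obstacle here; the proof is a routine reduction. The one point requiring a little care is the multiplicativity identity $\mathcal{N}^\sigma_i(c\beta) = \mathcal{N}^\sigma_i(c)\,\mathcal{N}^\sigma_i(\beta)$, which relies on commutativity of the base ring $GR(p^s,p^{sm})$ so that the interleaved factors $\sigma^j(c)$ and $\sigma^j(\beta)$ can be regrouped; this is exactly why the hypothesis $c\in\mathcal{U}(GR(p^s,p^{sm}))$ (rather than an arbitrary unit of the extension) is natural. Once that identity is in hand, the rest is bookkeeping with determinants and the already-established Theorem~\ref{Theorem 7}.
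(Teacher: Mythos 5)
Your proof is correct and follows essentially the route the paper intends: the corollary is stated as an immediate consequence of Theorem~\ref{Theorem 7}, and your factorization $\mathcal{N}^\sigma_{i}(c\,\beta)=\mathcal{N}^\sigma_{i}(c)\,\mathcal{N}^\sigma_{i}(\beta)$, which pulls the diagonal matrix of units $\mathrm{diag}\big(\mathcal{N}^\sigma_{i_1}(c),\dots,\mathcal{N}^\sigma_{i_k}(c)\big)$ out of each $k\times k$ submatrix of $V^\sigma_n(a_1,\dots,a_k;E)$ and reduces its determinant to the unit determinant already established in Theorem~\ref{Theorem 7}, is exactly the intended argument via Corollary~\ref{Corollarymain}. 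One minor aside: the multiplicativity identity holds for any unit $c$ of the (commutative) extension ring $GR(p^s,p^{sne})$, so the hypothesis $c\in\mathcal{U}(GR(p^s,p^{sm}))$ is not what makes that identity work, though the restriction is harmless for your argument.
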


It is important to note that when we work over Galois rings, we obtain a larger class of quasi-recursive MDS matrices. In particular, employing such matrices in a diffusion layer may increase the resistance to algebraic attacks compared to conventional linear diffusion layers defined over finite fields (cf.; \cite{Kesarwani2021}). 

\begin{corollary}\label{Corollary 1003}
    Let $a_j = c\cdot\xi^{j + b - 1}+\eta_j$ for $j = 1, \dots, k$, where $b$ is an integer, $c\in \mathcal{U}(GR(p^s,p^{sm}))$ and $\eta_j\in \mathcal{\mathcal{N}}(GR(p^s,p^sm))$.  
Suppose that $a_1, \dots, a_k$ are the right roots of $g(X)$. Then, for an integer 
$t \geq k$, the matrix 
\[
M = C_g^{[t-1]} C_g^{[t-2]} \cdots C_g
\] 

is an \textup{MDS} matrix.
\end{corollary}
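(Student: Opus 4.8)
The plan is to reduce Corollary~\ref{Corollary 1003} to Theorem~\ref{Theorem 7} by exploiting the key structural fact established in the proof of the latter: the MDS property of $M$ depends only on the residues modulo $p$ of the quantities $\mathcal{N}^\sigma_{i_j}(a_s)$, and nilpotent perturbations of the $a_s$ do not alter those residues. Concretely, by Corollary~\ref{Corollarymain}, $M = C_g^{[t-1]} \cdots C_g$ is MDS if and only if the determinant of every $k\times k$ submatrix $V^\sigma_n(a_1,\dots,a_k;E')$, with $E' = \{i_1,\dots,i_k\}\subset E$, is a unit in $\mathcal{R}$. Since a local ring element is a unit precisely when its image in the residue field $\mathbb{F}_{p^m}$ (or $\mathbb{F}_{p^{ml}}$ in the splitting ring) is nonzero, it suffices to show each such determinant is nonzero modulo $p$.

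First I would compute, for $a_j = c\,\xi^{j+b-1} + \eta_j$ with $\eta_j$ nilpotent, the entries $\mathcal{N}^\sigma_{i}(a_j) = a_j\,\sigma(a_j)\cdots\sigma^{i-1}(a_j)$. Because $\sigma$ is a ring automorphism of $GR(p^s,p^{sne})$ that stabilizes the maximal ideal $\langle p\rangle$ (indeed $\sigma = \theta^e$ permutes Teichm\"uller elements and fixes $p$), each $\sigma^r(\eta_j)$ is again nilpotent, so $\sigma^r(a_j) \equiv c'\,\xi^{\,p^{er}(j+b-1)} \pmod p$ for the appropriate power, exactly as in Theorem~\ref{Theorem 7}. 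Hence $\mathcal{N}^\sigma_{i}(a_j) \equiv \mathcal{N}^\sigma_{i}(c\,\xi^{j+b-1}) \pmod p$ for every $i$ and $j$; that is, the reduction mod $p$ of the matrix $V^\sigma_n(a_1,\dots,a_k;E')$ coincides with the reduction of the corresponding matrix built from the points $c\,\xi^{j+b-1}$. The multiplicativity identity $\mathcal{N}^\sigma_i(c\xi^{j+b-1}) = \mathcal{N}^\sigma_i(c)\cdot(\mathcal{N}^\sigma_{i}(\xi))^{\,j+b-1}$ used in the corollary following Theorem~\ref{Theorem 7} then factors this reduced matrix, up to multiplication by an invertible diagonal matrix with entries $(\mathcal{N}^\sigma_{i_l}(\xi))^{b}\mathcal{N}^\sigma_{i_l}(c)$, into a transposed Vandermonde matrix $V^{\mathrm{id}}_k(\mathcal{N}^\sigma_{i_1}(\xi),\dots,\mathcal{N}^\sigma_{i_k}(\xi))^T$.

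The determinant of that Vandermonde factor is $\prod_{i_j \neq i_l}(\mathcal{N}^\sigma_{i_j}(\xi) - \mathcal{N}^\sigma_{i_l}(\xi))$, which is a unit by the distinctness of the residues $\mathcal{N}^\sigma_{i_j}(\xi) \bmod p$ for distinct $i_j, i_l \in E$, a fact borrowed verbatim from the proof of \cite[Theorem~4]{Bhaintwal2012Skew} as already invoked for Theorem~\ref{Theorem 7}. Since $\xi$ and $c$ are units, every $\mathcal{N}^\sigma_{i_l}(\xi)$ and $\mathcal{N}^\sigma_{i_l}(c)$ is a unit, so the diagonal factor is invertible; therefore $\det(V^\sigma_n(a_1,\dots,a_k;E'))$ reduces mod $p$ to a nonzero element, hence is itself a unit. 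As $E'$ was an arbitrary $k$-subset of $E$, Corollary~\ref{Corollarymain} yields that $M$ is MDS.

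I expect the main obstacle to be the careful verification that nilpotent perturbations genuinely vanish in the $\mathcal{N}^\sigma_i$-products modulo $p$ and do not interact with the Vandermonde structure --- in particular confirming that $\sigma$ maps $\mathcal{N}(\mathcal{R})$ to itself and that the congruence $\mathcal{N}^\sigma_i(a_j)\equiv\mathcal{N}^\sigma_i(c\xi^{j+b-1})\pmod p$ is exact rather than merely approximate, since a product of $i$ terms each of the form (unit $+$ nilpotent) only collapses cleanly to the product of the units modulo the maximal ideal, not modulo higher powers. Once that congruence is pinned down, the rest is a direct transcription of the argument for Theorem~\ref{Theorem 7}, with $\xi^{j+b-1}$ replaced by $c\xi^{j+b-1}+\eta_j$ but the mod-$p$ picture unchanged.
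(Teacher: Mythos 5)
Your argument is correct, and it is essentially the intended justification: the paper states Corollary~\ref{Corollary 1003} without an explicit proof, and the natural reading is exactly your route, namely rerun the proof of Theorem~\ref{Theorem 7} (via Corollary~\ref{Corollarymain}) and observe that the nilpotent perturbations are invisible modulo the maximal ideal. Your key points all check out: $\sigma$ is a ring automorphism fixing $\mathbb{Z}_{p^s}$, so it preserves $\langle p\rangle=\mathcal{N}$, hence $\mathcal{N}^\sigma_i(c\xi^{j+b-1}+\eta_j)\equiv \mathcal{N}^\sigma_i(c\xi^{j+b-1}) \pmod{p}$; multiplicativity of $\mathcal{N}^\sigma_i$ in the commutative ring gives the factorization into a transposed Vandermonde matrix in the $\mathcal{N}^\sigma_{i_l}(\xi)$ times an invertible diagonal with entries $\mathcal{N}^\sigma_{i_l}(c)(\mathcal{N}^\sigma_{i_l}(\xi))^{b}$; and in the local ring "unit $=$ nonzero residue" turns the mod-$p$ identification of the two matrices into the desired unit determinants. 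The only alternative route the paper itself offers is the later Lemma~\ref{LEMMA3}: nilpotent perturbation of the \emph{roots} induces nilpotent perturbation of the \emph{coefficients} of $g$ (the coefficients are built from the $a_i$ and the ideal $\mathcal{N}$ is $\sigma$-stable), and then the MDS property is preserved by the cited result on nilpotent matrix perturbations. Your version is more self-contained, since it stays at the level of the Vandermonde criterion and does not need that external perturbation theorem; like the paper, you implicitly assume the W-polynomial condition $\mathcal{R}g(X)=\bigcap_i \mathcal{R}(X-a_i)$ needed for Corollary~\ref{Corollarymain}, and the residue field should be $\mathbb{F}_{p^{ne}}$ of the splitting ring rather than $\mathbb{F}_{p^{m}}$, but these are shared with (not worse than) the paper's own treatment.
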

\begin{example}
Let 
\(
GR(2^2, 2^{16}) = \mathbb{Z}_4[Y]/\langle p(Y)\rangle, \quad 
p(Y) = Y^8 + Y^7 + Y^6 + Y + 1,
\)
and  $\xi = Y + \langle p(Y)\rangle$ be a primitive root of $p(Y)$ in some extension of $\mathbb{F}_2$.  
Set $a_j = \xi^{j-1}$ for $j = 1, 2, 3, 4$, and consider the polynomial
\(
g(X) = (X - a_1)(X - a_2)(X - a_3)\in GR(2^2,2^{16})[X].
\)
In Example~3 of~\cite{Kesarwani2021}, it was shown that $C^3_g$ is an MDS matrix. 
Now, consider the polynomial
\(
h(X) = (X - c_1)(X - c_2)(X - c_3),
\)
where we set $c_j = \xi^{j-1} + \eta_j$ with $\eta_j \in \mathcal{N}\big(GR(2^2, 2^{16})\big)$. 
Then, by Corollary~\ref{Corollary 1003}, the matrix
\(
M = C_h^3
\)
is an \textup{MDS} matrix over $GR(2^2, 2^{16})$.

In particular, if we take $\eta_1 = 2$, $\eta_2 = 2\xi$, and $\eta_3 = 2\xi^2$, then the polynomial
\(
h(X) = (X - c_1)(X - c_2)(X - c_3)
\)
yields a recursive MDS matrix. 
\end{example}

\begin{remark}
    The above example shows that Corollary~\ref{Corollary 1003} allows us to generate new MDS matrices simply by adding nilpotent perturbations to the roots of an existing polynomial whose associated companion matrix is MDS.  
Thus, from the original $4$-MDS matrix of Example~3, we can obtain a whole family of MDS matrices parameterized by the choice of nilpotent elements $\eta_1, \eta_2, \eta_3,\eta_4$.
\end{remark}

We now present the following lemma, formulated in light of \cite[Theorem~15]{Ali2024block}:

\begin{lemma}\label{LEMMA3}
Let 
\(
g(X)=g_0+g_1X+\cdots+g_{k-1}X^{k-1}+X^k \;\in\; \mathcal{R}[X;\sigma],
\)
and let
\(
h(X)=(g_0+\eta_0)+(g_1+\eta_1)X+\cdots+(g_{k-1}+\eta_{k-1})X^{k-1}+X^k,
\)
where $\eta_i$ for $0\leq i \leq k-1$ are nilpotent elements of $GR(p^s,p^{sm})$. Then, the matrix
\(
M' = C_h^{[t-1]} C_h^{[t-2]} \cdots C_h
\)
is an \textup{MDS} matrix whenever
\(
M = C_g^{[t-1]} C_g^{[t-2]} \cdots C_g
\)
is \textup{MDS}.
\end{lemma}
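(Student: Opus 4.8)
The plan is to reduce the MDS property from $h$ to $g$ by working modulo the maximal ideal $\langle p\rangle$ of $\mathcal{R}=GR(p^s,p^{sm})$, exploiting the fact that an element of $\mathcal{R}$ is a unit precisely when its image in the residue field $\mathbb{F}_{p^m}$ is nonzero. Since every $\eta_i$ is nilpotent, it lies in $\langle p\rangle$, so the reduction map $a\mapsto\bar a$ sends $g_i+\eta_i$ to $\bar g_i$; that is, $\overline{h(X)}=\overline{g(X)}$ as polynomials in $\mathbb{F}_{p^m}[X;\bar\sigma]$, where $\bar\sigma$ is the automorphism of $\mathbb{F}_{p^m}$ induced by $\sigma$. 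Consequently the companion matrices satisfy $\overline{C_h}=\overline{C_g}$, and applying $\sigma^i$ commutes with reduction (since $\sigma$ preserves $\langle p\rangle$), so $\overline{C_h^{[i]}}=\overline{C_g^{[i]}}$ for every $i$. Taking the product, $\overline{M'}=\overline{C_h^{[t-1]}}\cdots\overline{C_h}=\overline{C_g^{[t-1]}}\cdots\overline{C_g}=\overline{M}$ over $\mathbb{F}_{p^m}$.

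Next I would invoke the Dong et al.\ characterization recalled after Corollary~\ref{Fact1}: a square matrix over $\mathcal{R}$ is MDS if and only if every square submatrix has determinant in $\mathcal{U}(\mathcal{R})$. For any $t\times t$ submatrix $M'_S$ of $M'$, its determinant reduces to $\det(\overline{M'_S})=\det(\overline{M_S})$, the determinant of the corresponding submatrix of $M$. Because $M$ is MDS, $\det(M_S)\in\mathcal{U}(\mathcal{R})$, hence $\overline{\det(M_S)}=\det(\overline{M_S})\neq 0$ in $\mathbb{F}_{p^m}$; therefore $\det(\overline{M'_S})\neq 0$, which means $\det(M'_S)\notin\langle p\rangle$, i.e.\ $\det(M'_S)\in\mathcal{U}(\mathcal{R})$. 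Since this holds for every square submatrix, $M'$ is MDS. (One should also note that $M'$ is indeed a well-defined $k\times k$ matrix over $\mathcal{R}$ and that the hypothesis $t\ge k$ and the divisibility conditions needed to make $M$ meaningful carry over verbatim to $h$, because $\overline{h}=\overline{g}$ forces $h$ to be a right divisor of $X^n-1$ whenever $g$ is — or, more simply, we may just take the statement about $M'$ at face value as the product of twisted companion matrices without reference to any code.)

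The only genuinely delicate point is the interplay between $\sigma$ and the reduction map: one must be sure that $\sigma$ stabilizes $\langle p\rangle$ (it does, since $\sigma$ is a ring automorphism of the local ring $\mathcal{R}$ and hence permutes its unique maximal ideal, fixing it setwise) so that reduction descends to a well-defined automorphism $\bar\sigma$ of $\mathbb{F}_{p^m}$ and $\overline{\sigma^i(a)}=\bar\sigma^i(\bar a)$. Everything else — that determinant commutes with ring homomorphisms, that matrix multiplication commutes with entrywise reduction, that nilpotents lie in $\langle p\rangle$ — is routine. I would present the argument in roughly four short steps: (1) $\eta_i\in\langle p\rangle$ hence $\overline{C_h^{[i]}}=\overline{C_g^{[i]}}$; (2) therefore $\overline{M'}=\overline{M}$; (3) lift a hypothetical singular submatrix of $M'$ down to $\mathbb{F}_{p^m}$ and contradict the MDS-ness of $M$ via the unit-determinant criterion; (4) conclude $M'$ is MDS. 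The main obstacle, such as it is, is bookkeeping the automorphism's compatibility with reduction cleanly; there is no real analytic or combinatorial difficulty here.
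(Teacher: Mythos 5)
Your proof is correct, but it takes a genuinely different route from the paper's. The paper writes $C_h = C_g + F$, where $F$ has a single nonzero row consisting of the nilpotents $\eta_i$, expands the product $(C_g+F)^{[t-1]}(C_g+F)^{[t-2]}\cdots(C_g+F)$, observes that every cross term contains at least one factor $F^{[j]}$ and hence lies in the ideal $\mathcal{M}_k\big(\mathcal{N}(\mathcal{R})\big)$ of $\mathcal{M}_k(\mathcal{R})$, so that $M' = M + T$ with $T$ having nilpotent entries, and then concludes by citing an external result (Theorem~15 of Ali et al.) that perturbation by nilpotent elements does not affect the MDS property. You instead push everything down to the residue field: since $\mathcal{N}(\mathcal{R})=\langle p\rangle$ and $\sigma$ stabilizes $\langle p\rangle$, reduction modulo $p$ is a ring homomorphism commuting with the twists, so $\overline{M'}=\overline{M}$, and the local-ring fact that an element of $\mathcal{R}$ is a unit precisely when its reduction is nonzero, combined with the all-square-submatrices-have-unit-determinant characterization of MDS recalled after Corollary~\ref{Fact1}, finishes the argument. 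Both proofs rest on the same underlying observation, namely that $M'-M$ has entries in the nilradical, but yours is more self-contained: the homomorphism property of reduction handles the product in one step (no expansion needed), and the unit criterion replaces the cited perturbation theorem. One small caveat: your parenthetical claim that $\overline{h}=\overline{g}$ forces $h$ to be a right divisor of $X^n-1$ whenever $g$ is does not follow, since divisibility does not lift from the residue field to $\mathcal{R}$; this is harmless, however, because, as you also note, the lemma only concerns the product of twisted companion matrices and no divisibility hypothesis on $h$ is required.
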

\begin{proof}
Consider
\(
g(X)=g_0+g_1X+\cdots+g_{k-1}X^{k-1}+X^k,~ 
h(X)=g_0+\eta_0+(g_1+\eta_1)X+\cdots+(g_{k-1}+\eta_{k-1})X^{k-1}+X^k.
\)
The companion matrix of $h(X)$ is
\[
C_h=\begin{bmatrix}
0 & 1 & 0 & \cdots & 0\\
0 & 0 & 1 & \cdots & 0\\
\vdots & \vdots & \vdots & \ddots & \vdots\\
g_0+\eta_0 & g_1+\eta_1 & g_2+\eta_2 & \cdots & g_{k-1}+\eta_{k-1}
\end{bmatrix}_{k\times k}.
\]

This can be decomposed as
\[
C_h = 
\begin{bmatrix}
0 & 1 & 0 & \cdots & 0\\
0 & 0 & 1 & \cdots & 0\\
\vdots & \vdots & \vdots & \ddots & \vdots\\
g_0 & g_1 & g_2 & \cdots & g_{k-1}
\end{bmatrix}_{k\times k}
+
\begin{bmatrix}
0 & 0 & 0 & \cdots & 0\\
0 & 0 & 0 & \cdots & 0\\
\vdots & \vdots & \vdots & \ddots & \vdots\\
\eta_0 & \eta_1 & \eta_2 & \cdots & \eta_{k-1}
\end{bmatrix}_{k\times k},
\]

that is,
\[
C_h = C_g + F,
\]

where $F=\begin{bmatrix}
0 & 0 & 0 & \cdots & 0\\
0 & 0 & 0 & \cdots & 0\\
\vdots & \vdots & \vdots & \ddots & \vdots\\
\eta_0 & \eta_1 & \eta_2 & \cdots & \eta_{k-1}
\end{bmatrix}_{k\times k} \in \mathcal{M}_k\!\Big(\mathcal{N}(\mathcal{R})\Big)$. Now compute
\begin{align}\label{eq:Mh}
C_h^{[t-1]} C_h^{[t-2]}\cdots C_h
&=(C_g+F)^{[t-1]}(C_g+F)^{[t-2]}\cdots(C_g+F) \notag\\
&= C_g^{[t-1]}C_g^{[t-2]}\cdots C_g 
+\underbrace{ \sum_{p=0}^{t-1}\!\Big(C_g^{[t-1]}\cdots C_g^{[p+1]}F^{[p]}C_g^{[p-1]}\cdots C_g\Big)+} \notag\\
&\quad \underbrace{ \sum_{0\le p<q\le t-1}\!\Big(C_g^{[t-1]}\cdots F^{[q]}\cdots F^{[p]}\cdots C_g\Big) 
+ \cdots + F^{[t-1]}F^{[t-2]}\cdots F}. 
\end{align}

Since $\mathcal{M}_k(\mathcal{N}(\mathcal{R}))$ is an ideal of $\mathcal{M}_k(\mathcal{R})$, the additional terms in \eqref{eq:Mh} lie inside this ideal. Thus, we can conclude that
\[
C_h^{[t-1]} C_h^{[t-2]}\cdots C_h = C_g^{[t-1]} C_g^{[t-2]}\cdots C_g + T,
\]

where $T\in \mathcal{M}_k(\mathcal{N}(\mathcal{R}))$.  Finally, by \cite[Theorem 15]{Ali2024block}, the perturbation by nilpotent elements does not affect the MDS property. Hence, $C_h^{[t-1]} C_h^{[t-2]}\cdots C_h$ is also an \textup{MDS} matrix.
\end{proof}
We now highlight an important consequence of the previous results.
\begin{remark}
Theorem~\ref{Theorem 7} provides a construction of quasi-recursive MDS matrices over Galois rings. 
Lemma~\ref{LEMMA3} ensures that, once such a matrix is obtained, 
we can generate further quasi-recursive MDS matrices by perturbing the coefficients of the defining polynomial with nilpotent elements. 
Thus, the lemma effectively reduces the search space for new constructions, since one representative MDS matrix can lead to a family of new MDS matrices.
\end{remark}

 \begin{theorem}\label{Theorem 9}
     Let $a_j = \xi^{j-1} \quad \text{for } j =  1,2,\ldots,k-1,$ and $a_k=\xi^k$.  
Suppose that $a_1, \dots, a_k$ are the right roots of $g(X)$. 	Then, for an integer $t\geq k$, the matrix 
 	$
 	M = C^{[t-1]}_g \cdot C^{[1]}_g \cdot C_g
 	$
 	is MDS if and only if $\sum_{j=1}^{k}\mathcal{N}^\sigma_{i_j}(\xi)\in \mathcal{U}(GR(p^s,p^{sne}),$ for all $\{i_1,i_2,\dots,i_k\} \subset E$.
    \end{theorem}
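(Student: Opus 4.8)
The plan is to apply Corollary~\ref{Corollarymain}, which reduces the MDS property of $M = C^{[t-1]}_g \cdots C^{[1]}_g \cdot C_g$ to the requirement that every $k \times k$ submatrix of $V^\sigma_n(a_1,\dots,a_k;E)$ has unit determinant. So I fix an index set $E' = \{i_1 < i_2 < \cdots < i_k\} \subset E$ and examine the corresponding submatrix $V = V^\sigma_n(a_1,\dots,a_k;E')$ whose $(s,j)$ entry is $\mathcal{N}^\sigma_{i_j}(a_s)$. Substituting $a_s = \xi^{s-1}$ for $s \le k-1$ and $a_k = \xi^k$, and using the multiplicativity $\mathcal{N}^\sigma_i(\xi^r) = (\mathcal{N}^\sigma_i(\xi))^r$ exactly as in the proof of Theorem~\ref{Theorem 7}, the first $k-1$ rows become the powers $(\mathcal{N}^\sigma_{i_j}(\xi))^{0}, \dots, (\mathcal{N}^\sigma_{i_j}(\xi))^{k-2}$ of the quantities $\beta_j := \mathcal{N}^\sigma_{i_j}(\xi)$, while the last row carries the exponent $k$ instead of $k-1$. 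Thus $V^T$ is exactly a generalized Vandermonde matrix on the nodes $\beta_1, \dots, \beta_k$ with exponent set $T = \{0,1,\dots,k-2,k\}$.

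Next I invoke Corollary~\ref{Corol3}, which gives
\[
\det V = \det V^{\mathrm{id}}_k(\beta_1,\dots,\beta_k) \cdot \Big(\sum_{j=1}^{k} \beta_j\Big) = \prod_{1 \le l < j \le k}(\beta_j - \beta_l) \cdot \Big(\sum_{j=1}^{k} \mathcal{N}^\sigma_{i_j}(\xi)\Big).
\]
The first factor is a unit: reducing modulo $p$ and following the argument from \cite[Theorem~4]{Bhaintwal2012Skew} used in the proof of Theorem~\ref{Theorem 7}, the reductions $\overline{\beta_j}$ are pairwise distinct in $\mathbb{F}_{q_0^n}$, so each $\beta_j - \beta_l$ is a unit in $GR(p^s, p^{sne})$, hence so is their product. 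Therefore $\det V \in \mathcal{U}(GR(p^s,p^{sne}))$ if and only if $\sum_{j=1}^{k} \mathcal{N}^\sigma_{i_j}(\xi)$ is a unit. Since Corollary~\ref{Corollarymain} requires this for \emph{every} choice of $E' \subset E$, the matrix $M$ is MDS if and only if $\sum_{j=1}^{k}\mathcal{N}^\sigma_{i_j}(\xi) \in \mathcal{U}(GR(p^s,p^{sne}))$ for all $\{i_1,\dots,i_k\} \subset E$, which is precisely the claim.

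I expect the main obstacle to be purely bookkeeping rather than conceptual: one must be careful that selecting columns indexed by $E'$ genuinely produces the exponent pattern $\{0,1,\dots,k-2,k\}$ on the transposed matrix, and in particular that the ``special'' larger exponent $k$ lands in the row coming from $a_k = \xi^k$ rather than being scrambled by the column selection. A clean way to handle this is to factor out the diagonal matrix $\mathrm{diag}((\beta_j)^{0})$ trivially and argue row-by-row, as in \eqref{EQN12}, so that the structure is transparent; the substitution and the appeal to Corollary~\ref{Corol3} are then immediate. One should also record that the determinant lives a priori in the splitting ring $GR(p^s,p^{sne})$, and that unit-ness there is the correct condition, consistent with the statement.
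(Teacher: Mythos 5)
Your proposal is correct and follows essentially the same route as the paper: reduce via Corollary~\ref{Corollarymain} to unit determinants of the submatrices $V^\sigma_n(a_1,\dots,a_k;E')$, rewrite each (up to transpose) as a generalized Vandermonde matrix on the nodes $\mathcal{N}^\sigma_{i_j}(\xi)$ with exponent set $\{0,1,\dots,k-2,k\}$, and apply Corollary~\ref{Corol3} so that the determinant factors as the classical Vandermonde times $\sum_{j}\mathcal{N}^\sigma_{i_j}(\xi)$. Your explicit justification that the Vandermonde factor is a unit (via reduction mod $p$ as in Theorem~\ref{Theorem 7}) is a detail the paper leaves implicit, but otherwise the arguments coincide.
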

    \begin{proof}
       We take $a_j=\xi^{\,j-1}$ for $1\leq j \leq k-1$ and $a_k=\xi^k$, with $E'=\{i_1,i_2,\dots,i_k\}\subset E$. In view of (\ref{EQN12}), we obtain
\begin{eqnarray*}
    V^\sigma_n(a_1,a_2,\dots,a_k;E')&=&\begin{pmatrix}
			\mathcal{N}^\sigma_{i_1}(1) & \mathcal{N}^\sigma_{i_2}(1)& \cdots & \mathcal{N}^\sigma_{i_k}(1) \\
			\mathcal{N}^\sigma_{i_1}(\xi) & \mathcal{N}^\sigma_{i_2}(\xi)&\cdots & \mathcal{N}^\sigma_{i_k}(\xi) \\
			\vdots &\vdots& \ddots & \vdots \\
			\mathcal{N}^\sigma_{i_1}(\xi^{{k-2}}) &\mathcal{N}^\sigma_{i_2}(\xi^{{k-2}})& \cdots & \mathcal{N}^\sigma_{i_k}(\xi^{{k-2}})\\
            \mathcal{N}^\sigma_{i_1}(\xi^{{k}}) &\mathcal{N}^\sigma_{i_2}(\xi^{{k}})& \cdots & \mathcal{N}^\sigma_{i_k}(\xi^{{k}})
		\end{pmatrix}_{k\times k}.\\
\end{eqnarray*}
        \begin{eqnarray*}
		&=&	\begin{pmatrix}
		1 & 1& \cdots & 	1\\
			(\mathcal{N}^\sigma_{i_1}(\xi)) & (\mathcal{N}^\sigma_{i_2}(\xi))&\cdots & (\mathcal{N}^\sigma_{i_k}(\xi)) \\
			\vdots &\vdots& \ddots & \vdots \\
			(\mathcal{N}^\sigma_{i_1}(\xi))^{{k-2}} &(\mathcal{N}^\sigma_{i_2}(\xi))^{{k-2}}& \cdots & (\mathcal{N}^\sigma_{i_k}(\xi))^{{k-2}}\\
            (\mathcal{N}^\sigma_{i_1}(\xi))^{{k}} &(\mathcal{N}^\sigma_{i_2}(\xi))^{{k}}& \cdots & (\mathcal{N}^\sigma_{i_k}(\xi))^{{k}}
		\end{pmatrix}_{k\times k}.
        \end{eqnarray*}
        
        Assume $a_{j}=\mathcal{N}^\sigma_{i_j}(\xi)$ for $1\leq j \leq k$, and by using Corollary \ref{Corol3}, we obtain 
        \begin{eqnarray*}
            \det( V^\sigma_n(a_1,a_2,\dots,a_k;E'))=\prod_{1\leq l<j\leq k}(\mathcal{N}^\sigma_{i_j}(\xi)- \mathcal{N}^\sigma_{i_l}(\xi))\cdot \sum_{j=1}^{k}\mathcal{N}^\sigma_{i_j}(\xi)\in \mathcal{U}(GR(p^s,p^{sne})).
        \end{eqnarray*}
        
        Hence, $\det( V^\sigma_n(a_1,a_2,\dots,a_k;E'))\in \mathcal{U}(GR(p^s,p^{sne}))$ if and only if $\sum_{j=1}^{k}\mathcal{N}^\sigma_{i_j}(\xi)\in \mathcal{U}((GR(p^s,p^{sne})).$
    \end{proof}
    
    Similarly, we obtain the following lemma.
    \begin{lemma}
        Let $a_1=1$ and $a_j=\xi^j,$ $2\leq j \leq k.$ Suppose that $a_1, \dots, a_k$ are the right roots of $g(X)$. Then, for an integer $t\geq k$, the matrix $
 	M = C^{[t-1]}_g \cdot C^{[1]}_g \cdot C_g
 	$ is MDS if and only if $\sum_{j=1}^{k}\mathcal{N}^\sigma_{i_j}(\xi^{-1}) \in \mathcal{U}(GR(p^s,p^{sne}),$ for all $\{i_1,i_2,\dots,i_k\}\subset E.$ 
    \end{lemma}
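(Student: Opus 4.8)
The plan is to rerun the argument of Theorem~\ref{Theorem 9} almost verbatim. Here the root list is $a_1=1,\,a_2=\xi^{2},\dots,a_k=\xi^{k}$, so the exponents appearing are $\{0,2,3,\dots,k\}$ (the value $1$ is skipped at the bottom) instead of $\{0,1,\dots,k-2,k\}$ (the value $k-1$ skipped near the top) as in Theorem~\ref{Theorem 9}; the map ``exponent $\mapsto k-\text{exponent}$'' interchanges these two sets, and this is precisely what replaces $\xi$ by $\xi^{-1}$ in the final condition. First I would fix a $k$-element subset $E'=\{i_1,\dots,i_k\}\subset E$ and form the submatrix $V=V^\sigma_n(a_1,\dots,a_k;E')$ as in \eqref{EQN12}. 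Using $\mathcal{N}^\sigma_i(1)=1$ and the homomorphism identity $\mathcal{N}^\sigma_i(\xi^{m})=\bigl(\mathcal{N}^\sigma_i(\xi)\bigr)^{m}$ and setting $b_m:=\mathcal{N}^\sigma_{i_m}(\xi)$, the matrix $V$ becomes, up to transposition, the generalized Vandermonde matrix $V^{\mathrm{id}}_n(b_1,\dots,b_k;T)$ with exponent set $T=\{0,2,3,\dots,k\}$.

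Since $T=\{0,2,3,\dots,k\}$ is not one of the exponent sets handled by Corollaries~\ref{Corol3} and~\ref{Corol5}, before invoking them I would pass to the inverse variables: $\xi$ is a unit, so each $b_m$ is a unit, and I may multiply the row of $V^{\mathrm{id}}_n(b_1,\dots,b_k;T)$ indexed by $b_m$ by $b_m^{-k}$ and set $c_m:=b_m^{-1}$. Up to a column reordering (a sign) and the unit factor $\prod_m b_m^{k}$, the matrix becomes $V^{\mathrm{id}}_n(c_1,\dots,c_k;\{0,1,\dots,k-2,k\})$, whose determinant, by Corollary~\ref{Corol3}, equals $\det\bigl(V^{\mathrm{id}}_k(c_1,\dots,c_k)\bigr)\cdot\sum_{m=1}^{k}c_m=\prod_{1\le l<m\le k}(c_m-c_l)\cdot\sum_{m=1}^{k}b_m^{-1}$. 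Since $c_m-c_l=(b_l-b_m)/(b_lb_m)$ and, exactly as in Theorem~\ref{Theorem 9} via \cite[Theorem~4]{Bhaintwal2012Skew}, the differences $b_l-b_m$ are units of $GR(p^s,p^{sne})$ for distinct $i_l,i_m\in E$, each $c_m-c_l$ is a unit; hence $\det(V)$ equals a unit of $GR(p^s,p^{sne})$ times $\sum_{m=1}^{k}b_m^{-1}=\sum_{m=1}^{k}\mathcal{N}^\sigma_{i_m}(\xi^{-1})$.

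Finally I would invoke Corollary~\ref{Corollarymain}: the matrix $M=C^{[t-1]}_g\cdots C^{[1]}_g\cdot C_g$ is MDS if and only if every $k\times k$ submatrix of $V^\sigma_n(a_1,\dots,a_k;E)$ has unit determinant, which by the previous paragraph holds if and only if $\sum_{j=1}^{k}\mathcal{N}^\sigma_{i_j}(\xi^{-1})\in\mathcal{U}(GR(p^s,p^{sne}))$ for every $\{i_1,\dots,i_k\}\subset E$, as claimed. The only point I expect to need genuine care is the bookkeeping in the middle paragraph: one has to notice that inverting the variables carries the exponent set $\{0,2,3,\dots,k\}$ onto the Corollary~\ref{Corol3} pattern $\{0,1,\dots,k-2,k\}$, and verify that every auxiliary factor produced along the way ($\prod_m b_m^{k}$ and the denominators $b_lb_m$) is a unit, so that none of them changes whether the determinant is a unit. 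Everything else is a line-by-line copy of the proof of Theorem~\ref{Theorem 9}.
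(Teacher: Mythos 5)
Your proposal is correct. It exploits the same inversion symmetry as the paper, but at a different level: the paper's own proof is a two-line reduction, reindexing the roots as $\tau_i=a_{k-i+1}=(\xi^{-1})^{\,i-1}c$ for $1\le i\le k-1$ and $\tau_k=a_1=(\xi^{-1})^{k}c$ with $c=\xi^{k}$, thereby recognizing the root pattern of Theorem~\ref{Theorem 9} in the base $\xi^{-1}$ scaled by the unit $c$, and then simply quoting Theorem~\ref{Theorem 9}; you instead redo the determinant computation, passing to the norm variables $b_m=\mathcal{N}^\sigma_{i_m}(\xi)$, inverting them inside the generalized Vandermonde submatrix of \eqref{EQN12} so that the exponent set $\{0,2,3,\dots,k\}$ is carried onto the Corollary~\ref{Corol3} pattern $\{0,1,\dots,k-2,k\}$, and concluding via Corollary~\ref{Corollarymain}. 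Mathematically these are the same change of variables $\xi\mapsto\xi^{-1}$, but your version is more self-contained: the paper's citation of Theorem~\ref{Theorem 9} implicitly also needs the unit scaling by $c=\xi^{k}$ (the analogue, for the Theorem~\ref{Theorem 9} root pattern, of the corollary following Theorem~\ref{Theorem 7}), whereas your computation handles this transparently, since the row factors $b_m^{-k}$, the column permutation, and the denominators $b_lb_m$ in $c_m-c_l=(b_l-b_m)/(b_lb_m)$ are all visibly units and so do not affect whether the determinant is a unit. The auxiliary facts you invoke --- the identity $\mathcal{N}^\sigma_i(\xi^m)=\bigl(\mathcal{N}^\sigma_i(\xi)\bigr)^m$, the unitness of the differences $\mathcal{N}^\sigma_{i_j}(\xi)-\mathcal{N}^\sigma_{i_l}(\xi)$ for distinct indices in $E$ (as in the proof of Theorem~\ref{Theorem 7}), and Corollary~\ref{Corollarymain} --- are exactly those underpinning Theorem~\ref{Theorem 9}, so no gap remains.
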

    \begin{proof}
       Define $\tau_i=a_{k-i+1}=(\xi^{-1})^{\,i-1}c$ for $1\leq i \leq k-1$ and $\tau_k=a_1=(\xi^{-1})^{\,k}c$, where $c=\xi^k$. Then, by Theorem~\ref{Theorem 9}, the matrix 
\(
M = C^{[t-1]}_g \cdot C^{[1]}_g \cdot C_g
\)
is MDS if and only if 
\(
\sum_{j=1}^{k} \mathcal{N}^\sigma_{i_j}(\xi^{-1})\in U\big(GR(p^s,p^{sne})\big),
\)
for every subset $\{i_1,i_2,\dots,i_k\}\subset E$. This completes the proof.
    \end{proof}
    \begin{theorem}
        Let $a_1=1$ and $a_i=\xi^i$, for $2\leq i \leq k-1$, and $a_k=\xi^{k+1}.$ Suppose that $a_1, \dots, a_k$ are the right roots of $g(X)$. Then, for an integer $t\geq k$, the matrix $
 	M = C^{[t-1]}_g \cdot C^{[1]}_g \cdot C_g
 	$ is MDS if and only if $\Big(\sum_{j=1}^{k}\mathcal{N}^\sigma_{i_j}(\xi)\Big)\Big(\sum_{j=1}^{k}\mathcal{N}^\sigma_{i_j}(\xi^{-1})\Big)-1\in \mathcal{U}(GR(p^s,p^{sne}),$ for all $\{i_1,i_2,\dots,i_k\}\subset E.$ 
    \end{theorem}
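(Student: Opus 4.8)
The plan is to follow the template of Theorem~\ref{Theorem 7} and Theorem~\ref{Theorem 9}: I would reduce the MDS property to the invertibility of a family of generalized Vandermonde determinants and then evaluate them via Corollary~\ref{Corol5}. Since $a_1,\dots,a_k$ are the distinct right roots of the monic polynomial $g(X)$, it is a W-polynomial with $\mathcal{R}g(X)=\bigcap_{i=1}^{k}\mathcal{R}(X-a_i)$, so by Corollary~\ref{Corollarymain} the matrix $M=C_g^{[t-1]}\cdots C_g^{[1]}\cdot C_g$ is MDS if and only if every $k\times k$ submatrix of $V^\sigma_n(a_1,\dots,a_k;E)$ has determinant in $\mathcal{U}(GR(p^s,p^{sne}))$. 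I would therefore fix an arbitrary $E'=\{i_1,\dots,i_k\}\subset E$ and study the submatrix $V=V^\sigma_n(a_1,\dots,a_k;E')$, whose $(l,j)$ entry is $\mathcal{N}^\sigma_{i_j}(a_l)$.

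The first key step is to exploit the multiplicativity $\mathcal{N}^\sigma_i(\xi^e)=(\mathcal{N}^\sigma_i(\xi))^e$, which holds because $\sigma$ is a ring automorphism (so $\sigma^r(\xi^e)=(\sigma^r(\xi))^e$); in particular $\mathcal{N}^\sigma_i(\xi^{-1})=(\mathcal{N}^\sigma_i(\xi))^{-1}$. Writing $b_j:=\mathcal{N}^\sigma_{i_j}(\xi)$ and substituting the roots $a_1=\xi^0$, $a_l=\xi^l$ for $2\le l\le k-1$, $a_k=\xi^{k+1}$, the submatrix $V$ becomes $V^{\mathrm{id}}_n(b_1,\dots,b_k;T)^{T}$ with exponent set $T=\{0,2,3,\dots,k-1,k+1\}$, which is precisely the case covered by Corollary~\ref{Corol5}. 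Hence
\[
\det(V)\;=\;\det\!\big(V^{\mathrm{id}}_k(b_1,\dots,b_k)\big)\cdot\Big(\prod_{j=1}^{k} b_j\Big)\cdot\Big[\sum_{j=1}^{k} b_j\;\sum_{j=1}^{k} b_j^{-1}\;-\;1\Big].
\]

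Next I would argue that the first two factors are always units, so the unit/non-unit status of $\det(V)$ is governed entirely by the bracketed factor. The Vandermonde factor equals $\prod_{1\leq l<j\leq k}(b_j-b_l)$, and by the argument in the proof of Theorem~\ref{Theorem 7} (following \cite[Theorem~4]{Bhaintwal2012Skew}) the reductions modulo $p$ of $\mathcal{N}^\sigma_{i_j}(\xi)$ are pairwise distinct for distinct $i_j\in E$, so each $b_j-b_l$ is nonzero in the residue field and hence a unit in the local ring $GR(p^s,p^{sne})$. Similarly each $b_j$ is a finite product of $\sigma$-images of the unit $\xi$, hence a unit, so $\prod_j b_j\in\mathcal{U}(GR(p^s,p^{sne}))$. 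Substituting back $b_j=\mathcal{N}^\sigma_{i_j}(\xi)$ and $b_j^{-1}=\mathcal{N}^\sigma_{i_j}(\xi^{-1})$, we conclude that $\det(V)$ is a unit if and only if $\big(\sum_{j=1}^{k}\mathcal{N}^\sigma_{i_j}(\xi)\big)\big(\sum_{j=1}^{k}\mathcal{N}^\sigma_{i_j}(\xi^{-1})\big)-1\in\mathcal{U}(GR(p^s,p^{sne}))$. As $E'$ was arbitrary, combining this with Corollary~\ref{Corollarymain} yields the claimed equivalence.

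I expect the only delicate point to be bookkeeping rather than anything conceptual: one must check that the identification $V=V^{\mathrm{id}}_n(b_1,\dots,b_k;T)^{T}$ is valid for every admissible $E'$ (so that the hypothesis $T=\{0,2,3,\dots,k-1,k+1\}$ of Corollary~\ref{Corol5} is genuinely met, including the degenerate behaviour when $k$ is small and the middle block of exponents is short), and that the distinctness-mod-$p$ argument imported from Theorem~\ref{Theorem 7} covers all indices occurring in $E$, not merely those below $k$. Once these are in place, the factorization of $\det(V)$ and the resulting dichotomy in the local ring $GR(p^s,p^{sne})$ are routine.
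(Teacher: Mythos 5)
Your proposal follows essentially the same route as the paper's own proof: reduce to the invertibility of every $k\times k$ submatrix of $V^{\sigma}_n(a_1,\dots,a_k;E)$ via Theorem~\ref{Theroem 5}/Corollary~\ref{Corollarymain}, use multiplicativity of $\mathcal{N}^{\sigma}_i$ to rewrite the submatrix in powers of $b_j=\mathcal{N}^{\sigma}_{i_j}(\xi)$ with exponent set $T=\{0,2,\dots,k-1,k+1\}$, and apply Corollary~\ref{Corol5}, discarding the Vandermonde and norm factors as units. Your version is in fact slightly more careful than the paper's (explicitly justifying the unit factors via the distinctness-mod-$p$ argument and the transpose identification, and correcting the paper's apparent typo of exponent $k+2$ in the last row), so no changes are needed.
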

    \begin{proof}
        We have $a_1=1$ and $a_i=\xi^i$ for $2\leq i \leq k-1$ and $a_k=\xi^{k+1}.$ As established in Theorem~\ref{Theroem 5}, the matrix $M$ is \textup{MDS} precisely when every $k \times k$ submatrix of the generalized Vandermonde matrix $V^\sigma_n(a_1,a_2,\dots,a_k;E)$ is nonsingular. So for any $E'=\{i_1,i_2,\dots, i_k\}\subset E,$ we have 
        \begin{eqnarray*}
            V^\sigma_n(a_1,a_2,\dots,a_k;E')&=&\begin{pmatrix}
			\mathcal{N}^\sigma_{i_1}(1) & \mathcal{N}^\sigma_{i_2}(1)& \cdots & \mathcal{N}^\sigma_{i_k}(1) \\
			\mathcal{N}^\sigma_{i_1}(\xi^2) & \mathcal{N}^\sigma_{i_2}(\xi^2)&\cdots & \mathcal{N}^\sigma_{i_k}(\xi^2) \\
			\vdots &\vdots& \ddots & \vdots \\
            \mathcal{N}^\sigma_{i_1}(\xi^{{k-1}}) &\mathcal{N}^\sigma_{i_2}(\xi^{{k-1}})& \cdots & \mathcal{N}^\sigma_{i_k}(\xi^{{k-1}})\\
			\mathcal{N}^\sigma_{i_1}(\xi^{{k+1}}) &\mathcal{N}^\sigma_{i_2}(\xi^{{k+1}})& \cdots & \mathcal{N}^\sigma_{i_k}(\xi^{{k+1}})
		\end{pmatrix}_{k\times k}.\\
		&=&	\begin{pmatrix}
		1 & 1& \cdots & 	1\\
			(\mathcal{N}^\sigma_{i_1}(\xi))^2 & (\mathcal{N}^\sigma_{i_2}(\xi))^2&\cdots & (\mathcal{N}^\sigma_{i_k}(\xi))^2 \\
			\vdots &\vdots& \ddots & \vdots \\
			(\mathcal{N}^\sigma_{i_1}(\xi))^{{k-1}} &(\mathcal{N}^\sigma_{i_2}(\xi))^{{k-1}}& \cdots & (\mathcal{N}^\sigma_{i_k}(\xi))^{{k-1}}\\
            (\mathcal{N}^\sigma_{i_1}(\xi))^{{k+2}} &(\mathcal{N}^\sigma_{i_2}(\xi))^{{k+2}}& \cdots & (\mathcal{N}^\sigma_{i_k}(\xi))^{{k+2}}
		\end{pmatrix}_{k\times k}.
        \end{eqnarray*}
        
        Application of Corollary \ref{Corol5} yields
        \begin{eqnarray*}
            \det(V^\sigma_n(a_1,a_2,\dots,a_k;E'))&=& \prod_{1\leq l< j\leq k}(\mathcal{N}^\sigma_{i_j}(\xi)-\mathcal{N}^\sigma_{i_l}(\xi))\prod_{j=1}^{k}\mathcal{N}^\sigma_{i_j}(\xi)\Bigg[\Bigg(\sum_{i=1}^{k}\mathcal{N}^\sigma_{i_j}(\xi)\Bigg)\Bigg(\sum_{i=1}^{k}\mathcal{N}^\sigma_{i_j}(\xi^{-1})\Bigg)-1\Bigg].
        \end{eqnarray*}
        
        This implies that $ \det(V^\sigma_n(a_1,a_2,\dots,a_k;I))\in \mathcal{U}(GR(p^s,p^{sne}))$ if and only if $\Big(\sum_{j=1}^{k}\mathcal{N}^\sigma_{i_j}(\xi)\Big)\Big(\sum_{j=1}^{k}\mathcal{N}^\sigma_{i_j}(\xi^{-1})\Big)-1\in \mathcal{U}(GR(p^s,p^{sne}),$ for all $\{i_1,i_2,\dots,i_k\}\subset E.$ 
    \end{proof}
\begin{definition}
	Let $\mathcal{R}$ be a finite commutative with unity and char($\mathcal{R}$)=$p$ for some prime $p$. Let $\textbf{h }= (h_0, h_1, \ldots, h_{k-1})$ be an $k$-tuple over $\mathcal{R}$.  
	Then, the linearized matrix $U(\textbf{h})$ corresponding to $\textbf{h}$ is given by
	\[
	U(\textbf{h}) = \bigl(h_i^{\,p^j}\bigr)_{i,j=0}^{\,k-1}.
	\]
    
	Explicitly,
	\[
	U(\textbf{h}) =
	\begin{pmatrix}
		h_0      & h_0^{p}      & h_0^{p^2}      & \cdots & h_0^{p^{\,k-1}} \\
		h_1      & h_1^{p}      & h_1^{p^2}      & \cdots & h_1^{p^{\,k-1}} \\
		h_2      & h_2^{p}      & h_2^{p^2}      & \cdots & h_2^{p^{\,k-1}} \\
		\vdots   & \vdots       & \vdots         & \ddots & \vdots \\
		h_{n-1}  & h_{n-1}^{p}  & h_{n-1}^{p^2}  & \cdots & h_{n-1}^{p^{\,k-1}}
	\end{pmatrix}_{k\times k}.
	\]
\end{definition}

\begin{proposition}\label{kesarwani}\cite[Proposition 1]{Kesarwani2021}
Let $\mathcal{R}$ be a finite commutative with unity and char(R)=p.  	Let $\textbf{h} = (h_0, h_1, \ldots, h_{k-1})$ be an $k$-tuple over $\mathcal{R}$.  Then,
	\[
	\det(U(\textbf{h})) = h_0 \prod_{j=0}^{k-2} \ \prod_{c_1,c_2,\ldots,c_j \in \mathbb{Z}_p} 
	\left( h_{j+1} - \sum_{i=1}^j c_i h_i \right).
	\]
\end{proposition}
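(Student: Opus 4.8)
The plan is to recognise $\det(U(\mathbf{h}))$ as a Moore determinant and to evaluate it by induction on $k$, after first reducing the problem to the case of a field. When its right-hand side is fully expanded, the asserted equality is an identity of polynomials in $h_{0},\dots,h_{k-1}$ with coefficients in the prime field $\mathbb{Z}_{p}$ (the left side is a determinant expansion, the right side is $h_{0}$ times a product of linear forms). Hence it suffices to prove it in the polynomial ring $\mathbb{Z}_{p}[h_{0},\dots,h_{k-1}]$; specialising the indeterminates then yields the statement for arbitrary $h_{0},\dots,h_{k-1}$ in any commutative ring $\mathcal{R}$ with $\operatorname{char}\mathcal{R}=p$. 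I would therefore from now on treat the $h_{i}$ as indeterminates — so, in particular, they are linearly independent over $\mathbb{Z}_{p}$ — and work inside the field $K=\mathbb{Z}_{p}(h_{0},\dots,h_{k-1})$.

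The base case $k=1$ is the trivial identity $\det(U(h_{0}))=h_{0}$. For the inductive step, assume the formula in $k-1$ variables and set $\delta:=\det\!\big((h_{i}^{p^{j}})_{0\le i,j\le k-2}\big)$, the $(k-1)\times(k-1)$ Moore determinant; the inductive hypothesis writes $\delta$ as a product of $h_{0}$ with nonzero linear forms, so $\delta\neq 0$ in $K$. Introduce the polynomial in a fresh variable $X$,
\[
D(X)=\det\begin{pmatrix}
h_{0} & h_{0}^{p} & \cdots & h_{0}^{p^{k-1}}\\
\vdots & & & \vdots\\
h_{k-2} & h_{k-2}^{p} & \cdots & h_{k-2}^{p^{k-1}}\\
X & X^{p} & \cdots & X^{p^{k-1}}
\end{pmatrix}.
\]
Cofactor expansion along the last row gives $D(X)=\sum_{j=0}^{k-1}a_{j}X^{p^{j}}$ with $a_{k-1}=\delta\neq 0$, so $D$ has degree exactly $p^{k-1}$; moreover, since $\operatorname{char}K=p$, the freshman's dream gives $D(X+Y)=D(X)+D(Y)$ and $D(cX)=cD(X)$ for $c\in\mathbb{Z}_{p}$, i.e.\ $D$ is a $p$-linearised polynomial. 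Evaluating $D$ at $h_{i}$ produces the determinant of a matrix with two equal rows, so $D(h_{i})=0$ for $0\le i\le k-2$; by $\mathbb{Z}_{p}$-linearity $D$ then vanishes at every combination $\sum_{i=0}^{k-2}c_{i}h_{i}$ with $c_{i}\in\mathbb{Z}_{p}$, and the linear independence of the $h_{i}$ makes these $p^{k-1}$ values pairwise distinct. A polynomial of degree $p^{k-1}$ over the field $K$ with leading coefficient $\delta$ and $p^{k-1}$ distinct roots equals $\delta$ times the product of the corresponding monic linear factors, so
\[
D(X)=\delta\prod_{(c_{0},\dots,c_{k-2})\in\mathbb{Z}_{p}^{\,k-1}}\Big(X-\sum_{i=0}^{k-2}c_{i}h_{i}\Big).
\]

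It remains to specialise $X=h_{k-1}$: this restores the last row of $U(\mathbf{h})$, so $\det(U(\mathbf{h}))=D(h_{k-1})$, and substituting the inductive formula for $\delta$ exhibits $\det(U(\mathbf{h}))$ as $h_{0}$ times the product of the linear forms $h_{j+1}-\sum_{i\le j}c_{i}h_{i}$ over $0\le j\le k-2$ and all choices of the $c_{i}$ in $\mathbb{Z}_{p}$, completing the induction. The step I expect to be the main obstacle is precisely the opening reduction: passing to indeterminates over a field is what licenses both the ``degree $d$ with $d$ distinct roots $\Rightarrow$ product of the corresponding linear factors'' argument and the non-vanishing of $\delta$, neither of which survives verbatim over a general finite commutative ring with zero divisors; once one is inside $K$, the linearised-polynomial computation is routine.
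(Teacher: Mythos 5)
The paper gives no proof of this proposition: it is quoted from \cite[Proposition 1]{Kesarwani2021}, so there is no in-paper argument to compare against. Judged on its own, your proof is correct and is the standard Moore-determinant argument: reduce the claim to a polynomial identity over $\mathbb{Z}_p$, pass to the rational function field $K=\mathbb{Z}_p(h_0,\dots,h_{k-1})$ where the $h_i$ are $\mathbb{Z}_p$-linearly independent, and induct on $k$ via the $p$-linearised polynomial $D(X)$, whose leading coefficient is the $(k-1)\times(k-1)$ Moore determinant $\delta\neq 0$ and whose $p^{k-1}$ distinct roots are the $\mathbb{Z}_p$-combinations of $h_0,\dots,h_{k-2}$. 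The opening specialisation step is exactly what licenses the field-theoretic factorisation, and as a bonus it shows finiteness of $\mathcal{R}$ is irrelevant, so your version is slightly more general than the stated one.

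One caveat: what you actually prove is $\det U(\mathbf{h})=h_0\prod_{j=0}^{k-2}\prod_{(c_0,\dots,c_j)\in\mathbb{Z}_p^{j+1}}\bigl(h_{j+1}-\sum_{i=0}^{j}c_ih_i\bigr)$, i.e.\ the linear forms do involve $h_0$. The proposition as printed lets only $c_1,\dots,c_j$ vary and sums from $i=1$, omitting the $c_0h_0$ term; taken literally it already fails for $k=2$, where it would give $h_0h_1$ instead of $h_0h_1^{p}-h_1h_0^{p}$. This is an off-by-one slip caused by re-indexing the $1$-based tuple of \cite{Kesarwani2021} to a $0$-based one, so your formula is the intended (correct) statement; there is no gap on your side, but be aware that you proved the corrected version rather than the printed one.
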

If we set $s=1$ in $GR(p^s, p^{sm})$, we obtain the finite field $\mathbb{F}_{p^m}$. The following results are derived under this condition:

 \begin{theorem}
 	Let $a_j = \xi^{p^j} \quad \text{for } j =0,  1,2,\ldots,k-1.$ 
Suppose that $a_1, \dots, a_k$ are the right roots of $g(X)$. 	Then, the matrix 
 	$
 	M = C^{[t-1]}_g \cdot C^{[1]}_g \cdot C_g
 	$
 	is MDS if and only if both of the following conditions hold:
 	\begin{itemize}
 		\item[\textnormal{(i)}] $a_i-a_j\neq 0$ for all $1\leq i\neq j \leq k.$ 
 		\item[\textnormal{(ii)}]
 		\(
 	\sum_{l=1}^{k} b_i \mathcal{N}^\sigma_{i_l}(\lambda) \neq 0 \quad 
 	\text{for } (b_1, b_2, \ldots, b_k) \neq (0,0,\ldots,0) \in \mathbb{Z}_p^{k}, \quad 
 		\text{for all sets } Z = \{i_1, i_2, \ldots, i_k\} \subset E,
 		\)
 		where $(0,0,\ldots,0)\neq(c_1, c_2, \ldots, c_k)\in \mathbb{Z}^k_p $ is a non-zero vector in $\mathbb{Z}_p^k$.
 	\end{itemize}
 \end{theorem}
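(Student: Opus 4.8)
The plan is to combine the submatrix criterion of Corollary~\ref{Corollarymain} with the linearized-determinant formula of Proposition~\ref{kesarwani}. By Corollary~\ref{Corollarymain}, the product $M=C_g^{[t-1]}\cdots C_g^{[1]}C_g$ is MDS if and only if every $k\times k$ submatrix of the generalized Vandermonde matrix $V^\sigma_n(a_1,\dots,a_k;E)$ has unit determinant; since $s=1$ forces $\mathcal{R}=\mathbb{F}_{p^m}$, a determinant is a unit exactly when it is nonzero. So I would fix an arbitrary $k$-subset $Z=\{i_1,\dots,i_k\}\subset E$ and study $\det V$, where $V:=V^\sigma_n(a_1,\dots,a_k;Z)$ has $(j,l)$-entry $\mathcal{N}^\sigma_{i_l}(a_j)=\mathcal{N}^\sigma_{i_l}\!\bigl(\xi^{\,p^{\,j-1}}\bigr)$.

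The crucial point is that in characteristic $p$ the automorphism $\sigma=\theta^{e}$, together with its extension to the splitting ring, is a power of the Frobenius and hence commutes with every $p$-th power map; therefore $\mathcal{N}^\sigma_{i_l}\!\bigl(\xi^{\,p^{\,j-1}}\bigr)=\bigl(\mathcal{N}^\sigma_{i_l}(\xi)\bigr)^{p^{\,j-1}}$. Writing $h_{l-1}:=\mathcal{N}^\sigma_{i_l}(\xi)$, this identifies $V$, up to a transpose (which leaves the determinant unchanged), with the linearized matrix $U(\mathbf{h})$ for $\mathbf{h}=(h_0,\dots,h_{k-1})$. Proposition~\ref{kesarwani} then expresses $\det V=\det U(\mathbf{h})$ as a product of $\mathbb{Z}_p$-linear forms in $h_0,\dots,h_{k-1}$; since $\xi$ is a unit of the extension ring each $h_j$ is a product of units and hence nonzero, so $\det V\neq 0$ precisely when $h_0,\dots,h_{k-1}$ are linearly independent over $\mathbb{F}_p=\mathbb{Z}_p$, i.e.\ when $\sum_{l=1}^{k}b_l\,\mathcal{N}^\sigma_{i_l}(\xi)\neq 0$ for every nonzero $(b_1,\dots,b_k)\in\mathbb{Z}_p^{k}$. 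Running $Z$ over all $k$-subsets of $E$ turns this into condition~(ii).

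It remains to bring in condition~(i). For necessity, the $j$-th row of $V^\sigma_n(a_1,\dots,a_k;Z)$ depends only on $a_j$, so if $a_i=a_j$ with $i\neq j$ then two rows coincide, $\det V=0$ for every $Z$, and $M$ is not MDS; hence $M$ MDS forces (i), and the previous paragraph forces (ii). For sufficiency, assume (i) and (ii): by (i) the $a_j$ are distinct, so $g(X)$ is a $W$-polynomial of degree $k$ with $\mathcal{R}g(X)=\bigcap_{i=1}^{k}\mathcal{R}(X-a_i)$ and Corollary~\ref{Corollarymain} applies; by (ii) and the factorization of $\det V$ obtained above, every $k\times k$ subdeterminant of $V^\sigma_n(a_1,\dots,a_k;E)$ is a nonzero element of $\mathbb{F}_{p^m}$, hence a unit, so $M$ is MDS.

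I expect the only genuine subtlety to lie in the middle step: one must check that the commutation of $\sigma$ with the $p$-power maps really collapses the generalized Vandermonde submatrix---whose entries involve the products $\mathcal{N}^\sigma$ rather than ordinary powers---into an honest linearized matrix $U(\mathbf{h})$, with the row/column bookkeeping and the transpose arranged so that Proposition~\ref{kesarwani} can be quoted directly. Once that identification is secured, the statement follows from the classical fact that a Moore-type (linearized) determinant vanishes exactly when its defining entries are $\mathbb{F}_p$-linearly dependent.
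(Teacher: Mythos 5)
Your proposal is correct and follows essentially the same route as the paper: reduce via Corollary~\ref{Corollarymain} to the $k\times k$ submatrices of $V^\sigma_n(a_1,\dots,a_k;E)$, use that $\sigma$ commutes with $p$-th powers to rewrite each submatrix as the transpose of the linearized matrix $U(\mathbf{h})$ with $\mathbf{h}=(\mathcal{N}^\sigma_{i_1}(\xi),\dots,\mathcal{N}^\sigma_{i_k}(\xi))$, and then invoke Proposition~\ref{kesarwani} to see that the determinant is nonzero exactly when these norms are $\mathbb{Z}_p$-linearly independent, which is condition~(ii). Your explicit treatment of condition~(i) (equal roots forcing repeated rows, distinctness enabling the $W$-polynomial criterion) is a small addition the paper's proof leaves implicit, but the substance of the argument is the same.
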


\begin{proof}
	By setting $a_j = \xi^{p^j}$ for $j = 0,1,2,\dots,k-1$ in (\ref{EQN12}), we obtain

	\begin{eqnarray*}
			V^\sigma_n(a_1,a_2,\dots,a_k;E') &=& 
		\begin{pmatrix}
			\mathcal{N}^\sigma_{i_1}(\xi) & \mathcal{N}^\sigma_{i_2}(\xi)& \cdots & \mathcal{N}^\sigma_{i_k}(\xi) \\
			\mathcal{N}^\sigma_{i_1}(\xi^{p}) & \mathcal{N}^\sigma_{i_2}(\xi^{p})&\cdots & \mathcal{N}^\sigma_{i_k}(\xi^{p}) \\
			\vdots &\vdots& \ddots & \vdots \\
			\mathcal{N}^\sigma_{i_1}(\xi^{p^{k-1}}) &\mathcal{N}^\sigma_{i_2}(\xi^{p^{k-1}})& \cdots & \mathcal{N}^\sigma_{i_k}(\xi^{p^{k-1}})
		\end{pmatrix}_{k \times k}\\
		&=&	\begin{pmatrix}
			(\mathcal{N}^\sigma_{i_1}(\xi)) & 	(\mathcal{N}^\sigma_{i_2}(\xi))& \cdots & 	(\mathcal{N}^\sigma_{i_k}(\xi)) \\
				(\mathcal{N}^\sigma_{i_1}(\xi))^{p} & (\mathcal{N}^\sigma_{i_2}(\xi))^{p}&\cdots & (\mathcal{N}^\sigma_{i_k}(\xi))^{p} \\
			\vdots &\vdots& \ddots & \vdots \\
			(\mathcal{N}^\sigma_{i_1}(\xi))^{p^{k-1}} &(\mathcal{N}^\sigma_{i_2}(\xi))^{p^{k-1}}& \cdots & (\mathcal{N}^\sigma_{i_k}(\xi))^{p^{k-1}}
		\end{pmatrix}_{k \times k}.
	\end{eqnarray*}
    
We observe that $V^\sigma_n(a_1,a_2,\dots,a_k;E')=(U(\textbf{h}))^{T}$, where $\textbf{h}=(\mathcal{N}^\sigma_{i_1}(\xi), \mathcal{N}^\sigma_{i_2}(\xi),\dots,\mathcal{N}^\sigma_{i_k}(\xi)).$ Now from Proposition \ref{kesarwani}, $U(\textbf{h})$ is non singular if and only if

	\begin{eqnarray}\label{EQN4}
		\det(	V^\sigma_n(a_1,a_2,\dots,a_k;E'))\notag&=&\Bigg(\mathcal{N}^\sigma_{i_1}(\xi))\prod_{j=1}^{k-1}\prod_{c_1,c_2,\dots,c_j\in \mathbb{Z}_p}(\mathcal{N}^\sigma_{i_{j+1}}(\xi)-\sum_{l=1}^{j}c_i(\mathcal{N}^\sigma_{i_l}(\xi))^p\Bigg) \textup{ is invertible in } \mathbb{F}_{p^{ne}}.\\
		&~&.
	\end{eqnarray}
    
 Now observe that the terms appearing in the products, when the quantity in Equation (\ref{EQN4}) is 
multiplied by $b$ for all $b \in \mathbb{Z}_p^{\ast}$, are exactly
\[
\sum_{l=1}^{k} b_i \mathcal{N}^\sigma_{i_l}(\lambda) \neq 0 \quad 
\text{for } (b_1, b_2, \ldots, b_k) \neq (0,0,\ldots,0) \in \mathbb{Z}_p^{k}.
\]

Hence the result.
 
\end{proof}
\begin{corollary}
		Let 
	\(
	a_1=1,~a_j = \xi^{p^{j-1}} \quad \text{for } j = 2,3,\ldots,k.
	\) are the right roots of $g(X).$
	Then, the matrix 
	\(
	M = C^{[t-1]}_g\cdot C^{[1]}_g\cdot C_g
	\)
	is MDS  if and only if both of the following conditions hold:
	\begin{enumerate}
		\item[\textnormal{(i)}] $a_i-a_j\neq 0$ for all $1\leq i\neq j \leq k.$ 
		\item[\textnormal{(ii)}] 
		\(
		\sum_{i=1}^k c_i \mathcal{N}^\sigma_{i_r}(\xi) \neq 0, \quad 
		\text{for all sets } Z = \{i_1, i_2, \ldots, i_k\},
		\subset E\)
		where $(c_2, \ldots, c_k) \neq (0,0,\ldots,0)$ is a non-zero vector in $\mathbb{Z}_p^{k-1}$, and $c_1=-\sum_{j=2}^{k}c_j$. 
	\end{enumerate}
\end{corollary}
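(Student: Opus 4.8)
The plan is to mirror the argument used for the immediately preceding theorem (the one with $a_j=\xi^{p^j}$), since the only structural change is that the root $\xi$ is replaced by $1$. By Corollary~\ref{Corollarymain}, the matrix $M=C_g^{[t-1]}\cdots C_g^{[1]}C_g$ is \textup{MDS} if and only if, for every $k$-subset $E'=\{i_1,\dots,i_k\}\subseteq E$, the submatrix $V^\sigma_n(a_1,\dots,a_k;E')$ is invertible (nonzero determinant in the relevant extension field, the case $s=1$ being over $\mathbb{F}_{p^m}$). Writing $\lambda_r:=\mathcal{N}^\sigma_{i_r}(\xi)$ and using the identities $\mathcal{N}^\sigma_i(1)=1$ and $\mathcal{N}^\sigma_i(\beta^m)=\bigl(\mathcal{N}^\sigma_i(\beta)\bigr)^m$ (valid because $\sigma$ is an automorphism of a commutative ring), the choice $a_1=1$, $a_j=\xi^{p^{j-1}}$ for $2\le j\le k$ turns $V^\sigma_n(a_1,\dots,a_k;E')$ into the $k\times k$ matrix whose first row is $(1,1,\dots,1)$ and whose $l$-th row ($2\le l\le k$) is $\bigl(\lambda_1^{p^{\,l-1}},\lambda_2^{p^{\,l-1}},\dots,\lambda_k^{p^{\,l-1}}\bigr)$. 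The all-ones first row, which replaces the row $(\lambda_1,\dots,\lambda_k)$ occurring in the preceding theorem, is exactly what produces the affine constraint $c_1=-\sum_{j\ge 2}c_j$ in condition~(ii).

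First I would dispose of the necessity of~(i): if $a_i=a_j$ for some $i\neq j$, then rows $i$ and $j$ of $V^\sigma_n(a_1,\dots,a_k;E')$ coincide for \emph{every} $E'$, so all $k\times k$ minors vanish and $M$ is not \textup{MDS}; hence $M$ \textup{MDS} forces~(i). Assuming henceforth that the $a_j$ are distinct, I would evaluate the determinant: subtracting the first column from each of the others and expanding along the first row (which is now $(1,0,\dots,0)$) leaves
\[
\det V^\sigma_n(a_1,\dots,a_k;E') \;=\; \det\Bigl(\bigl(\lambda_r-\lambda_1\bigr)^{p^{\,l-1}}\Bigr)_{2\le l\le k,\ 2\le r\le k},
\]
where I used $\lambda_r^{p^{\,l-1}}-\lambda_1^{p^{\,l-1}}=(\lambda_r-\lambda_1)^{p^{\,l-1}}$ in characteristic $p$. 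Setting $\rho_r:=(\lambda_r-\lambda_1)^{p}$ for $2\le r\le k$, the right-hand side equals $\det\bigl((\rho_r^{\,p^{j}})_{0\le j\le k-2,\ 2\le r\le k}\bigr)$, i.e.\ (up to transposition) the determinant of the linearized matrix $U(\rho_2,\dots,\rho_k)$. By Proposition~\ref{kesarwani}, this determinant is nonzero if and only if $\rho_2,\dots,\rho_k$ are $\mathbb{Z}_p$-linearly independent; and since the Frobenius $x\mapsto x^p$ is a $\mathbb{Z}_p$-linear bijection of the field, this is equivalent to $\lambda_2-\lambda_1,\dots,\lambda_k-\lambda_1$ being $\mathbb{Z}_p$-linearly independent.

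It then remains to translate this condition: a nonzero relation $\sum_{r=2}^{k}c_r(\lambda_r-\lambda_1)=0$ with $c_r\in\mathbb{Z}_p$ is, after setting $c_1:=-\sum_{r=2}^{k}c_r$, precisely a nonzero tuple $(c_1,\dots,c_k)$ with $\sum_{r=1}^{k}c_r=0$ and $\sum_{r=1}^{k}c_r\mathcal{N}^\sigma_{i_r}(\xi)=0$; thus $\lambda_2-\lambda_1,\dots,\lambda_k-\lambda_1$ are $\mathbb{Z}_p$-independent for the subset $E'$ exactly when condition~(ii) holds for $E'$. Ranging over all $E'\subseteq E$ and invoking Corollary~\ref{Corollarymain} gives the stated equivalence. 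The one place that demands care is the determinant evaluation: after the column operation the surviving $(k-1)\times(k-1)$ block carries exponents $p^{1},\dots,p^{k-1}$ rather than $p^{0},\dots,p^{k-2}$, which is why the auxiliary substitution $\rho_r=(\lambda_r-\lambda_1)^{p}$ is introduced so that Proposition~\ref{kesarwani} applies verbatim to the correct $(k-1)$-tuple, together with the standard ``$\det\neq 0\Leftrightarrow\mathbb{Z}_p$-linear independence'' reading of that proposition.
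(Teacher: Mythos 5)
Your proof is correct and follows essentially the same route as the paper's: reduce via Corollary~\ref{Corollarymain} to the $k\times k$ submatrices of $V^{\sigma}_n(a_1,\dots,a_k;E)$, subtract the first column, recognize the remaining block as a linearized (Moore) matrix, apply Proposition~\ref{kesarwani}, and translate its nonvanishing into the affine condition with $c_1=-\sum_{j\ge 2}c_j$. The only divergence is that the paper's own proof silently works with the roots $1,\xi,\xi^{p},\dots,\xi^{p^{k-2}}$ (i.e.\ $a_j=\xi^{p^{j-2}}$), so Proposition~\ref{kesarwani} applies directly to $z_{i_r}=\mathcal{N}^{\sigma}_{i_r}(\xi)-\mathcal{N}^{\sigma}_{i_1}(\xi)$, whereas you keep the statement's $a_j=\xi^{p^{j-1}}$ and absorb the shifted exponents through $\rho_r=(\lambda_r-\lambda_1)^{p}$ --- a harmless adjustment (and you additionally justify the necessity of condition (i), which the paper omits).
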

\begin{proof}
	 By setting $a_1=1, ~a_j = \xi^{p^{j-2}}$ for $j = 2,3,\dots,k$ in (\ref{EQN12}). For an arbitrary choice of 
	$k$ column indices $\{i_1, \dots, i_k\} \subset E$, we obtain
	\begin{eqnarray*}
		V^\sigma_n(a_1,a_2,\dots,a_k;E') &=& 
		\begin{pmatrix}
			\mathcal{N}^\sigma_{i_1}(1) & \mathcal{N}^\sigma_{i_2}(1)& \cdots & \mathcal{N}^\sigma_{i_k}(1) \\
			\mathcal{N}^\sigma_{i_1}(\xi) & \mathcal{N}^\sigma_{i_2}(\xi)&\cdots & \mathcal{N}^\sigma_{i_k}(\xi) \\
			\vdots &\vdots& \ddots & \vdots \\
			\mathcal{N}^\sigma_{i_1}(\xi^{p^{k-1}}) &\mathcal{N}^\sigma_{i_2}(\xi^{p^{k-2}})& \cdots & \mathcal{N}^\sigma_{i_k}(\xi^{p^{k-2}})
		\end{pmatrix}_{k\times k}.\\
		&=&	\begin{pmatrix}
		1 & 1& \cdots & 	1\\
			(\mathcal{N}^\sigma_{i_1}(\xi)) & (\mathcal{N}^\sigma_{i_2}(\xi))&\cdots & (\mathcal{N}^\sigma_{i_k}(\xi)) \\
			\vdots &\vdots& \ddots & \vdots \\
			(\mathcal{N}^\sigma_{i_1}(\xi))^{p^{k-2}} &(\mathcal{N}^\sigma_{i_2}(\xi))^{p^{k-2}}& \cdots & (\mathcal{N}^\sigma_{i_k}(\xi))^{p^{k-2}}
		\end{pmatrix}_{k\times k}.
	\end{eqnarray*}
    
	By subtracting first column from all other columns, and taking $z_{i_r}=(\mathcal{N}^\sigma_{i_r}(\xi)-\mathcal{N}^\sigma_{i_1}(\xi)),$ for $2\leq i\leq k,$ we obtain
	\begin{eqnarray*}
	V^{'\sigma}_n&=&
	\begin{pmatrix}
			1&0&0&\cdots&0\\
			(\mathcal{N}^\sigma_{i_1}(\xi))&z_{i_2}&z_{i_3}&\cdots&z_{i_k}\\
		(\mathcal{N}^\sigma_{i_1}(\xi))^{p}&z^p_{i_2}&z^p_{i_3}&\cdots&z^p_{i_k}\\
		\vdots&\vdots&\vdots&\ddots&\vdots\\
		(\mathcal{N}^\sigma_{i_1}(\xi))^{p^{k-2}}&z^{p^{k-2}}_{i_1}&z^{p^{k-2}}_{i_3}&\cdots&z^{p^{k-2}}_{i_k}\\
		\end{pmatrix}_{k\times k}.
	\end{eqnarray*}
    
	This implies that 
	\begin{eqnarray*}
		\det V^{'\sigma}_n&=&\det U(z_{i_2},z_{i_3},\dots,z_{i_k})
		=  z_{i_2} \prod_{j=2}^{k-1} \ \prod_{c_1,c_2,\ldots,c_j \in \mathbb{Z}_p} 
		\left( z_{i_{j+1}} - \sum_{l=1}^j c_l z_{i_l} \right)		\\
		&=&(\mathcal{N}^{\sigma}_{i_2}(\xi)-\mathcal{N}^{\sigma}_{i_1}(\xi))\prod_{j=2}^{k-1}\prod_{c_1,c_2,\dots,c_j\in \mathbb{Z}_p}\Bigg((\mathcal{N}^{\sigma}_{i_{j+1}}(\xi)-\mathcal{N}^{\sigma}_{i_1}(\xi))-\sum_{l=1}^{j}c_l(\mathcal{N}^{\sigma}_{i_l}(\xi)-\mathcal{N}^{\sigma}_{i_1}(\xi))\Bigg).
	\end{eqnarray*}
    
Thus, one can see that $\det U(z_{i_2},z_{i_3},\dots,z_{i_k})$ is a unit element if and only if $\sum_{l=2}{k}c_lz_{i_l}\neq 0$ for all $\{i_2,\dots,i_k\}\subset E$. Note that $\sum_{l=2}^{k}c_lz_{i_l}=\sum_{l=2}^{k}(\mathcal{N}^\sigma_{i_l}(\xi)-\mathcal{N}^\sigma_{i_1}(\xi^p))=\sum_{l=1}^{k}c_l\mathcal{N}^\sigma_{i_l}(\xi)$, where $c_1=-\sum_{l=2}^{k}c_l.$ Hence the result.

\end{proof}

\section{Conclusion}\label{Sec:Conclusion}

In this paper, we have presented a systematic method for the direct construction of Maximum Distance Separable (MDS) matrices over Galois rings by establishing a direct link between the generator polynomials of skew cyclic codes and their associated companion matrices. This work extends the rich theory of MDS matrices from finite fields to the more general and cryptographically relevant setting of Galois rings \(GR(p^{s}, p^{sm})\). Our results not only generalize prior work over finite fields but also hold practical significance for implementations over \(\mathbb{F}_{2^m}\) (i.e., when \(s=1\) and \(p=2\)), where efficient diffusion layers are crucial in lightweight cryptographic designs. Building on this foundation, future research may focus on the direction on analyzing the the cost of such MDS matrices construction, optimizing hardware implementations, and investigating applications in block ciphers and hash functions with enhanced security and efficiency.

%

\medskip
\bibliographystyle{abbrv}
\bibliography{Ref}

@inproceedings{augot2014direct,
  title={Direct construction of recursive \text{MDS}  diffusion layers using shortened \text{BCH} codes},
  author={Augot, Daniel and Finiasz, Matthieu},
  booktitle={International Workshop on Fast Software Encryption},
  pages={3--17},
  year={2014},
  organization={Springer}
}

@article{Ali2024block,
    title = {{XOR} count and block circulant \text{MDS} matrices over finite commutative rings},
    journal = {AIMS Mathematics},
    volume = {9},
    number = {11},
    pages = {30529-30547},
    year = {2024},
    issn = {2473-6988},
    doi = {10.3934/math.20241474},
    url = {https://www.aimspress.com/article/doi/10.3934/math.20241474},
    author = {Shakir Ali and Amal S. Alali and Atif Ahmad Khan and Indah Emilia Wijayanti and Kok Bin Wong}
}

@inproceedings{berger2013construction,
  title={Construction of recursive {MDS} diffusion layers from Gabidulin codes},
  author={Berger, Thierry P},
  booktitle={International Conference on Cryptology in India},
  pages={274--285},
  year={2013},
  organization={Springer}
}

@article{Bhaintwal2012Skew,
    author={Bhaintwal, Maheshanand},
    title={Skew quasi-cyclic codes over {G}alois rings},
    journal={Designs, Codes and Cryptography},
    year={2012},
    month={Jan},
    day={01},
    volume={62},
    number={1},
    pages={85-101},
    issn={1573-7586},
    doi={10.1007/s10623-011-9494-0},
    url={https://doi.org/10.1007/s10623-011-9494-0}
}

@article{BoucherSkew,
    author = {Delphine Boucher and Patrick Sol{\'e} and Felix Ulmer},
    title = {Skew constacyclic codes over {G}alois rings},
    journal = {Advances in Mathematics of Communications},
    volume = {2},
    number = {3},
    pages = {273-292},
    year = {2008},
    issn = {1930-5346},
    doi = {10.3934/amc.2008.2.273},
    url = {https://www.aimsciences.org/article/id/cb180917-9b64-4c34-9471-69db6aefd27f}
}

@article{Boulagouaz2013codes,
title = {($\sigma,\delta$)-codes},
journal = {Advances in Mathematics of Communications},
volume = {7},
number = {4},
pages = {463-474},
year = {2013},
issn = {1930-5346},
doi = {10.3934/amc.2013.7.463},
url = {https://www.aimsciences.org/article/id/fef41710-f7db-4d81-aaec-0250f1e6d307},
author = {M'Hammed Boulagouaz and Andr{\'e} Leroy}
}

@article{Cauchois2016direct,
    author       = {Victor Cauchois and
                  Pierre Loidreau and
                  Nabil Merkiche},
    title        = {Direct construction of quasi-involutory recursive-like {MDS} matrices
                  from 2-cyclic codes},
    journal      = {{IACR} Trans. Symmetric Cryptol.},
    volume       = {2016},
    number       = {2},
    pages        = {80--98},
    year         = {2016},
    url          = {https://doi.org/10.13154/tosc.v2016.i2.80-98},
    doi          = {10.13154/TOSC.V2016.I2.80-98}
}

@article{Chaussade2009,
    author={Chaussade, Lionel
    and Loidreau, Pierre
    and Ulmer, Felix},
    title={Skew codes of prescribed distance or rank},
    journal={Designs, Codes and Cryptography},
    year={2009},
    month={Mar},
    day={01},
    volume={50},
    number={3},
    pages={267-284},
    issn={1573-7586},
    doi={10.1007/s10623-008-9230-6},
    url={https://doi.org/10.1007/s10623-008-9230-6}
}

@book{Daemen2002the,
  author = {Daemen, J. and Rijmen, V.},
  title = {The design of {Rijndael}: {AES} - the {A}dvanced {E}ncryption {S}tandard},
  series = {Information Security and Cryptography},
  publisher = {Springer},
  address = {New York},
  year = {2002}
}

@article{Dong,
    title = {Matrix characterization of {MDS} linear codes over modules},
    journal = {Linear Algebra and its Applications},
    volume = {277},
    number = {1},
    pages = {57-61},
    year = {1998},
    issn = {0024-3795},
    doi = {https://doi.org/10.1016/S0024-3795(97)10073-8},
    url = {https://www.sciencedirect.com/science/article/pii/S0024379597100738},
    author = {Xue Dong Dong and Cheong Boon Son and Erry Gunawan}
}

@article{gupta2017towards,
author    = {Kishan Chand Gupta and
               Sumit Kumar Pandey and
               Ayineedi Venkateswarlu},
title     = {Towards a general construction of recursive {MDS} diffusion layers},
journal   = {Designs, Codes and Cryptography},
volume    = {82},
number    = {1-2},
pages     = {179--195},
year      = {2017}
}

@article{gupta2019almost,
  title={Almost involutory recursive {MDS} diffusion layers},
  author={Gupta, Kishan Chand and Pandey, Sumit Kumar and Venkateswarlu, Ayineedi},
  journal={Designs, Codes and Cryptography},
  volume={87},
  number={2},
  pages={609--626},
  year={2019},
  publisher={Springer}
}

@article{Guptadirect,
author    = {Kishan Chand Gupta and
           Sumit Kumar Pandey and
           Ayineedi Venkateswarlu},
title     = {On the direct construction of recursive {MDS} matrices},
journal   = {Designs, Codes and Cryptography},
volume    = {82},
number    = {1-2},
pages     = {77--94},
year      = {2017},
url       = {https://doi.org/10.1007/s10623-016-0233-4},
doi       = {10.1007/s10623-016-0233-4}
}

@misc{GuptanearMDS,
    title = {On the {D}irect {C}onstruction of {MDS} and {N}ear-{MDS} {M}atrices},
    year={2023},
    archivePrefix={arXiv},
    primaryClass={cs.IT},
    howpublished  = {},
    note =          {\url{https://arxiv.org/abs/2306.12848}},
    author = {Kishan Chand Gupta and
          Sumit Kumar Pandey and
          Susanta Samanta}
}

@article{gupta2019cryptographically,
  title={Cryptographically significant {MDS} matrices over finite fields: A brief survey and some generalized results.},
  author={Gupta, Kishan Chand and Pandey, Sumit Kumar and Ray, Indranil Ghosh and Samanta, Susanta},
  journal={Advances in Mathematics of Communications},
  volume={13},
  number={4},
  pages={779-843},
  year={2019}
}

@inproceedings{chand2013constructions,
  title={On constructions of involutory {MDS} matrices},
  author={Chand Gupta, Kishan and Ghosh Ray, Indranil},
  booktitle={International Conference on Cryptology in Africa},
  pages={43--60},
  year={2013},
  organization={Springer}
}

@inproceedings{Guo2011the,
  author = {Guo, J. and Peyrin, T. and Poschmann, A.},
  title = {The {PHOTON} Family of Lightweight Hash Functions},
  booktitle = {CRYPTO 2011},
  series = {LNCS},
  volume = {6841},
  year = {2011},
  pages = {222--239},
  publisher = {Springer},
  address = {Heidelberg}
}

@inproceedings{Guo2011TheLED,
  author = {Guo, J. and Peyrin, T. and Poschmann, A. and Robshaw, M. J. B.},
  title = {The {LED} block cipher},
  booktitle = {CHES 2011},
  series = {LNCS},
  volume = {6917},
  year = {2011},
  pages = {326--341},
  publisher = {Springer},
  address = {Heidelberg}
}

@book{Jacobson1996,
  author = {Jacobson, Nathan},
  title = {Finite Dimensional Division Algebras over Fields},
  publisher = {Springer},
  address = {New York},
  year = {1996}
}

@article{Kesarwani2021,
    author    = {Abhishek Kesarwani and
               Sumit Kumar Pandey and
               Santanu Sarkar and
               Ayineedi Venkateswarlu},
    title = {Recursive {MDS} matrices over finite commutative rings},
    journal = {Discrete Appl. Math.},
    volume = {304},
    year = {2021},
    pages = {384--396}
}

@article{lacan2004systematic,
  title={Systematic MDS erasure codes based on {V}andermonde matrices},
  author={Lacan, J{\'e}rome and Fimes, J{\'e}rome},
  journal={IEEE Communications Letters},
  volume={8},
  number={9},
  pages={570--572},
  year={2004},
  publisher={IEEE}
}

@book{McDonald,
  author = {McDonald, Bernard R.},
  title = {Finite Rings with Identity},
  publisher = {Marcel Dekker},
  address = {New York},
  year = {1974}
}

@book{MacWilliams1977the,
  author = {MacWilliams, F. J. and Sloane, N. J. A.},
  title = {The Theory of Error-Correcting Codes},
  publisher = {North Holland},
  address = {Amsterdam},
  year = {1977}
}

@article{Orespecialclass,
  author = {Ore, Oystein},
  title = {On a special class of polynomials},
  journal = {Transactions of the American Mathematical Society},
  volume = {35},
  year = {1933},
  pages = {559--584}
}

@article{sajadieh2012construction,
  title={On construction of involutory {MDS} matrices from {V}andermonde Matrices in {GF (2q)}},
  author={Sajadieh, Mahdi and Dakhilalian, Mohammad and Mala, Hamid and Omoomi, Behnaz},
  journal={Designs, Codes and Cryptography},
  volume={64},
  number={3},
  pages={287--308},
  year={2012},
  publisher={Springer}
}

@article{1949ShannonCommunication,
  author = {Shannon, C. E.},
  title = {Communication Theory of secrecy systems},
  journal = {Bell Syst. Tech. J.},
  volume = {28},
  year = {1949},
  pages = {656--715}
}

@inproceedings{Vaudenay1995on,
  author = {Vaudenay, S.},
  title = {On the need for multipermutations: cryptanalysis of {MD4} and {SAFER}},
  booktitle = {Fast Software Encryption 1994},
  series = {LNCS},
  volume = {1008},
  year = {1995},
  pages = {286--297},
  publisher = {Springer},
  address = {New York}
}

@InProceedings{Recursive_Diffusion_Layer12,
author="Wu, Shengbao
and Wang, Mingsheng
and Wu, Wenling",
editor="Knudsen, Lars R.
and Wu, Huapeng",
title="Recursive {D}iffusion {L}ayers for ({L}ightweight) {B}lock {C}iphers and {H}ash {F}unctions",
booktitle="Selected Areas in Cryptography",
year="2013",
publisher="Springer Berlin Heidelberg",
address="Berlin, Heidelberg",
pages="355--371"
}

@InProceedings{DSI,
    author="Toh, Dylan
    and Teo, Jacob
    and Khoo, Khoongming
    and Sim, Siang Meng",
    editor="Joux, Antoine
    and Nitaj, Abderrahmane
    and Rachidi, Tajjeeddine",
    title="Lightweight {MDS} {S}erial-{T}ype {M}atrices with {M}inimal {F}ixed {XOR} Count",
    booktitle="Progress in Cryptology -- AFRICACRYPT 2018",
    year="2018",
    publisher="Springer International Publishing",
    address="Cham",
    pages="51--71"
}

@InProceedings{Rec_MDS_2022,
    author="Gupta, Kishan Chand
    and Pandey, Sumit Kumar
    and Samanta, Susanta",
    editor="Batina, Lejla
    and Daemen, Joan",
    title="Construction of {R}ecursive {MDS} {M}atrices {U}sing {DLS} {M}atrices",
    booktitle="Progress in Cryptology - AFRICACRYPT 2022",
    year="2022",
    publisher="Springer Nature Switzerland",
    address="Cham",
    pages="3--27",
    isbn="978-3-031-17433-9"
}

\end{document}